\newcommand{\blind}{0}
\newtheorem{definition}{Definition}[section]
\newtheorem{assumption}{Assumption}[section]
\newtheorem{example}{Example}[section]
\newtheorem{theorem}{Theorem}[section]
\newtheorem{proposition}[theorem]{Proposition}
\newtheorem{lemma}[theorem]{Lemma}
\newtheorem{corollary}[theorem]{Corollary}
\newtheorem{remark}{Remark}
\newcommand{\indep}{\rotatebox[origin=c]{90}{$\models$}}
\DeclareMathAlphabet{\mathcalligra}{T1}{calligra}{m}{n}
\begin{document}

\def\spacingset#1{\renewcommand{\baselinestretch}%
{#1}\small\normalsize} \spacingset{1}

%%%%%%%%%%%%%%%%%%%%%%%%%%%%%%%%%%%%%%%%%%%%%%%%%%%%%%%%%%%%%%%%%%%%%%%%%%%%%%

\if0\blind 
{
  \title{\bf  Noncompliance in randomized control trials without exclusion restrictions}
  \author{Masayuki Sawada\thanks{
 I thank Edward Vytlacil, Yuichi
 Kitamura, Yusuke Narita, Mitsuru Igami, and Joseph
 Altonji for their guidance and support. I have benefited from comments particularly from Sukjin Han, Hiro
 Kasahara, Donhyuk Kim, Jinwook Kim, Yoshiaki Omori, Jeff Qiu, Pedro Sant'Anna, Kensuke
 Teshima, and seminars participants at Yale, 2018
 Kyoto Summer Workshop, Workshop on Advances in Econometrics 2019, Osaka School
 of International Public Policy, Hitotsubashi Institute of Economic
 Research, Kyoto
 Institute of Economic Research, Auburn University, Kansas State
 University, and Yokohama National University. All errors are my own.
 This paper was a dissertation chapter and previously circulated as a job market paper titled
 ``Identification and
 inference of post-treatment subgroup effects.'' This work is supported by Research Grants for Young Researchers, Hitotsubashi University.}\hspace{.2cm}\\
    Institute of Economic Research, Hitotsubashi University.\\
    m-sawada@ier.hit-u.ac.jp}
  \maketitle
} \fi

\if1\blind
{
  \bigskip
  \bigskip
  \bigskip
  \begin{center}
    {\LARGE\bf Noncompliance in randomized control trials without exclusion restrictions}
\end{center}
  \medskip
} \fi

\bigskip
\begin{abstract}
This study proposes a method to identify treatment effects without exclusion restrictions in randomized experiments with noncompliance. Exploiting a baseline survey commonly available in randomized experiments, I decompose the intention-to-treat effects conditional on the endogenous treatment status. I then identify these parameters to understand the effects of the assignment and treatment. The key assumption is that a baseline variable maintains rank orders similar to the control outcome. I also reveal that the change-in-changes strategy may work without repeated outcomes. Finally, I propose a new estimator that flexibly incorporates covariates and demonstrate its properties using two experimental studies. 
\end{abstract}

\noindent%
{\it Keywords:} Causal inference, Change-in-Changes model, Exclusion
restriction, Natural experiment, Principal stratification.
\vfill

\newpage
\spacingset{1.45} % DON'T change the spacing!

%Body of paper.  Margins in this document are roughly 0.75 inches all
%$around, letter size paper.

\section{Introduction}
Randomized assignment is a standard strategy to identify the average treatment effect under full compliance (i.e., when the assignment equals the associated treatment). By contrast, under noncompliance, an additional exclusion restriction is required to identify the associated treatment effects. Although an exclusion restriction requires that the assignment has no causal effect on the outcomes holding the treatment status fixed, this restriction may not always hold. In fact, the assignment itself may affect the outcomes for several reasons. 

%The ITT is usually not the
%primary parameter of interest. The ITT is not the effect of the
%treatment, as not everyone takes the treatment. Instead, the ITT is the
%effect of the assignment. It is a weighted mean of
%heterogeneous treatment effects by the treatment take-up behavior. The
%other parameter of interest in which
%we are frequently interested is the treatment effect for those who take
%up the treatment, called the
%average treatment effect on the treated (ATT). The ITT identifies the
%effect of treatment assignment whereas the ATT reveals the effect of
%taking the treatment for the subgroup of the treated. However, the ATT
%is not
%identified simply by randomization of treatment assignment.

For randomized control trials (RCTs), I propose a strategy to identify the effects of a binary assignment and binary treatment without imposing an exclusion restriction. Specifically, I consider the intention-to-treat (ITT) effects decomposed by the treatment status. There are two key parameters to consider. First, I consider the ITT conditional on the treated under the assignment (ITTTA). The ITTTA is the combined effect of both the assignment and the treatment. It equals the average treatment effect on the treated (ATT), when the exclusion restriction holds under one-sided noncompliance, which prohibits the units from taking up the treatment without the assignment. Second, I consider the ITT conditional on the nontreated under the assignment (ITTNA). The ITTNA is solely the assignment's own effect, often referred to as the assignment's direct effect.

To identify the heterogeneous effects of whether to employ the treatment, the baseline strategies are to either use the assignment as an instrumental variable (IV) or the conditionally ignorable assumption for the treatment. In my strategy, I require neither the exogenous treatment nor valid IVs. Instead, I develop a natural experimental approach in RCTs with their typical feature, that is, the baseline survey. I use a continuous variable from the baseline survey as a proxy for the control outcome. Specifically, I develop a rank imputation strategy with new estimators based on a rank similarity assumption that appears in \citeauthor{ChernozhukovHansen05}'s (\citeyear{ChernozhukovHansen05}) quantile IV model as well as in \citeauthor{AtheyImbens06}'s (\citeyear{AtheyImbens06}) change-in-changes (CiC) model. I apply the method to microcredit and cash transfer experimental studies and demonstrate that the direct effect of the randomized assignment for the nontreated, the ITTNA, is an attractive and important parameter for estimating the combined effect of the assignment and the treatment, the ITTTA.

%leading paragraph for the contribution
This study makes three major contributions to the literature. First, I propose a natural experimental strategy for RCTs by exploiting their typical feature, the baseline survey. Specifically, the first contribution concerns the parameter, called principal stratification by \cite{FrangakisRubin2002}, that appears when exploring for the causal effect mechanism. Causal inference with an invalid instrument for the treatment is the leading context in principal stratification. For example, \cite{DeuchertHuber2017}, who examined the Vietnam draft lottery, documented that the exclusion restriction may be violated. Furthermore, most studies in this context exploit exclusion restrictions (\citealp{ImbensAngrist1994}; \citealp{AngristImbensRubin96}) or conditional independence (\citealp{HiranoImbensRubinZhou2000}).

%I have three major contributions. First, I propose a natural experimental strategy for RCTs exploiting their typical feature of the baseline survey. Compared to \cite{HuberSchelkerStrittmatter2019} who consider a richer mediation analysis under richer models and restrictions, the proposed strategy requires fewer model structures and less restrictive assumptions to point-identify the principal strata parameters. Second, I find that the CiC strategy does not necessarily require repeated outcome measures for the point-identification. I demonstrate the CiC strategy with binary or discrete outcomes to point-identify the treatment introduced over time when there is a continuous proxy variable in the baseline. Third, I propose an estimator of the rank imputation strategy. The estimator fixlbely incorporates covariates relative to the existing implementation of \cite{AtheyImbens06}. I demonstrate its desirable finite sample performances in clustered sampling, as well as its robustness to nonlinear mean outcome functions, which invalidate the parametric implementation of \cite{AtheyImbens06}. Below, I detail the three contributions in the literatures.

%Therefore, this paper complements \cite{HuberSchelkerStrittmatter2019} to offer an option to applied researchers to trade-off the strength of the restrictions and the richness of the parameters. 

As an alternative, several studies (\citealp{ZhangRubin2003}; \citealp{Rubin2006}; \citealp{FloresFloresLagunes2013}; \citealp{MealliPacini2013}) have proposed partial identification strategies exploiting monotone treatment response arguments (\citealp{Manski1997}; \citealp{ManskiPepper2000}). In response to the above baseline strategies, the natural experimental approach has also emerged as a substitute recently. For instance, \cite{DeuchertHuberSchelker2018} used a difference-in-differences (DiD) strategy to study the Vietnam war draft's impact on political preferences. Specifically, the analysis allowed for the draft itself to have a direct impact on the preferences. Additionally, \cite{HuberSchelkerStrittmatter2019} proposed a CiC strategy for the principal stratification and mediation analysis to study the impact of job training on mental health, using a randomized job training program in the United States called JOBS II. A unique feature of \cite{HuberSchelkerStrittmatter2019} appears in their results without valid randomization. Specifically, they applied their strategy without valid randomization when examining the effect of paid maternity leave on labor income, using the Swiss Labour Force Survey.

 %However, the DiD approach has two limitations. First, the
%DiD strategy does not identify quantile effects. Second and more importantly, the DiD
%approach requires the exact repeated measures before and after the intervention. The former issue has been resolved by \cite{HuberSchelkerStrittmatter2019}, but they do not deal with the latter issue.
%I contribute to this literature by offering a unique natural experimental strategy which exploits the baseline survey, and the proposed strategy resolves the latter issue of the DiD alternative.

%The proposed strategy differs from \cite{HuberSchelkerStrittmatter2019} in a number of reasons described below.
%My rank imputation strategy exploits a continuous proxy variable in the baseline survey. 
Thus, I contribute to this stream of the literature by offering a unique natural experimental strategy that exploits the baseline survey with a continuous proxy variable. I impose restrictions on the proxy variable and the control outcome in their latent rankings, which are uniform random variables representing their underlying percentiles. I assume that the conditional distributions of the latent rankings for the proxy variable and control outcome are identical regardless of whether taking up a treatment. This restriction is called the rank similarity assumption, under which taking up a treatment may be endogenous. Although the strategy is consistent with \cite{HuberSchelkerStrittmatter2019}, my aims and imposed restrictions differ in two ways. First, \cite{HuberSchelkerStrittmatter2019} required a structural model with a common latent rank variable unifying the different potential outcomes. Although the functional form is entirely flexible, it is a strong restriction in which a single unobserved variable governs all the potential outcome values during the same period. As I define the latent ranks as reduced-form representations for each potential outcome, I do not impose the structural model or the common latent rank variable. Second, to fully uncover the parameters, \cite{HuberSchelkerStrittmatter2019} imposed latent rank restrictions not only on the control outcomes against the proxy but also on the treatment outcomes against the proxy. While the proxy variable and control outcome, which are both free of the intervention, are similar, it is challenging to relate the treatment outcomes to the proxy in general. Nonetheless, these stronger restrictions allow them to identify the mediation effects and principal stratification without randomization. Furthermore, it is difficult to disentangle the restrictions on the control and treatment outcomes because both are unified in the common latent rank, and the restrictions are imposed on the common latent rank variable. In my strategy, I do not impose any restrictions on the treated outcome, given the potential outcome model under randomization for the ITTTA and ITTNA. Instead, I only impose restrictions on the control outcomes. As these weaker restrictions prohibit us from identifying the other parameters in \cite{HuberSchelkerStrittmatter2019}, I impose fewer necessary restrictions on the potential outcomes to identify the parameters relating to the principal strata. 

Overall, compared with \cite{HuberSchelkerStrittmatter2019}, who considered a richer mediation analysis under richer models and restrictions, the proposed strategy requires fewer model structures and less restrictive assumptions to identify the principal strata parameters. Therefore, this study complements \cite{HuberSchelkerStrittmatter2019} by offering another option for applied researchers to trade off the strength of the restrictions and richness of the parameters. In general, one can use my strategy for minimal key parameters and \citeauthor{HuberSchelkerStrittmatter2019}'s (\citeyear{HuberSchelkerStrittmatter2019}) if stronger restrictions are justified to study the causal effect mechanisms in depth.

% These distictions appear in the main analysis of one-sided noncompliance, but it becomes sharper in the case of two-sided noncompliance. See further discussions in Section 3 and 6.

% I resolve the limitation in the earlier natural experimental approaches such as \cite{DeuchertHuberSchelker2018} and \cite{HuberSchelkerStrittmatter2019} which consider exact reapted outcome measures before and after the intervention. This feature is also a general contribution to rank imputation strategies. 

As a second contribution, I find that the CiC strategy does not necessarily require repeated outcome measures for point identification. I demonstrate that the CiC strategy with binary or discrete outcomes can be used to identify the treatment effect when there is a continuous proxy variable in the baseline. My approach again differs from \cite{HuberSchelkerStrittmatter2019}, who considered an identification with continuous repeated outcomes. Many rank imputation approaches involve rank similarity or a stronger version, rank invariance (\citealp{JuhnMurphyBrooks93}; \citealp{DiNardoFortinLemiux96}; \citealp{AltonjiBlank99}; \citealp{ChernozhukovHansen05}; \citealp{MachadoMata05}). An alternative copula restriction also reveals the identification of the conditional quantile treatment effects (\citealt{CallawayLiOka17}), but the rank similarity approach is dominant. More recently, \cite{Han2018} used the rank similarity assumption to identify dynamic treatment effects with nonseparable models and \cite{deChaisemartinDHaultfoeuille2017} studied the fuzzy design of CiC models under the assumption of an exclusion restriction.

Among these rank similarity-based approaches,  \citeauthor{AtheyImbens06}' (\citeyear{AtheyImbens06}) CiC strategy is a natural experimental approach. As a generalization of the DiD model, a CiC strategy considers repeated outcome measures. For example, the underlying motivation in \citeauthor{AtheyImbens06}' (\citeyear{AtheyImbens06}) CiC strategy was to examine the issue in the standard DiD model, specifically because discrete outcomes may be predicted outside their supports. However, a CiC strategy with discrete repeated outcomes does not identify the parameters without stronger restrictions. Thus, I demonstrate that a CiC strategy may work without repeated outcomes. Specifically, I reveal the process of identifying the parameters of interest with discrete outcomes predicted from a continuous proxy variable. The key observation is that I do not need to invert the distribution function of the control outcome, allowing for a discrete outcome measure for point identification. Random assignment helps to justify this strategy. Particularly, randomization creates symmetry in the control outcome and proxy variable distributions across the units with and without the intervention. This symmetry allows us to directly compare the potential outcomes observed for units without an intervention versus units with an intervention, rather than going through the unifying structural functions. One can seek a proxy variable that is not necessarily the exact repeated measure because the structural function with the common latent rank in RCTs does not serve any essential role. This finding is particularly important in the context of RCTs because repeated measures may be unavailable. For example, an outcome such as timely graduation from a school cannot be repeated; conversely, the administrative records of test scores may be applicable as the graduation proxy. In this example, while the DiD does not apply, the rank imputation strategy may allow us to identify the assignment and treatment effects.

As the third contribution, I propose an estimator of the rank imputation strategy that is valid for any type of outcome measure and can flexibly incorporate covariates that have been compared with \citeauthor{AtheyImbens06}' (\citeyear{AtheyImbens06}) implementation. I demonstrate its desirable finite sample performance in clustered sampling as well as its robustness to nonlinear mean outcome functions that invalidate \citeauthor{AtheyImbens06}' (\citeyear{AtheyImbens06}) parametric implementation. Specifically, \cite{AtheyImbens06} proposed a nonparametric estimator with discrete covariates, but the implementation with continuous covariates took a restrictive parametric form. Similarly, \cite{MellySantanglo15} suggested a semiparametric estimator with covariates, in response to the general theory used by \cite{ChernozhukovFernandezValMelly13}; however, the strategy is less appealing for a CiC model without quantile functions of repeated outcome measures. Alternatively, I propose another semiparametric estimator based on distribution regressions. In Appendices A.1 and A.2, I discuss the weak convergence of the empirical process for the proposed counterfactual estimator and the validity of the bootstrap inferences. I also adopt \citeauthor{DaveziesDHaultfoeuilleGuyonvarch18}'s (\citeyear{DaveziesDHaultfoeuilleGuyonvarch18}) results of cluster-robust weak convergence to accommodate the experimental studies with cluster dependencies. I demonstrate the estimators' finite sample performance in simulations in Appendix E.1 and compared this with \citeauthor{AtheyImbens06}' (\citeyear{AtheyImbens06}) incorporation of covariates in Appendix E.2.

%\footnote{There are advantages and disadvantages of
%quantile regression and distribution regression. While it is
%straightforward to estimate conditional quantile
%functions in quantile
%regression, a tail-trimming procedure is required to estimate the
%conditional distribution functions. For the primary purpose of the mean
%effect unconditional of the covariates, it is necessary to obtain the
%conditional
%distribution functions for the full support of the outcome of interest.}
I apply the proposed estimator in the context of an experimental approach in development economics. In particular, I use the data from \cite{CreponDevotoDufloPariente2015}, who conducted a large-scale microcredit experiment in Morocco, to estimate the ITTTA, representing the intervention's combined effects. In addition, I separately estimate the ITTNA, which is the direct effect of the assignment for the nontreated, finding that the corresponding IV point estimate is 2.2 times larger than my preferred ITTTA estimate. Nevertheless, the estimated effects are not statistically different because of the large standard errors. The signs and magnitudes of the suggested ITTTA and ITTNA estimates may imply the possibility of a small, but positive direct assignment effect, while the treatment itself should have a large positive impact. As a result, I demonstrate that the small direct effect may be heavily biased in the ATT's IV point estimate, which should equal the ITTTA under the exclusion restriction. I also apply my strategy to \citeauthor{Gertler_Martinez_Rubio-Codina_2012}'s (\citeyear{Gertler_Martinez_Rubio-Codina_2012}) two-sided noncompliance experiment that studied the long-term impact of a cash transfer policy in Mexico. In particular, I evaluate their claim that the assignment's long-term effect is mainly caused by productive animal investments (See Appendix F for the description of the latter's application).

The rest of this paper is organized as follows. Section 2 introduces the notations and parameters of interest, Section 3 lists the formal identification assumptions illustrated in cases of one-sided noncompliance, and Section 4 demonstrates the estimation procedure with uniformly valid and cluster-robust bootstrap inferences. I apply the procedure to an experimental microcredit study in Section 5, discuss the framework in the case of two-sided noncompliance models in Section 6, and present my conclusions in Section 7.

\section{Parameters of interest}

Consider a standard model of potential outcomes. Let $T \in \{0,1\}$ be a binary randomized assignment and $Y$ be an observed outcome generated out of the potential outcomes. Assignment $T$ indexes the potential outcomes $Y(1)$ and $Y(0)$, such that $Y = T Y(1) + (1 - T)Y(0)$. Assignment $T$'s average effect
\[
 E[Y(1) - Y(0)] = E[Y(1)] - E[Y(0)],
\]
is the ITT. The ITT involves any assignment effects, including the associated treatment effect that assignment $T$ enables or enhances. Although this unconditional ITT represents assignment $T$'s average effect, a study's eventual goal often regards the associated treatment rather than the assignment.

Assume that the binary assignment of $T$ introduces another binary treatment $D \in \{0,1\}$. The units, which are the experimental subjects, choose $D$ endogenously after assignment $T$. If a unit is assigned to the control group ($T = 0$), then the treatment is not available ($D = 0$). The units with assignment $T = 1$ may not comply with the associated treatment $D$. Therefore, $D$ may be either $0$ or $1$ if the unit is in the treatment group $T = 1$.
\begin{definition}\label{def:onesided}
 A design $(Y,D,T)$ satisfies the one-sided noncompliance if 
 \[
  D = 
  \begin{cases}
   0 \mbox{ or } 1 & \mbox{ if } T = 1\\
   0  & \mbox{ if } T = 0.
  \end{cases}
 \]
\end{definition}
The following is an example of a one-sided noncompliance experiment.
\begin{example}
Microcredit experiments are an example of investigations in which researchers are aware of and interested in an assignment's direct impact. Specifically, \cite{CreponDevotoDufloPariente2015} studied the effect of introducing microcredit in Morocco's rural areas. They randomly assigned villages into treatment villages $(T = 1)$ and control villages $(T = 0)$, and the microcredit treatment $D$ followed one-sided noncompliance.
\end{example}
With treatment $D$, the observed outcome is 
\[
 Y = TY(1) + (1 - T)Y(0) = T(DY(1,1) + (1 - D)Y(1,0)) + (1 - T)Y(0)
\]
where $Y(1,1)$ is the outcome given the assignment $(T = 1)$ and taking up
the treatment $(D = 1)$; $Y(1,0)$ is the outcome given the assignment
$(T = 1)$, but not taking up the treatment $(D = 0)$; and $Y(0)$ is the outcome of
the control group, with $T = 0$ and $D = 0$. The leading parameter of interest is the
following ITT conditional on $T = 1$ and $D  = 1$,
\begin{equation}
 E[Y(1) - Y(0)|T = 1, D = 1] = E[Y(1,1) - Y(0)|T = 1, D = 1].\label{eq.ITTTA}
\end{equation}
I define the left-hand side of (\ref{eq.ITTTA}) as the ITTTA, where the equality to the right-hand side of (\ref{eq.ITTTA}) holds under Definition \ref{def:onesided}. Specifically, under Definition \ref{def:onesided},
the ITTTA is the combined effect of 
the ATT,
\[
 E[Y(1,1) - Y(1,0)|T = 1, D = 1],
\]
and assignment $T$'s effect on the treated, net of treatment $D$
\[
 E[Y(1,0) - Y(0)|T = 1, D = 1].
\]
The latter parameter is often called the direct effect of the treated in the principal stratification literature (\citealp{HiranoImbensRubinZhou2000}; \citealp{FrangakisRubin2002}; \citealp{FloresFloresLagunes2013}; \citealp{MealliPacini2013}) and mediation analysis literature (\citealp{Pearl14}). Examining if the ITTTA is significant confirms that experimental assignment $T$ operates through the expected channel of treatment $D$ since intention $T$ or treatment $D$ have impacts, at least for the intended units that comply with the intention. As an analogue parameter, I consider 
\begin{equation}
 E[Y(1) - Y(0)|T = 1, D = 0] = E[Y(1,0) - Y(0)|T = 1, D = 0], \label{eq.ITTNA}
\end{equation}
and I define the left-hand side of (\ref{eq.ITTNA}) as the ITTNA, where the equality holds under Definition \ref{def:onesided}. This ITTNA, the direct effect of the nontreated, has an important policy implication because it indicates if assignment $T$ has a direct impact on the outcome without considering the treatment.
The left-hand-side definitions of the ITTTA and ITTNA represent the same parameters of interest for both one-sided and two-sided noncompliance, while their underlying principal strata differ. Nevertheless, the above interpretation remains similar (see Section 6 for details). The distinction appears when we are interested in separating the net effect of $D$ from the ITTTA and ITTNA (see Section 3.3 for one-sided noncompliance and Section 6.3 for two-sided noncompliance).

%Nevertheless, their interpretation and identification assumptions may differ. See Section 6 for their distinctions. In Section 3, I also discuss an additional assumption to identify the ATT separately from the ITTTA and the ITTNA.

In RCTs, $T$ is randomly assigned, but the treatment
 $D$ may be endogenous. For one-sided noncompliance, running a regression of $Y$ on $T$ and $D$ without covariates may produce a biased estimate in the coefficient of $D$ due to selection bias
\citep{AngristPischke09}. Namely,
\begin{align*}
 E&[Y|T = 1, D = 1] - E[Y|T = 1, D = 0] = E[Y(1,1)|T = 1,D = 1] - E[Y(1,0)|T = 1,D = 0]\\
 =& \underbrace{E[Y(1,1) - Y(1,0)|T = 1,D = 1]}_{\text{the ATT}} 
 + \underbrace{E[Y(1,0)|T=1,D = 1] - E[Y(1,0)|T=1,D = 0]}_{\text{selection bias}}.
\end{align*}

Therefore, researchers often employ random assignment $T$ as the
IV for taking up treatment $D$. Under three
conditions, (i) random assignment $T \indep Y(t)$, (ii) relevancy
$Cov(T,D) \neq 0$, and (iii) the exclusion restriction $E[Y(1,0) - Y(0)|T  = 1, D = 0] = 0$,
the IV estimator identifies the ITTTA, which equals to the ATT,
\begin{align*}
 \frac{Cov(Y,T)}{Cov(T,D)} =& \frac{E[Y(1) - Y(0)]}{P(D=1|T = 1)}\\
=& E[Y(1) - Y(0)|T = 1,D = 1] = E[Y(1,1) - Y(1,0)|T = 1, D = 1].
 %=& E[Y_1 - Y_0|T = 1,D = 1]\frac{P(D = 1|T = 1)}{P(D = 1|T = 1)}\\
 %&+ E[Y_1 - Y_0|T = 1,D = 0]\frac{P(D = 0|T = 1)}{P(D = 1|T = 1)} \\
 %=& E[Y_{11} - Y_0|T = 1,D = 1]\\
 %&+ E[Y_{10} - Y_0|T = 1,D = 0]\frac{P(D = 0|T = 1)}{P(D = 1|T = 1)}\\   %\footnote{Note that
%$\frac{Cov(Y,T)}{Cov(D,T)} = \frac{E[Y_1 - Y_0]P(T=1)P(T=0)}{P(D = 1|T =
% 1)P(T=1)P(T=0)} = \frac{E[Y_1 - Y_0]}{P(D=1|T = 1)}$.}
\end{align*}
where the last equality, based on the exclusion restriction (iii), may not hold. In particular,
assignment $T$ may affect the outcome for those who
do not take up treatment $D$.
\begin{example}%[Example of microcredit experiment, Continued] 
 \cite{CreponDevotoDufloPariente2015} listed several reasons why giving access to assignment $T$ may have an impact even for those who do not borrow $D$ from the microcredit institution:
\begin{quote}
 There are good reasons to believe that microcredit availability
 impacts not only on clients, but also on non-clients through a variety
 of channels: equilibrium effects via changes in wages or in
 competition, impacts on behavior of the mere possibility to borrow in
 the future, etc. (\citealp{CreponDevotoDufloPariente2015}, p. 124)
\end{quote}
 This possibility of the direct impact of availability not only complicates identifying the heterogeneous treatment effects but also emphasizes the importance of the direct impact for understanding the consequences of microcredit intervention.
 %\footnote{\cite{CreponDevotoDufloPariente2015} challenge
 %the estimation of the direct effect based on approach implicitly
 %relying on conditional independence of treatment take-up. For
 %further discussions, see supplementary appendix.}
 %\footnote{Readers may wonder if the possible
 %violation of SUTVA assumption may appear with the spill-over and
 %equilibrium realization of the treatment take-up $D$. In the
 %next section, I demonstrate that the typical modification of
 %considering a village cluster as a unit would work to justify a
 %modified SUTVA assumption.}
\end{example}
If the exclusion restriction is violated, then the IV estimator also becomes a biased ITTTA estimator:
\begin{align*}
 \frac{Cov(Y,T)}{Cov(T,D)} =& E[Y(1) - Y(0)|T = 1,D = 1]\frac{P(D = 1|T = 1)}{P(D = 1|T = 1)} \\
&+ E[Y(1,0) - Y(0)|T=1,D = 0]\frac{P(D = 0|T = 1)}{P(D = 1|T = 1)}.
\end{align*}
The degree of bias depends on the
take-up rate $P(D = 1|T = 1)$. If $P(D = 1|T = 1) > 0.5$, then the
magnitude of the bias is less than $E[Y(1,0) - Y(0)|T  = 1, D = 0]$, but the IV point estimate may have been inflated in absolute value through divisions of the small probability $P(D = 1|T = 1)$ when the take-up rate is low.

Researchers frequently avoid using the IV estimator and report the ITT estimate because of the former issue of bias in the IV estimator. However, the ITT does not reveal the impact that appears because of the treatment in addition to the assignment (i.e., ITTTA), nor the impact appearing solely by the assignment (i.e., ITTNA).

\section{Identification}

In this section, I consider the identification assumptions for the ITTTA and ITTNA illustrated through one-sided noncompliance designs.

\subsection{Identification assumption}
%To begin with, we need to note that (\ref{eq.ITTTA}) the ITTTA and
%(\ref{eq.ITTNA}) the ITTNA defined in the previous section are
%well-defined.

If an individualistic assignment causes a direct effect by altering the behavior of units, then the potential outcomes, trivially defined as stable unit treatment value assumption (SUTVA) \citep{Rubin80}, can be justified. If there are spillover or equilibrium effects from community-level interventions, then SUTVA may be violated for the individual observation units. To accommodate the direct effects stemming from strategic interactions, as equilibrium behavior in taking up a treatment, I weaken SUTVA to create unique equilibrium potential outcomes that may depend on community $c$ to which unit $i$ belongs.
\begin{assumption}\label{ass:WDP}
 Let $c$ be a set of observations
 $\{1,\ldots,n_c\}$, with common assignment $T_c \in \{0,1\}$. For every $c, i \in c$, and $t \in \{0,1\}$, there are unique maps for a treatment indicator $D_{i;c}(t)$ and potential outcomes $Y_{i;c}(t,d)$ with
 the corresponding $d$ values in support of $D_{i;c}(t)$. 
\end{assumption}

%\begin{proposition}[Well-defined parameters]
% Let $C$ be a probability space $(C,\Omega,P)$ with each community $c$
% consists of a set of index $\{1,\ldots, n_c\}$ representing
% observations $i$ in the community $c$.
%
% Under the assumption \ref{ass:WDP}, the expectation
% \[
%  P[\frac{1}{n_c} \sum_{i=1}^{n_c} Y_{1,D(i;c)}(i;c) - Y_{0,D(i;c)}(i;c)]
% \]
%\end{proposition}

Throughout this study, I assume that a random sample of units from the population consists of identical
units, representing either individuals or communities. For example, the ITTTA
\[
 E[Y_{i;c}(1,1) - Y_{i;c}(0)|T_c = 1, D_{i;c} = 1]
\]
is the mean comparison of potential outcomes $Y_{i;c}(1,1)$ and $Y_{i';c'}(0)$ from identical
communities $\{c,c'\}$ with $T_c = 1$ against $T_{c'} = 0$ for the same type of members $\{i,i'\}$ who would apply the treatment 
if their communities belonged to the treatment group $T_c = 1$, namely, $D_{i;c}(1) = D_{i';c'}(1) = 1$. Under Assumption \ref{ass:WDP}, the take-up and non-take-up subgroups
are stable and unique for a given draw of communities. 
Nonetheless, I do not observe these take-up or non-take-up members in these communities without assignment $T_c =
0$. Therefore, I need an additional identification restriction. Below,
I omit the index $\{i;c\}$ from the potential outcomes to simplify the notations.
%Therefore, the ITTTA
%is a valid comparison of subgroup members of the treated community $T_c =
%1$ taking the treatment $D_{i;c} = 1$ against the same type of members from an identical
%community without the assignment $T_c = 0$. 

%The primary goal of this paper is to provide an identification strategy
%with endogenous post-treatment covariates $D$ without relying on
%instruments or specific experimental designs. The key idea is the use of an additional
%variable $Y_b$ from a baseline survey as a proxy for the control outcome $Y_0$.
%The baseline survey is data collected before the intervention starts.
%Collecting a baseline survey is a common practice primarily for attaining more precise estimates or studying the subgroup effects with the baseline covariates $W$.

Hereafter, I consider RCTs with the two consecutive surveys of
baseline and intervention. Let $Y_b$ denote a variable observed in the
baseline survey to ensure that $Y_b$ works as a proxy for the control
outcome of interest $Y(0)$. Let $W$ denote a vector of other pretreatment covariates
observed in the baseline survey. I assume that the random assignment of $T$ is
successful for the potential outcomes $Y(0)$ and $Y(1)$ and the baseline
variable $Y_b$.
\begin{assumption} \label{ass:CIZ}
 \[
  (Y(t),D(t)) \indep T | W, \forall t \in \{0,1\},
 \]
 and 
 \[
  (Y_b,D(t)) \indep T | W.
 \]
\end{assumption}

This randomization assumption is sufficient to identify the following ITT
\begin{lemma} \label{lmm:cATE}
 If Assumptions \ref{ass:WDP} and \ref{ass:CIZ} hold, then
 \[
  E[Y(1) - Y(0) | W = w] = E[Y | T = 1, W = w] - E[Y | T = 0,W = w]
 \]
 for every $w \in \mathcal{W}$. 
\end{lemma}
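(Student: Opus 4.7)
The plan is to establish the two conditional-on-$T$ expectations $E[Y\mid T=1,W=w]$ and $E[Y\mid T=0,W=w]$ as the corresponding conditional means of the potential outcomes $Y(1)$ and $Y(0)$ given $W=w$, and then subtract. The argument is a standard conditional-randomization identification, adapted to the weakened SUTVA formulated in Assumption \ref{ass:WDP}.

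First I would use Assumption \ref{ass:WDP} to write the observed outcome as $Y = T\,Y(1) + (1-T)\,Y(0)$ for each unit (the uniqueness of the maps $Y_{i;c}(t,d)$ and $D_{i;c}(t)$ in the assumption ensures that this decomposition makes sense once we suppress the indices $\{i;c\}$). Substituting into the conditional expectation and splitting on the value of $T$ yields $E[Y\mid T=1,W=w] = E[Y(1)\mid T=1,W=w]$ and $E[Y\mid T=0,W=w] = E[Y(0)\mid T=0,W=w]$.

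Next I would invoke Assumption \ref{ass:CIZ}, which gives $Y(t)\indep T\mid W$ for each $t\in\{0,1\}$, to drop the conditioning on $T$: $E[Y(1)\mid T=1,W=w] = E[Y(1)\mid W=w]$ and $E[Y(0)\mid T=0,W=w] = E[Y(0)\mid W=w]$. Taking the difference produces the claim.

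There is no real obstacle here; the only minor care required is to verify that the supports of $T$ given $W=w$ include both $0$ and $1$, which is implicit in the conditional randomization assumption (otherwise the conditional expectations on the right-hand side are not well defined). Everything else is a one-line application of the decomposition of $Y$ together with the conditional independence, so the lemma serves mainly to record the benchmark identification of the conditional ITT that later sections will refine.
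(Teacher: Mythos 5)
Your argument is correct and is exactly the standard consistency-plus-conditional-independence derivation that the paper itself relies on: the paper states this lemma without an explicit proof (Appendix B only proves Theorems 3.2 and 6.2), treating it as the routine identification of the conditional ITT under randomization. Your added remark about needing both values of $T$ in the support given $W=w$ is a sensible point of care that the paper handles implicitly via Assumption \ref{ass:overlapSpp} and the randomized design.
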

\begin{remark}
Assumption \ref{ass:CIZ} is stronger than necessary for the ITT in the independence of $Y_b$ from $T$. I label this assumption as random assignment because of the restriction on $Y_b$, notwithstanding its similarity to the conditional ignorability for observational data (see Remark 2 for the role of covariates $W$).
\end{remark}

My identification strategy is to exploit certain similarities in the proxy variable
$Y_b$ and control outcome $Y(0)$. However, these two random variables, $Y_b$
and $Y(0)$, may have different distribution functions. Thus, I
restrict the rankings of $Y_b$
and $Y(0)$ to ensure that they are similar. Throughout the study, $F_{Y|W}(\cdot|w)$ represents the conditional
distribution function of the random variable $Y$, and $Q_{Y|W}(\cdot|w)$ constitutes the conditional quantile function of $Y$.
\begin{definition} \label{def:latentRank}
 $W$ is a vector of baseline covariates and random variable $U_w \sim U[0,1]$, indexed by each $W = w$, is termed
 a (conditional) latent rank variable for random variable $Y$ if 
 \[
  Y = Q_{Y|W}(U_W|W)
 \]
 where $Q_{Y|W}(u|w) = \inf\{y: F_{Y|W}(y|w) \leq u\}$ and $u \in [0,1]$.
\end{definition}
The above definition is a key departure from other rank imputation strategies. Many approaches, including \citeauthor{AtheyImbens06}' (\citeyear{AtheyImbens06}) and \citeauthor{HuberSchelkerStrittmatter2019}'s (\citeyear{HuberSchelkerStrittmatter2019}), have assumed that a structural model shares the same scalar latent rank across potential outcomes. Such a structural assumption possibly imposes an obscure restriction on the nature of the random variable, especially when the outcomes are not continuously distributed. On the one hand, it is difficult to avoid such an additional structure function with a common latent variable across outcomes without randomization, as in Assumption \ref{ass:CIZ}. This is because there is no other way to convey the control outcome information from the no intervention units to the intervention units. On the other hand, the structure itself is more than necessary to identify the principal stratification effects when the groups are randomized. I do not assume that such a common latent rank exists across potential outcomes rather that latent rank variable $U_w$ exists for each potential outcome and is a definition as opposed to an assumption. Such a variable exists whether $Y$ is finitely supported or continuous. In fact, we can always construct such a conditional latent variable $U_w$.
%These structures may be natural when the outcome and the proxy variables are repeated measures. Otherwise, these structures are not necessarily reasonable.
Let, $F_{Y|W=w}(y-) \equiv \lim_{\tilde{y} \downarrow y}F_{Y|W=w}(\tilde{y})$, for the right-continuous conditional cumulative distribution function. Moreover,
 \[
  U_w = F_{Y|W}(Y-|w) + V \cdot (F_{Y|W}(Y|w) - F_{Y|W}(Y-|w))
 \]
 where $V \sim U[0,1]$ and $V \indep (Y,W)$, resulting in a quantile function $Q_{Y|W}(\cdot|w)$ that generates the random variable $Y$ (see \citeauthor{Ruschendorf2009}'s (\citeyear{Ruschendorf2009}) Proposition 2.1 for the presence of such a latent variable).

%My identification strategy proceeds in the following three steps. First,
%we construct the observation of $W$-conditional latent ranking of the proxy variable $Y_b$, $U_{b,w}$,
%as the percentile for each observation of $Y_{b,w}$ within the subgroup
%taking the endogenous treatment take-up $D = d$ in the treatment
%group $T = 1$. Second, I assume rank similarity, in other words, the
%distribution of $U_{b,w}$ conditional on $\{D =
%d, T = 1\}$ is the same as the distribution of $U_{0,w}$ conditional on $\{D =
%d, T = 1\}$. Note the event of $Y_0$ less than or equal to $y_0$
%conditional on $W$ has the same
%probability as the event of $U_{0,w}$ less than or equal to
%$F_{Y_0|W}(y_0|w)$ conditional on $W$.%
% represents the value of the
%latent rank of $Y_0$, $U_{0,w}$, at $Y_0 = y_0$ as the fraction 
%$F_{Y_0|W}(y_0|w)$ By rank similarity, 
%the latent rank of $Y_0$ at $y_0$,
%, follows the distribution of $U_{b,w}$ conditional on $\{D =
%d, T = 1\}$. Finally, we can obtain the distribution of $Y_0$ conditional on $\{D =
%d, T = 1\}$ as the fraction of $U_{b,w}$ less than the ranking
%$F_{Y_0|W}(y_0|w)$ conditional on $\{D = d, T = 1\}$.

For the identification, I impose two restrictions on the relationship between
the proxy variable $Y_b$ and the control outcome $Y(0)$. First, I need to
determine the complete latent ranking of $Y_b$, $U_{b,w}$, over the entire support of
$[0,1]$. To achieve the requirement, I assume $Y_b$ has a
strictly increasing and continuous distribution
function.
\begin{assumption} \label{ass:UQ}
$\mathcal{W}$ is the support of $W$, and for every $w \in \mathcal{W}$, $F_{Y_b|W}(\cdot|w)$ is strictly increasing and continuous. %\footnote{This
% condition is equivalent to the conditional quantile function satisfies
%$Q_{Y|W}^b(F_{Y|W}^b(y|w)|w) = y$ for every value of $y$ on the support
% of $Y_b$ conditional on $W = w$.} 
\end{assumption}

This assumption is required to point identify the parameter of interest. Although a partial identification is possible with a finitely supported proxy variable $Y_b$, this is beyond the scope of this study, as my primary focus is point identification. 

Second, I assume that the latent rankings of $Y_b$ and $Y(0)$ are similarly associated with (endogenous) treatment $D$ given $T = 1$. The latent rankings of the same correlation with D is an example of when following rank similarity restriction holds. %However, the reverse condition is not guaranteed.
%This assumption is required for the point-identification of the
%parameter of interest. A partial-identification can be possible with a
%finitely supported proxy variable $Y_b$, but I do not discuss in this
% paper as the primary focus of this paper is to the
% point-identification.
%For the point-identification, I rely on a continuous proxy variable
%$Y_b$. Another key departure from the other CiC approaches is that I do not restrict the potential outcomes $Y(1)$
%and $Y(0)$ being continuously distributed repeated outcome measures of the same
%kinds. These outcomes of interest may be finitely supported for the
%point-identification.
%Second, I assume that the latent rankings of $Y_b$ and $Y(0)$ are similarly associated with the (endogenous) treatment $D$ given $T = %1$ such as the same correlation with $D$:
%assume the latent ranking of $Y_b$ has the same distribution
%as the latent ranking of $Y(0)$ conditional on $W = w,T = 1$ and $D = d$ for $d \in \{0,1\}$, called the rank similarity:
\begin{assumption} \label{ass:RS}
 $U_{b,w}$ is the latent ranking of $Y_b$ and $U_{0,w}$ is the
 latent ranking of $Y(0)$, as defined in Definition \ref{def:latentRank}. %Assume that
 \[
  U_{b,w} \sim U_{0,w} |W = w, T = 1,D = d
 \] 
 for each $d \in \{0,1\}, w \in \mathcal{W}_1$, where $\mathcal{W}_1$ is
 the support of $W$ conditional on $T = 1$.
\end{assumption}
\begin{remark}
The distributions of $Y_b$ and $Y(0)$ may differ arbitrarily, and $Y(1)$ and $Y(0)$ may be finitely supported for point identification. In Appendix C.6, I demonstrate that a binary $Y(0)$ may satisfy both Assumption \ref{ass:UQ} and \ref{ass:RS} with a continuous proxy $Y_b$.

%In the application in Section 5, I analyze a truncated $Y$ using a continuous part of the repeated $Y_b$ at a cost of discarding observations.

The repeated measure of $Y(0)$ in the baseline, namely $Y_b(0)$, is not always available. The discreteness of $Y(0)$ further motivates the use of a nonrepeated $Y_b$ because DiD may predict $Y(0)$ outside the allowable range.
A $Y_b$ that solely determines $Y_b(0)$ given $W$ is a good proxy because the latent rank of $Y_b$ equals that of $Y_b(0)$, which can be identically distributed with the latent rank of $Y(0)$.
As another example of a structure to determine a good proxy, an underlying unobserved determinant $U$, in addition to identical shocks ($\tilde{U}_b, \tilde{U}_0$) called \textit{slippages}, can determine both $Y_b$ and $Y(0)$ (\citealp{HeckmanSmithClements97}). In both cases, conditioning on $W$ affecting $Y(0)$ and $Y_b$ may eliminate multidimensional relations that may violate Assumption \ref{ass:RS}. Appendix C.4 displays the conditions sufficient for Assumption \ref{ass:RS} using structural models with illustrations for the microcredit example (Example C.1).

Note that the rank similarity restriction is on the reduced-form latent ranks. Structural modeling on the latent ranks helps us in justifying the restriction, but the model is not necessary. Randomization helps us in proceeding with reduced-form latent ranks because $(Y(0),Y_b(0))$ are common across treatment and control groups. %If the randomization is absent, then it is difficult to avoid a structural model connecting outcomes in different groups.
\end{remark}
In addition, I assume the following restriction for the support of the covariates $W$:
\begin{assumption} \label{ass:overlapSpp}
$\mathcal{W}_1$ is the support of $W$ conditional that $T = 1$, and
$\mathcal{W}_0$ is the support of $W$ conditional that $T = 0$. Assume that
 $\mathcal{W}_1 = \mathcal{W}_0 \equiv \mathcal{W}$.
\end{assumption} 
\begin{remark}
This assumption imposes a restriction on the population supports, and for the purpose of identification, one can justify Assumption \ref{ass:overlapSpp} with RCT data. %In practice, finite sample failure for this assumption is a challenge for estimating the identified parameter; however, in Section 4, I introduce a semiparametric estimator that should ease this challenge. Moreover, in Appendix E.2, I demonstrate its finite sample performance relative to the parametric implementation proposed by \cite{AtheyImbens06}.
\end{remark}

\subsection{Identification result}
Given the above restrictions, I present the main result. The target parameter is the distribution
$F_{Y(0)|T,D}(y|1,d)$, its quantile, and the mean computed from the
distribution.
\begin{theorem} \label{thm:IdRS}
 If Assumptions \ref{ass:WDP}, \ref{ass:CIZ}, \ref{ass:UQ}, \ref{ass:RS}, and
 \ref{ass:overlapSpp} hold, then
 \[
  F_{Y(0)|W,T,D}(y|w,1,d) =
 F_{Y_b|W,T,D}(Q_{Y_b|W,T}(\tau_{y,w}|w,1)|w,1,d), \tau_{y,w} \equiv F_{Y(0)|W,T}(y|w,0)
 \]
 for every $d \in \{0,1\}, w \in \mathcal{W}$ and $y$ in the support of $Y(0)|W=w,D=d$.
 \begin{comment}
Therefore, 
 \[
  E[Y_{1}|T=1,D=d] - E[Y_0|T=1,D=d]
 \]
 as well as the subgroup quantile difference is identified as
 \[
  Q_{Y|T,D}^1(\tau|1,d) - Q_{Y|T,D}^0(\tau|1,d)
 \]
 for each $d \in \{0,1\}$ where
 \[
  E[Y_0|T=1,D=d] = \int y dF_{Y|T,D}^0(y|1,d),
 \]
 \[
  F_{Y|T,D}^0(y|1,d) = \int F_{Y|W,T,D}^0(y|w,1,d) dF_{W|T,D}(w|1,d),
 \] 
 and
 \[
  Q_{Y|T,D}^t(\tau|1,d) = \inf\{y \in \mathcal{Y}^t :
  F_{Y|T,D}^t(y|1,d) \geq \tau\}
 \]
 for every $\tau \in [0,1]$ and each $t \in \{0,1\}$.
\end{comment}
\end{theorem}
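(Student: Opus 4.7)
The plan is to chain together three observations: (i) the latent-rank representation of Definition \ref{def:latentRank} converts probabilities about $Y(0)$ into probabilities about its uniform rank $U_{0,w}$; (ii) Assumption \ref{ass:RS} lets us replace $U_{0,w}$ by $U_{b,w}$ inside the target conditional probability; (iii) Assumption \ref{ass:UQ} lets us invert the continuous distribution of $Y_b$ to rewrite the $U_{b,w}$ event as an event on the observable $Y_b$. Randomization (Assumption \ref{ass:CIZ}) is invoked twice: once to identify the marginal $F_{Y(0)|W}$ from the control group (so that $\tau_{y,w}$ is well defined from observables), and once to pass between the unconditional and $T=1$-conditional quantiles of $Y_b$.

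Concretely, I would fix $w \in \mathcal{W}$ and $y$ in the support of $Y(0) \mid W=w, D=d$, and set $\tau_{y,w} = F_{Y(0)|W,T}(y|w,0)$; by Assumption \ref{ass:CIZ} this equals $F_{Y(0)|W}(y|w)$. The first substantive step is to verify that, for $U_{0,w}$ constructed as in the Remark following Definition \ref{def:latentRank}, the events $\{Y(0) \leq y\}$ and $\{U_{0,w} \leq \tau_{y,w}\}$ agree almost surely under the conditional law given $(W,T,D)=(w,1,d)$, so that
\[
F_{Y(0)|W,T,D}(y|w,1,d) = P(U_{0,w} \leq \tau_{y,w} \mid W=w, T=1, D=d).
\]

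I would then invoke Assumption \ref{ass:RS} to replace $U_{0,w}$ by $U_{b,w}$ inside this conditional probability. Because Assumption \ref{ass:UQ} makes $F_{Y_b|W}(\cdot|w)$ strictly increasing and continuous, the latent rank reduces to $U_{b,w} = F_{Y_b|W}(Y_b|w)$ (no slippage is needed), so $\{U_{b,w} \leq \tau_{y,w}\} = \{Y_b \leq Q_{Y_b|W}(\tau_{y,w}|w)\}$ exactly. Hence the conditional probability equals $F_{Y_b|W,T,D}(Q_{Y_b|W}(\tau_{y,w}|w) \mid w,1,d)$. A second use of Assumption \ref{ass:CIZ}, now for $Y_b$, rewrites $Q_{Y_b|W}(\tau_{y,w}|w)$ as $Q_{Y_b|W,T}(\tau_{y,w}|w,1)$, which is exactly the right-hand side of the theorem. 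Assumption \ref{ass:overlapSpp} is used implicitly to guarantee that all the conditioning events have positive probability at the chosen $w$.

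The one delicate step is the conditional event equivalence in the first display: the construction of $U_{0,w}$ uses an auxiliary uniform slippage $V \indep (Y(0),W)$, and I must ensure the symmetric difference between $\{Y(0) \leq y\}$ and $\{U_{0,w} \leq \tau_{y,w}\}$ remains null after conditioning on $(T=1, D=d)$. I would resolve this by noting that the symmetric difference lies in $\{U_{0,w} = F_{Y(0)|W}(Y(0)-|w)\}$ and that, on each atom $\{Y(0)=y'\}$, the added slippage term $V \cdot (F_{Y(0)|W}(y'|w) - F_{Y(0)|W}(y'-|w))$ still has no atom at its left endpoint after conditioning on $(W,T,D)$---this is immediate if $V$ is taken independent of $(Y(0),W,T,D)$ in the construction, which is always possible, or equivalently by stating Assumption \ref{ass:CIZ} directly at the level of the latent ranks.
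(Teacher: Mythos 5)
Your proposal follows the same three-step chain as the paper's own proof in Appendix B: rewrite $F_{Y(0)|W,T,D}(y|w,1,d)$ as a conditional probability of the latent-rank event $\{U_{0,w}\leq \tau_{y,w}\}$, swap $U_{0,w}$ for $U_{b,w}$ via Assumption \ref{ass:RS}, and invert through the continuous $F_{Y_b|W}$ using Assumptions \ref{ass:UQ} and \ref{ass:CIZ}. The only difference is that you treat the almost-sure event equivalence involving the auxiliary slippage $V$ more explicitly than the paper does, which is a refinement of the same argument rather than a different route.
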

\begin{proof}
 See Appendix B.
\end{proof}

\begin{center}
  \begin{figure}[h!]
   \includegraphics[width=0.95\textwidth]{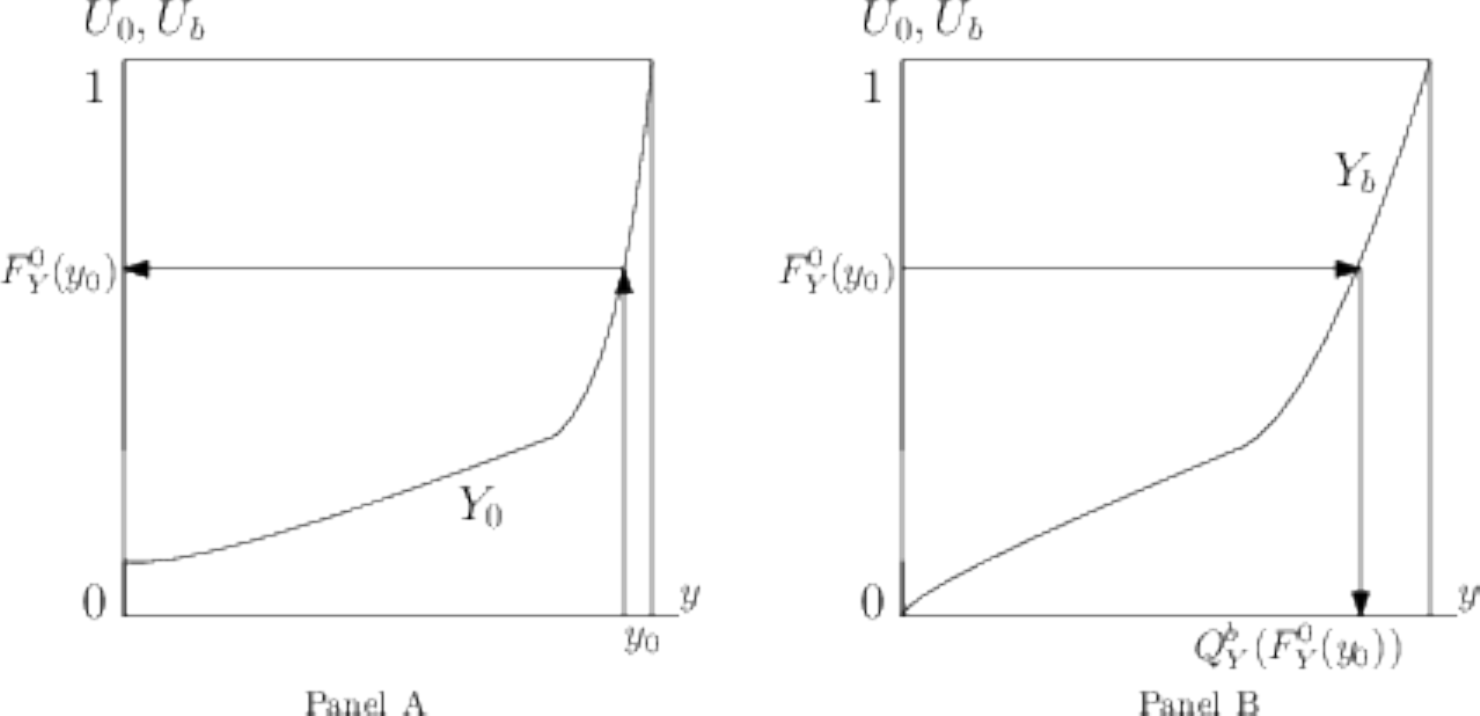}
 \caption{A graphical representation of the identification formula. In
   both panels, the $x$ axis shows the level of the outcome measures $Y(0)$ and
   $Y_b$, and the $y$ axis represents the level of the latent rank variables $U_0$ and
   $U_b$. Panel A offers the map from $y_0$ in the support of $Y(0)$
   into $F_{Y(0)}(y_0)$ in the support of $U_0$. Panel B
   presents the map from $F_{Y(0)}(y_0)$ in the support of $U_b$ into
   $Q_{Y_b}(F_{Y(0)}(y_0))$ in the support of $Y_b$.}
  \label{fig:Formula}
  \end{figure} 
\end{center}

Figure \ref{fig:Formula} summarizes the concept underlying the formula. For simplicity, we can consider the case without the covariates W. As Panel A in Figure \ref{fig:Formula} describes, the events $\{Y(0) \leq y\}$ and $\{U_0 \leq F_{Y(0)}(y_0)\}$ are equivalent almost surely for every $y_0$ in the support of $Y(0)$, $\mathcal{Y}_0$. Although $Y(0)$
may have a positive mass in the support of $Y(0)$, it does not hamper this step because we only need to evaluate $F_{Y(0)}(y_0)$ in the support of $Y(0)$. Rank
similarity equates the events of $\{U_0 \leq F_{Y(0)}(y_0)\}$ and $\{U_b
\leq F_{Y(0)}(y_0)\}$, with a conditional expectation of $\{T = 1, D =
d\}$. As Panel B in Figure \ref{fig:Formula} shows, we may find the quantile of
$Y_b$ in the support of the random variable $U_b$. Such a quantile
$Q_{Y_b}(u)$ can be found for every $u \in [0,1]$ from the continuity of
$Y_b$. Therefore, for every value of the latent rank $F_{Y(0)}(y_0)$, we
can indicate its quantile $Q_{Y_b}(F_{Y(0)}(y_0))$ in terms of $Y_b$. Then the events $\{U_b \leq
F_{Y(0)}(y_0)\}$ and $\{Y_b \leq Q_{Y_b}(F_{Y(0)}(y_0))\}$ have the same
probability conditional on $\{T = 1, D = d\}$. Thus, I equate the conditional expectation of $\{Y(0) \leq y\}$
with that of $\{Y_b \leq Q_{Y_b}(F_{Y(0)}(y_0))\}$. 

While standard CiC approaches (\citealp{AtheyImbens06}; \citealp{HuberSchelkerStrittmatter2019}) include a support condition for the latent rank variable, this condition is not presented in this study because I do not impose a structural function with a common latent rank variable across the functions. I define the latent ranks from the potential outcome as uniform random variables on $[0,1]$, and impose a direct pairwise relation in unobserved ranking variables $U_{0,w}$ and $U_{b,w}$. This direct relation is a distinction from the control function approaches (e.g., \citealp{ImbensNewey2009};
\citealp{DHaultfoeuilleFevrier2015}; \citealp{Torgovitsky2015};
\citealp{Ishihara2017})
 
%Therefore, I do
% not require an excluded instrument to recover the relation in the two
% unobservables, unlike it is the case for the control function approach.

\subsection{ATT identification with homogeneity restriction}

With the rank similarity binding the control outcome and proxy variable, I present the ITTTA and ITTNA identification process without, for example, imposing restrictions on $Y(1,0)$ and $Y(1,1)$. Although the ITTTA and ITTNA have important scientific and policy implications, I may be able to separate the ATT from the ITTTA if the following homogeneity assumption holds.
\[
 E[Y(1,0) - Y(0)|T = 1, D = 1] = E[Y(1,0) - Y(0)|T = 1, D = 0]. 
\]
%\begin{assumption}[Homogeneity of direct effect for one-sided noncompliance]
% \[
% 
% \]
%\end{assumption}
The above assumption indicates that the direct effect $Y(1,0)-Y(0)$ has the same mean for the treated and nontreated. This homogeneity assumption is related to the exclusion restriction, $E[Y(1,0) - Y(0)|T = 1, D = 1] = 0$, and is substantially weaker. One can consider the exclusion restriction as a special case, as the homogeneous direct effect is 0. With the homogeneity, the ATT, $E[Y(1,1) - Y(1,0)|T = 1, D = 1]$, is simply the difference between the ITTTA and ITTNA, $E[Y(1,1) - Y(0)|T = 1, D=1]-E[Y(1,0) - Y(0)|T = 1, D = 0]$.
This homogeneity restriction may be justified when the assignment effect is universal across units. For example, if the assignment causes an equilibrium effect on the local prices of essentials, then the homogeneity assumption may hold. By contrast, if the assignment causes a spillover effect that affects the treated and nontreated differently, then the assumption is implausible. Although this restriction is strong, I have two justifications for the above quantity. First, it is challenging to separate the ATT from the ITTTA unless scalar unobservable restrictions are imposed on both $Y(1,0)$ and $Y(1,1)$. Second, the above quantity may still operate as a back-of-the-envelope calculation of the magnitude of the ATT. For example, as the spillover of microcredit access may represent a transfer from the treated to the nontreated, the direct effect for the treated may be lower than that of the nontreated. In that case, the ATT value may be underestimated with the back-of-envelope calculation (see Section 5 for its application to the microcredit experiment and Section 6 for the complexities of this concept for two-sided cases).

%I admit that this restriction is strong. Nevertheless, I have two justifications for the above quantity. First, it is challenging to separate the ATT from the ITTTA unless we are willing to impose scalar unobservable restrictions also on $Y(1,0)$ and $Y(1,1)$. Second, the above quantity may still work as a back-of-envelope calculation for the magnitude of the ATT. For example, the spill-over of microcredit access may be a transfer from the treated to the nontreated, and therefore, the direct effect for the treated may be smaller than the direct effect for the nontreated. In that case, it gives a possibly underestimated value of the ATT. See Section 5 for its application to the microcredit experiment. In Section 6, I explore a complication in this concept for the two-sided cases.

%Note that this homogeneity assumption is a related concept of so called additivity or no interaction effect as in \cite{Holland_1988} or \cite{Robins_2003} (cf. \citealp{Imai_Tingley_Yamamoto_2013}) in particular when the concept is extended to the two-sided noncompliance case. See Section 6 for the discussion.

%[No this is for the two-sided case]This homogeneity assumption is a weaker concept of so called additivity or no interaction effect as in \cite{Holland_1988} or \cite{Robins_2003} (cf. \citealp{Imai_Tingley_Yamamoto_2013}), requiring only the mean of direct effects equal but not the 

\section{Estimation}

As demonstrated in Section 3 and Appendix C.4, it is desirable to condition the distributions on pretreatment covariates $W$ to justify the conditional rank similarity. Dimensionality becomes a serious issue when we nonparametrically estimate conditional distribution functions for each subsample $\{T=1,D=d,W=w\}$, with W containing continuous variables. 
\cite{MellySantanglo15} extended \citeauthor{AtheyImbens06}' (\citeyear{AtheyImbens06}) estimator to incorporate covariates into a semiparametric quantile regression model. However, semiparametric quantile regressions remain flexible only when they involve a smooth conditional density (see \citealt{ChernozhukovFernandezValMelly13}).
As my identification analysis now accepts $Y(0)$ as a binary or discrete random variable, the quantile regression approach may be undesirable. Therefore, I consider a distribution regression-based approach and incorporate \citeauthor{DaveziesDHaultfoeuilleGuyonvarch18}'s (\citeyear{DaveziesDHaultfoeuilleGuyonvarch18}) recent development of cluster-robust inferences for the empirical process. Finally, cluster dependency is inevitable in my leading examples and my inference strategy closely follows that of \cite{ChernozhukovFernandezValMelly13}.

%Secondly, unlike
%\cite{MellySantanglo15} who employs quantile regressions all the way,
%distribution regressions attain the unconditional counterfactual mean
%$E[Y_0|T=1,D = d]$ by less number of inversions.\footnote{The case without covariates is
%estimated in a simpler form as analogy to \cite{AtheyImbens06}. The
%similar but different asymptotic distribution is shown in the appendix.}
%Nevertheless, the unconditional quantiles $Q_{Y_0|T=1,D =
%d}(\tau)$ are also attained by inverting the unconditional distribution
%functins $F_{Y_0|T=1,D = d}(y)$.

I begin with a random sample of observations. Consider an estimation of the semiparametric conditional distribution functions. To estimate the parameter of interest, I first estimate the following distribution functions:
\[
 F_{Y_b|T,W}(\cdot|1,w),F_{Y_b|T,W}(\cdot|0,w),\{F_{Y_b|T,D,W}(\cdot|1,d,w)\}_{d \in
\{0,1\}}.
\]
Hereafter, $W$ is a transformed vector of the original pretreatment covariates, such as polynomials or
B-splines, and $K$ is the set of indexes for the subgroups, such as $\{T = 1,
D = d\}$ or $\{T = 1\}$. Following \cite{ForesiPeracchi1995} and \cite{ChernozhukovFernandezValMelly13},
I estimate the conditional distribution functions for $Y$ conditional on a
subgroup $k$ out of $K \equiv \{\{T = 1\},\{T = 0\},\{T = 1,D = 1\}, \{T = 1, D = 0\}\}$ and $W$ as
\[
 \hat{F}_{Y|W,K}(y|w,k) = \Lambda (w'\hat{\beta}^{k}(y))
\]
for some known link function $\Lambda(\cdot)$. I estimate $\hat{\beta}(y)$ as
\begin{align*}
 \hat{\beta}^{k}(y) = \arg \max_{b \in \mathbb{R}^{dW}} \sum_{i = 1}^{n}&\left\{
 \left[I\{Y_i \leq y\}\log[\Lambda(W_i'b)] \right]\right.\\
 &+ \left.\left[I\{Y_i > y\}\log[1 - \Lambda(W_i'b)] \right]\right\}I\{K_i
 = k\}
\end{align*}
for each $y \in \mathcal{Y}^k$ where $\mathcal{Y}^k$ is the support of $Y$
conditional on the subgroup $k$ and $dW$ is a dimension of $W$. 
For the main application in Section 5, I use the logit link function. Appendix D.2 shows the robustness to other link function choices such as probit link and complementary log-log link. 

\begin{comment}
As an alternative, quantile function of $Y$ for $T = 1$ may be estimated
directly. Following the quantile regression estimator of \cite{KoenkerBassett78},
\[
 \hat{Q}_{Y|W,T}^1(\tau|w,1) = w'\hat{\beta}_{1}^Q(\tau)
\]
where
\[
 \hat{\beta}^{Q}_{1}(\tau) = \arg \min_{b \in \mathbb{R}^{dW}}
 \sum_{i=1}^n (\tau - I\{Y_i \leq W_i'b\})(Y_i - W'_i b).
\]
\end{comment}

Once these estimators are obtained, the conditional counterfactual
distribution is
\[
 \hat{F}_{Y(0)|W,T,D}(y|w,1,d) =
 \hat{F}_{Y_b|W,T,D}(\hat{Q}_{Y_b|W,T}(\hat{F}_{Y(0)|W,T}(y|w,0)|w,1)|w,1,d)
\]
where
\[
 \hat{Q}_{Y_b|W,T}(\tau|w,1) = \inf\left\{y \in \mathcal{Y}^{1,w}_b : \hat{F}_{Y_b|W,T}(y|w,1)
 \geq \tau \right\},
\]
in which $\mathcal{Y}_b^{1,w}$ is the support of $Y_b$ conditional on
$\{T = 1, W = w\}$ and $\tau \in [0,1]$ is a corresponding percentile. Therefore, the unconditional distribution can be obtained by
\[
 \hat{F}_{Y(0)|T,D}(y|1,d) = n_{1,d}^{-1}\sum_{i=1}^{n} \hat{F}_{Y(0)|W,T,D}(y|W_i,1,d)I\{T_i=1,D_i=d\}
\]
where $n_{1,d} \equiv \sum_{i} I\{T_i = 1,D_i=d\}$. The mean effect of
interest is obtained as follows
\[
 \hat{\mu}_d = \frac{1}{n_{1,d}} \sum_{i:T_i=1,D_i=d} Y_i  -
 \int_{\mathcal{Y}^0} y d\hat{F}_{Y(0)|T,D}(y|1,d)
\]
and the quantile difference is obtained by inverting the distribution functions
\[
 \hat{Q}_{Y(1)|T,D}(\tau|1,d) - \hat{Q}_{Y(0)|T,D}(\tau|1,d), \forall \tau \in (0,1).
\]

For the random sample of clusters, rather than individual
observations, I estimate
\[
 \hat{F}^C_{Y|W,K}(y|w,k) = \Lambda (w'\hat{\beta}^{C,k}(y))
\]
with
\begin{align*}
 \hat{\beta}^{C,k}(y) = \arg \max_{b \in \mathbb{R}^{dW}} &\sum_{c =
 1}^{\bar{C}}\sum_{i = 1}^{N_c}\left\{ \right.
 \left[I\{Y_{i;c} \leq y\}\log[\Lambda(W_{i;c}'b)] \right]\\
 &+ \left. \left[I\{Y_{i;c} > y\}\log[1 - \Lambda(W_{i;c}'b)] \right]\right\}I\{K_{i;c}
 = k\}
\end{align*}
for each $y \in \mathcal{Y}^k$, where $\bar{C}$ is the number of 
clusters and $N_c$ represents the size of each cluster $c$.
The difference appears in the bootstrap procedure and a few modifications under the assumptions of the data-generating process. In Appendices A.1 and A.2, I show that the proposed estimators are uniformly asymptotically normal. As a result, (clustered) exchangeable bootstrap inferences are shown to be valid for generating uniform confidence intervals (see also Appendices E.1 and E.2 for this estimator's finite sample properties).

\section{Application}

%\paragraph{Question}
\subsection{Microcredit experiment in Morocco: background}

\cite{CreponDevotoDufloPariente2015} conducted a one-sided noncompliance experiment in the rural areas of Morocco, choosing target areas where the participant villagers had not experienced microcredit services before the experiment. This location choice was a novel feature of the study, and the authors estimated the relative effects of new access and no access, rather than the effect of expanding microcredit. Here, as introduced in Section 3, $T$ denotes the binary assignment (i.e., access to microcredit services) and D is the binary treatment of taking up microcredit. By construction, people in control villages, $T = 0$, automatically have $D = 0$. The administrative observation of the take-up decision validates the successful implementation of this procedure. Overall, the take-up rate is relatively low. As reported in column (1) of Table 2 in \cite{CreponDevotoDufloPariente2015}, only $16.8$\% of the treated units in the sample took a loan from the microcredit company. This number reduces to $14.3$\% with non-zero business outputs in the baseline survey.

For the present analysis, the most important feature is that \cite{CreponDevotoDufloPariente2015} conducted a detailed baseline survey before the experimental intervention. The dataset comprised the production output market values before and after the experiment. Although this baseline survey feature is common in field experiments, only a few studies allow the application of my strategy, as they do not report take-up behaviors in the baseline sample. Based on the aforementioned information, $Y_b$ denotes the baseline sales value output and $Y = T Y(1) + (1 - T)Y(0) = T(DY(1,1) + (1 - D)Y(1,0)) + (1 - T)Y(0)$ indicates the endline sales value output.

In the experiment, a local microfinance institution called Al Amana entered randomly selected villages. After Al Amana opened new branches at the beginning of the study, the authors conducted a baseline survey containing all the outcome measures of interest as the terminal outcome measures. Upon completing the baseline survey, a randomization process separated $162$ villages into $81$ pairs with similar observed characteristics; one of each pair was randomly assigned as the treatment village and the other as the control village. Al Amana agents visited the treatment villages and promoted participation in microcredit, while the control villages had no access. At the time of intervention, all the newly opened branches offered fully functional services. For the control group, we can confirm that this one-sided noncompliance feature, in the administrative report of the proportion of Al Amana clients, is zero. The main sample in \cite{CreponDevotoDufloPariente2015} is only a subsample (not the entire sample) of units that had a high borrowing probability. Although the study collected baseline survey observations for $4,465$ households, from $162$ villages, they also added the endline observations, totaling $4,934$ observations, as their analysis did not necessarily require a baseline survey structure. I use their extended sample in column (1) of Tables \ref{tbl:originalITT} and \ref{tbl:originalIV} to replicate their original estimates, but I limit the sample to the baseline survey units in the remainder of the study.

\begin{table}[tbh] 
 \caption{Intention-to-treat effects on business outcomes} 

 \label{tbl:originalITT}
 \centering
 \begin{tabular}{lcccc}
 \hline
 & (1) & (2) & (3) & (4)\\
Model & OLS & OLS & OLS & OLS\\ \cline{2-5}
Outcome & \multicolumn{2}{c}{level output} & \multicolumn{2}{c}{log output}\\ \hline
  Assignment ($T$) & $6,061^{***}$ & $5,889^*$ & $0.32^{***}$ & $0.37$\\
            & ($2,167$) & ($3,039$) & ($0.12$) & ($0.25$)\\
  \hline
  Self-employed in baseline &      & Y & Y & Y\\
  Strata dummies & Y & Y & Y & \\
  Obs       & $4,934$ & $2,453$ & $2,453$& $2,453$\\
  control mean & $30,450$ & $33,554$ & $8.71$& $8.71$\\
  %control median & 6,960 & 9,796 & 9.8829\\
  \hline
 \end{tabular}
 \vspace{0.3cm}
\begin{minipage}{325pt}
{\flushleft
 \fontsize{9pt}{9pt}\selectfont \smallskip NOTE: Standard
 errors reported in parenthesis are clustered in
 village levels. *, ** and *** indicate statistical significance of
 10\%, 5\%, 1\% sizes respectively. Units in levels are Moroccan Dirham,
 1MAD $\approx $ $0.11$ USD. Self-employed indicates that the estimates
 are for the subgroup of business owners at the baseline. %$\dagger$: inverse hyperbolic sine
 %transformation, $log(x + \sqrt{x^2 + 1})$, is applied instead of the
 %log to prevent ill-defined value for output value equals to zero.
 } %One observation with an extreme value of revenue in treatment
 %group is removed.}
\end{minipage}
\end{table}

Table \ref{tbl:originalITT} shows the ITT estimates using regression analysis to control for the covariates, including number of household members, number of adults, the household head's age, indicator variables for animal husbandry, other nonagricultural activities, outstanding loans over the past 12 months, household spouses as the survey respondents, and other household members acting as the survey respondents. These linear regression analyses are also conditioned for strata dummies (paired villages) except for column (4). 

As column (1) shows, there is a positive and significant effect of microcredit access on production output sales. As my method requires a continuous $Y_b$, I restrict the study sample to individuals who had positive sales values in the baseline survey. As emphasized earlier, $Y(0)$ and $Y(1)$ do not need to be continuous. Conditional on $Y_b > 0$, some observations with $Y = 0$ represent exit behaviors during the study period, and the mass at $Y = 0$ does not hamper the analysis.

This sample selection may change the interpretation of the effect but does not generate any bias. Columns (2) and (3) show the same estimates, except for self-employed individuals at the baseline. This procedure reduces the original sample size to approximately half. Column (2) shows the effect on the output level and column (3) shows the effect on the output log. As the output values contain zeroes, I apply the inverse hyperbolic sine transformation, $log(x + \sqrt{x^2+ 1})$, instead of the natural log. Therefore, the estimates interpret the approximated semi-elasticities for small effects. Overall, we need to convert the estimates using the hyperbolic sine formula to interpret the large coefficients, whereas the standard exponential approximation works well for this application to evaluate large means (for a detailed discussion, see \citealp{BellemareWichman2018}). Both effects are positive and precisely measured, and the remainder of the arguments are all based on the log output, as the revenue distribution is heavily right skewed with a few outliers. It appears that the issue of outliers in the output values does not matter extraordinarily. Appendix D.3 presents additional results using the output level with or without trimming the extreme output values, which are consistent with the log output findings in the main analysis.

Table \ref{tbl:originalIV} shows the original and additional IV estimates that are valid only if the direct effect for the treated is zero. Under the conventional assumption, the estimated ATT may overestimate the effect for the treated, implying that policymakers may be overly encouraged to promote microcredit services. Thus, I compare my preferred estimate for the ITTTA with these two-stage least squares (2SLS) ATT/ITTTA estimates. The 2SLS estimates in columns (2) and (3) include paired village dummies, following \citeauthor{CreponDevotoDufloPariente2015}'s (\citeyear{CreponDevotoDufloPariente2015}) original specification. For my later estimates, I do not include these paired village dummies to prevent the incidental parameters problem for nonlinear estimators. Nevertheless, the 2SLS estimate with dummies in column (4) has a similar magnitude to the estimate without dummies in column (3), indicating that strata dummies improve their precision.
\begin{table}[tbh] 
 \caption{ATT under assumption of no direct effect} 

 \label{tbl:originalIV}
 \centering
 \begin{tabular}{lcccc}
\hline
 & (1) & (2) & (3) & (4)\\
Model & 2SLS & 2SLS & 2SLS & 2SLS \\ \cline{2-5}
Outcome & \multicolumn{2}{c}{level output} & \multicolumn{2}{c}{log output}\\ \hline
  Treatment ($D$) & $36,253^{***}$ & $42,026^*$ & $2.29^{***}$ & $2.57$\\
            & ($12,494$) & ($21,525$) & ($0.85$) & ($1.71$)\\
  \hline
  Self-employed in baseline &      & Y & Y & Y\\
  Strata dummies & Y & Y & Y & \\
  Obs       & $4,934$ & $2,453$ & $2,453$  & $2,453$\\
  control mean & $30,450$ & $33,554$ & $8.71$ & $8.71$\\
  %control median & 6,960 & 9,796 & 9.8829\\
  \hline
 \end{tabular}
 \vspace{0.3cm}
\begin{minipage}{325pt}
{\flushleft
 \fontsize{9pt}{9pt}\selectfont \smallskip NOTE: Standard
 errors reported in parenthesis are clustered in
 village levels. *, ** and *** indicate statistical significance of
 10\%, 5\%, 1\% sizes respecively. Self-employed indicates that the estimates
 are for the subgroup of business owners at the baseline. 2SLS stands for the two-stage least squares method.
 } %One observation with extreme value of revenue in treatment
 %group is removed.}
\end{minipage}
\end{table}

\subsection{ITTTA and ITTNA estimations}

With the baseline outcome $Y_b$ as the proxy for the control outcome $Y(0)$,
I can directly identify the counterfactual distribution of $Y(0)$ conditional on the endogenous subgroup $\{T = 1, D = d\}$ for each $d \in \{0,1\}$ (see Section 3 for details of this procedure). 

Outcomes $Y_b$ and $Y(0)$ are the production output sales values stemming from small business activities. Based on the random assignment of credit access $T$ and the fact that microcredit was not available during the baseline period or for control villages, these two outcomes, $Y_b$ and $Y(0)$, should be similar, except for the random shocks occurring over the study's two-year period. Appendix C.4 offers detailed arguments to justify the rank similarity based on the so-called \textit{slippages} argument (\citealt{HeckmanSmithClements97}). In this study, although the rank similarity assumption does not have any testable restrictions per se, I conducted a diagnostic test supporting the rank similarity assumption in this application (see Appendix D.4 for results and other details).

%Suppose that the slippages $\tilde{U}_{b,w}$ and $\tilde{U}_{0,w}$, in the rank
%orders from the potential productivity ranking $U$
%do not depend on their counterfactual credit take-up decision $D$
%conditional on $U$. Then the rank similarity assumption might be justified. 

Table \ref{tbl:mainResults} shows the estimates of the subgroup effects
unconditional on $W$ achieved by integrating $W$ out.
Column (1) represents the ITTNA, the direct effect for the nontreated $D = 0$ with assignment $T = 1$. Column (2) addresses the ITTTA, as the combined effect of taking up treatment $D = 1$, with takers gaining access to assignment $T = 1$. In column (3), I offer a back-of-the-envelope calculation of the treatment's net effect, or the ATT, as discussed in Section 3.3.
\begin{table}[tbh]
 \centering
  \caption{Estimates with the baseline proxy $Y_b$} 

 \label{tbl:mainResults}
 \begin{tabular}{lccc}%c}
 % Method: & \multicolumn{2}{c}{Rank Similar} & 2SLS \\%& Unconditional \\
 %\hline
 \hline
 & (1) & (2) & (3)\\
Model & RS & RS & RS \\ 
Parameter & ITTNA & ITTTA & ITTTA - ITTNA \\ \cline{2-4}%& ATT \\%& ITT\\
Outcome & \multicolumn{3}{c}{log output}\\
  \hline
  Assignment ($T$) & $0.28$ & $1.03^{**}$ & $0.74^*$ \\%& 0.4569 \\
  by subgroups of ($D$)    & ($0.28$) & ($0.45$) & ($0.44$) \\%& \\
  \hline
  %exp(TE) - 1 & $0.33$ & $1.79$ \\ %& $8.92$ \\ %& 0.5792 \\
  %control mean & 8.7079 & 8.7079 & 8.7079\\
  %control median & 6,960 & 9,796 & 9.8829\\
  Self-employed in basline & Y & Y & Y\\
  Obs       & $2,453$ & $2,453$ & $2,453$ \\ %& $2453$  \\%& 2453 \\
  %\hline
  %Selection bias: & 0.034 & 0.110 &  0.133 \\
  %Selection bias: & -0.058 & -0.056 &  0.031 \\
  %predicted & & & \\
  \hline
 \end{tabular}
 \vspace{0.3cm}
\begin{minipage}{325pt}
{\flushleft
 \fontsize{9pt}{9pt}\selectfont \smallskip NOTE: Standard
 errors reported in parenthesis are generated from 300 bootstrap draws clustered in
 village levels for (1)-(3). *,**,*** indicates statistical significance of
 10\%,5\% and 1\% sizes respectively. Logit link is used for (1) and
 (2). RS stands for the rank similarity estimator proposed in this paper.
 } %2SLS in (3) uses the same set of covariates as in (1) and (2).}
\end{minipage}
\end{table} 

As shown in column (2), the ITTTA is strongly positive and significant, but its magnitude is less than $45$\% of the 2SLS estimate, which should have been similar to the ITTTA if the direct effect for the treated was zero. A cautionary note when interpreting the difference between the two point estimates for the treated is that it is not statistically significant. As the p-value of the test for this difference is $0.182$, it is not possible to conclude that the proposed estimate implies that the 2SLS estimator is biased. Indeed, this insignificance appears in column (1), as the ITTNA, representing the estimated direct effect for the nontreated, is insignificant. 
If the direct effects are homogeneous, then the magnitude and insignificance of the ITTNA may correspond to the magnitude and insignificance of the difference between the ITTTA and IV. By combining the ITTTA and ITTNA signs and magnitudes, $T$ may still have a positive direct but imprecisely estimated effect. Nevertheless, we cannot distinguish imprecisely estimated small effect against no effect.
 %\footnote{They are not precisely comparable to the estimate in (2) is the unconditional subgroup effect, while the estimate in (3) is the conditional subgroup effect from the regression analysis. Nevertheless, the unconditional ATT by integrating out the covariates produces higher magnitudes which instead reinforces the issue I raise here.}
%This direct effect estimate itself is
%similar to the magnitude of the low propensity group estimate in Table
%\ref{tbl:originalHetero} of column (3).

Column (3)'s ATT back-of-the-envelope calculation suggests a sufficiently large effect from the treatment itself. Although the validity of this estimate depends on the homogeneity of the direct effect, the direct effect may be smaller for the treated than the nontreated, as discussed in Section 3.3. Therefore, it is assuring to see the positive and marginally significant effect confirming the ATT's positivity. In Appendix D.1, I display the quantile differences of the direct effects for the nontreated, $D = 0$, (Figure 1) and the quantile difference of the combined effects for the treated, $D = 1$, (Figure 2), along with uniform $95$\% confidence intervals. The results indicate the relatively monotonous effects over the range of quantile values for the combined effect of the treated. Specifically, the quantile differences are conditional on those who had businesses at the baseline. Therefore, this result does not contradict previous findings that business revenue effects appear in the upper tail of the distributions (\citealp{AngelucciKarlanZinman15}).

Although Table \ref{tbl:mainResults} findings may alter policymakers' decisions, I must clarify two limitations of this analysis. First, the focus of this study was limited to the subsample of business owners before the experiment. Second, the conventional IV estimate and my preferred estimate differ, but not statistically. Therefore, my empirical results should be considered as additional evidence for policymakers. I encourage future researchers to collect a detailed baseline survey to evaluate the intervention of interest and further understand the nature of the treatment effect through the ITTTA and ITTNA.

%\paragraph{Data}

%\paragraph{Results}

\section{Two-sided noncompliance}

Thus far, I have focused on one-sided noncompliance cases because two-sided noncompliance designs are complicated by the dynamic nature of taking up the treatment. This complication further distinguishes my analysis from the previous literature, including \cite{HuberSchelkerStrittmatter2019}, by selecting plausible kinds of latent rank restrictions separately for each design. In two-sided noncompliance cases, the practical complication involves justifications for the rank similarity assumption for which the rationale is that the underlying latent rank of the control outcome $Y(0)$ and proxy variable $Y_b$ would be similar conditional on the endogenous subgroups of $D$ and $T$. In one-sided noncompliance cases, the similarity of $Y(0)$ and $Y_b$ is a natural concept because both variables are determined when neither $T$ nor $D$ is available. For two-sided noncompliance designs, $Y(0)$ is no longer a purely comparable outcome with $Y_b$. As the control units may have taken up treatment $D$ without assignment $T$, $Y(0)$ represents the outcomes nontreated $D = 0$ and treated $D = 1$ under $T = 0$. Below, I articulate my argument for two-sided noncompliance designs in two cases.

If $D(1)$ and $D(0)$ denote the potential choice under assignment $T$, then $D = TD(1)+(1-T)D(0)$. A two-sided noncompliance experiment is a design $(Y,D,T)$ generated from $D(t) \in \{0,1\}$ for both $t \in \{0,1\}$. The observed outcome takes the following form: 
\begin{equation}
 Y = T (D(1) Y(1,1) + (1 - D(1))Y(1,0)) + (1 - T)(D(0) Y(0,1) + (1 -
  D(0))Y(0,0)) \label{eq:twoSided}
\end{equation}
where $Y(t,d)$ represents the outcome with assignment $T = t$ and treatment $D = d$.

The main parameters stay the same as the ITTTA and ITTNA. Under the monotonicity of $D(1) \geq D(0)$ almost surely, they have similar interpretations. The ITTTA is a principal stratification effect of $E[Y(1,1) - Y(0,D(0))|T = 1, D(1) = 1]$ for the union of basic principal stratum $\{D(1)=1,D(0)=0\} \cup \{D(1)=1,D(0)=1\}$. The ITTTA is also the sum of the effect from the intended channel for the complier, $E[(1 - D(0))(Y(1,1) - Y(1,0))|T = 1, D(1) = 1]$, and the assignment's direct effect for the treated holding the same treatment status in the control, $E[Y(1,D(0)) - Y(0,D(0))|T = 1, D(1) = 1]$. Therefore, verifying the ITTTA can still determine whether $T$ has an impact, at least for those who comply with intention $T$. The ITTNA is the same direct effect $E[Y(1,0)-Y(0,0)|T=1,D(1)=0] = E[Y(1,0) - Y(0,0)|D(1) = D(0) = 0]$ by monotonicity. Nevertheless, the identification rationale may differ.

%These differenes in the strength of the imposed models appear as a variety of parameters identified in \cite{HuberSchelkerStrittmatter2019} and a relatively few parameters identified in my paper. 

\subsection{Treatment available before the randomized assignment}

%For this case, I apply the same strategy as the onesided noncompliance, assuming that the treatment takeup $D$ is \textit{stable} over time as considered in \citet[Assumption 2]{deChaisemartinDHaultfoeuille2017} .

%For the case of $D_1 \geq D_0$ almost surely, consider $d = 0$. The
%modified assumption states that the latent rankings defined for the
%subgroup of $\{T = 0, D = 0\} = \{D_0 = 0\}$ under randomization
%assumption \ref{ass:CIZ} have the same distributions conditional on
%its further subgroup of $\{D_0 = 0, D_1 = 0\}$.

Consider a two-sided noncompliance experiment where treatment $D$ may be available before the assignment of $T$. If the same treatment $D$ had been available in a previous period, but assignment $T$ encouraged taking up $D$, then the rank similarity strategy for one-sided noncompliance cases may still work out. Nevertheless, the justification for Assumption \ref{ass:RS} should be based on $Y(0,1)$ and $Y(0,0)$, instead of $Y(0)$. The key notion is the stability of taking up treatment $D$ within the units without the assignment. While $D$ was available in the baseline, no factors should affect take-up behavior $D$ in the control group $T = 0$. Therefore, the $D$ distribution should not change in the control group over time. If $D_b$ is the treatment in the baseline and $Y_{b}(0)$ and $Y_{b}(1)$ are the baseline proxies with and without $D_b$, then $Y_b \equiv Y_{b}(1)D_b + Y_{b}(0)(1 - D_b)$. With these notations, I propose the following modified rank similarity assumption.
\begin{assumption} \label{ass:RSTS}
 Suppose that Assumption \ref{ass:overlapSpp} holds for the simplicity.
 Then, $U_{b,d,w}$ is the latent ranking of $Y_{b}(d)$ for $d \in \{0,1\}$, and $U_{0,d,w}$ is the
 latent ranking of $Y(0,d)$ for $d \in \{0,1\}$, as defined in Definition \ref{def:latentRank}. This allows us to assume that
 \[
  P(U_{b,d,w} \leq u|W=w, D(1)= d', D_b = d) = P(U_{0,d,w} \leq u|W=w, D(1)= d', D(0) = d)
  %U_{b,d,w} \sim U_{0,d,w} | W=w, D(1)= d', D(0) = d)
 \] 
 for each $(d,d') \in \{0,1\}^2, w \in \mathcal{W}$.
\end{assumption}
%Note that one can construct $U_{0,d,w}$ from the distribution functions of $Y(0,d)$ as in Section 3, and the contruction yields the latent rank of $U_{0,w}$ for $Y(0)$ because the construction is linear in distribution functions 
This modified rank similarity imposes that the latent rankings do not change in $Y(0,d)$ and $Y_{b}(d)$ for those who maintained the same treatment status $D(0) = D_b = d$ without assignment $T = 0$, and who would behave similarly in $D(1)$. With this modified rank similarity, the ITTTA and ITTNA are identified when the compliance rates are \textit{stable} over time without the assignment (see \citealt{deChaisemartinDHaultfoeuille2017}, Assumption 2).
\begin{corollary} \label{cor:RSTS}
  Suppose that Assumptions \ref{ass:WDP}, \ref{ass:CIZ}, \ref{ass:UQ},
 \ref{ass:overlapSpp} and \ref{ass:RSTS} hold. Furthermore, assume that
 \[
  P(D(0) = 1|D(1) = d, W = w) = P(D_b = 1|D(1) = d, W=w)
 \]
 for each $d \in \{0,1\}$ and $w \in \mathcal{W}$, and $D(1) \geq D(0)$ almost surely. Then, we have
 \[
  F_{Y(0)|W,T,D}(y|w,1,d) = F_{Y_b|W,T,D}(Q_{Y_b|W,T}(\tau_{y,w}|w,1)|w,1,d), \tau_{y,w} \equiv F_{Y(0)|W,T}(y|w,0)
 \]
 for every $d \in \{0,1\}, w \in \mathcal{W}$ and $y$ in the support of $Y(0)|W=w,D=d$.
\end{corollary}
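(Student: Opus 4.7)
The plan is to follow the template of the proof of Theorem 3.1, adapted to handle the mixture structure $Y(0)=Y(0,D(0))$ and $Y_b=Y_b(D_b)$ intrinsic to two-sided noncompliance. First, I would invoke randomization (Assumption 3.2) to pass from conditioning on $\{T=1,D=d\}$ to $\{D(1)=d\}$ and to identify $F_{Y(0)|W,T}(y|w,0)=F_{Y(0)|W}(y|w)$ and $Q_{Y_b|W,T}(\cdot|w,1)=Q_{Y_b|W}(\cdot|w)$, so the target identity reduces to
\[
F_{Y(0)|W,D(1)}(y|w,d)=F_{Y_b|W,D(1)}(Q_{Y_b|W}(F_{Y(0)|W}(y|w)|w)|w,d).
\]

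Next, I would decompose each side by the relevant endogenous take-up variable: the LHS over $D(0)\in\{0,1\}$ using $Y(0)=Y(0,D(0))$, and the RHS over $D_b\in\{0,1\}$ using $Y_b=Y_b(D_b)$. Monotonicity $D(1)\geq D(0)$ collapses the decomposition to a single $d'=0$ term when $d=0$, and the stability condition $P(D(0)=d'|W=w,D(1)=d)=P(D_b=d'|W=w,D(1)=d)$ aligns the mixing weights on the two sides.

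For each surviving $(d,d')$, I would replicate the Theorem 3.1 argument componentwise: rewrite $\{Y(0,d')\leq y\}$ as the latent-rank event $\{U_{0,d',w}\leq F_{Y(0,d')|W}(y|w)\}$ via Definition 3.1, swap $U_{0,d',w}$ for $U_{b,d',w}$ under the matched conditioning using Assumption 3.6, and translate back to $\{Y_b(d')\leq Q_{Y_b(d')|W}(F_{Y(0,d')|W}(y|w)|w)\}$, exploiting Assumption 3.3 on $F_{Y_b|W}$ for clean quantile inversion; on $\{D_b=d'\}$ this is an event on the observed baseline outcome $Y_b$. Summing the matched components with the common weights from the previous step then yields the target identity.

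The main obstacle is this final reassembly: the component-wise rank maps $y\mapsto Q_{Y_b(d')|W}(F_{Y(0,d')|W}(y|w)|w)$ must combine with the matched stability weights into the single aggregate map $\phi_w(y)=Q_{Y_b|W}(F_{Y(0)|W}(y|w)|w)$ that appears in the claim. This would rely on first applying the preceding logic at the unconditional level (averaging $D(1)$ out) to establish that $\phi_w$ is precisely the rank-matching map between the mixtures $Y(0)$ and $Y_b$, and then invoking stability to propagate this aggregate match down to each subgroup $\{D(1)=d\}$. Monotonicity makes the $d=0$ case trivial and isolates the real bookkeeping to $d=1$, where the two-component mixture must be reassembled coherently.
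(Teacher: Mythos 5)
Your overall decomposition---split both sides by the endogenous take-up status, use monotonicity and the stability condition to match the mixture weights, and apply the componentwise rank similarity of Assumption \ref{ass:RSTS}---is the same bookkeeping the paper does. But the paper performs this decomposition entirely in rank space and never descends to the observed outcome scale component by component: its proof consists of a single lemma (Lemma C.1 in Appendix C.1) showing that the \emph{aggregate} latent ranks satisfy $U_{0,w}\sim U_{b,w}\mid T=1,D=d,W=w$, i.e., that Assumption \ref{ass:RS} holds for the mixtures $Y(0)=Y(0,1)D(0)+Y(0,0)(1-D(0))$ and $Y_b=Y_b(1)D_b+Y_b(0)(1-D_b)$, after which Theorem \ref{thm:IdRS} is invoked once, as a black box, to produce the formula with the single aggregate map $Q_{Y_b|W,T}(F_{Y(0)|W,T}(\cdot|w,0)|w,1)$. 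The decomposition in Lemma C.1 is of the rank variables themselves, $U_{0,w}=U_{0,1,w}D(0)+U_{0,0,w}(1-D(0))$ and similarly for $U_{b,w}$, so the weight-matching happens before any quantile inversion and the reassembly problem you worry about never arises.

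That reassembly problem is a genuine gap in your route, and your sketched fix does not close it. Running the Theorem \ref{thm:IdRS} argument componentwise produces, for each $d'$, an identity involving the componentwise map $y\mapsto Q_{Y_b(d')|W}(F_{Y(0,d')|W}(y|w)|w)$; summing these with matched weights gives a mixture of probabilities evaluated at \emph{different} thresholds, whereas the target evaluates every component at the common threshold $Q_{Y_b|W,T}(F_{Y(0)|W,T}(y|w,0)|w,1)$. These coincide only if the componentwise rank-matching maps all equal the aggregate one, which fails whenever the component distributions differ. Your proposed repair---establish the aggregate match unconditionally and then propagate it via stability---does not help: the unconditional identity $P(Y(0)\le y\mid W=w)=P(Y_b\le Q_{Y_b|W}(F_{Y(0)|W}(y|w)|w)\mid W=w)$ holds by construction from Assumption \ref{ass:UQ} alone and carries no information about the subgroups $\{D(1)=d\}$. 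Nor is the $d=0$ case trivial: monotonicity collapses the conditioning to a single component, but the quantile and distribution functions appearing in the target formula are still those of the full mixtures. The fix is to do what the paper does: prove the aggregate rank similarity directly at the level of the latent ranks and let Theorem \ref{thm:IdRS} handle the single translation back to the outcome scale.
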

\begin{proof}
 From Lemma C.1 in Appendix C.1, Assumption \ref{ass:RS} holds for $Y(0) = Y(0,1)D(0)  + Y(0,0)(1 - D(0))$. The proof concludes by satisfying all the conditions for Theorem \ref{thm:IdRS}.
\end{proof}

\subsection{Two-sided noncompliance, but not before the experiment}

%For example, the models in \cite{DeuchertHuberSchelker2018} and \cite{HuberSchelkerStrittmatter2019} can be seen as $T$ and $D$ are not available in the baseline. 
Next, I consider treatment $D$ introduced after assignment $T$. In this case, it is difficult to justify the rank similarity in an analogous manner. While $Y(0)$ now contains the outcome of taking up treatment $D$, $Y_b$ cannot have this component in the baseline. Thus, I consider a modification to the rank similarity to deal with this difficulty. The weakened rank similarity assumption imposes a restriction only for the nontreated, solely restricting on the latent ranks of $Y(0,0)$ and $Y_b$ given $D = 0$ and $T = 0$.
%Formally speaking, I modify the rank similarity assumption to the following:
\begin{assumption} \label{ass:RSTS2}
  Suppose that Assumptions \ref{ass:WDP}, \ref{ass:CIZ} and \ref{ass:overlapSpp}, $D(1) \geq D(0)$ almost surely, and for every $w \in
 \mathcal{W}$, $(U_{b,0,w},U_{0,0,w})$ represents the latent rankings of $Y_b$ and $Y(0,0)$ conditional on $T = 0, D
 = 0, W = w$, then we can assume that
 \[
  %\tilde{U}_{b,0,w}|D(1) = 0, D(0) = 0, W = w \sim \tilde{U}_{0,0,w}|D(1) = 0, D(0) = 0, W = w.
  U_{b,0,w} \sim U_{0,0,w}|D(1) = 0, D(0) = 0, W = w.
 \]
\end{assumption} 
Given this weaker restriction, I show the ITTNA identification process, but not for the ITTTA.
\begin{theorem} \label{thm:twosidedID} 
 Suppose that Assumptions \ref{ass:RSTS2} and \ref{ass:UQ} hold, with the distribution of $Y_b$ being conditional on $\{T = 0, D = 0, W = w\}$ for every $w \in \mathcal{W}$. Then,
 \begin{align*}
  %&F_{Y(0)|D(1)}(y|0)\\
 &F_{Y(0)|D,T}(y|0,1)\\ 
 &= \int
 F_{Y_b|D,T,W}(Q_{Y_b|D,T,W}(F_{Y(0,0)|D ,T,W}(y|0,0,w)|0,0,w)|0,1,w)dF_{W|D,T}(w|0,1)  
 \end{align*}
and therefore, we identify $E[Y(1) - Y(0)|D = 0, T = 1] = E[Y(1,0) - Y(0,0)|D(1)=0].$
\end{theorem}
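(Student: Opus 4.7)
The plan is to identify $F_{Y(0)\mid D,T}(y\mid 0,1)$ through three moves: a monotonicity reduction of $Y(0)$ to $Y(0,0)$ on $\{D=0,T=1\}$, a randomization shift between $T=1$ and $T=0$, and a rank-similarity swap that carries $Y(0,0)$ to the observable baseline proxy $Y_b$.

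First, by monotonicity $D(1)\ge D(0)$ almost surely, $\{D=0,T=1\}$ equals $\{D(1)=0,T=1\}$ and any such unit satisfies $D(0)=0$, so $Y(0)=(1-D(0))Y(0,0)+D(0)Y(0,1)=Y(0,0)$ there. Conditioning on $W$ and using Assumption \ref{ass:CIZ} (which yields $(Y(0,0),D(1))\indep T\mid W$), the inner probability $P(Y(0,0)\le y\mid D(1)=0,T=1,W=w)$ equals $P(Y(0,0)\le y\mid D(1)=0,T=0,W=w)$, while the integrating measure remains the observable $F_{W\mid D,T}(w\mid 0,1)$. Under $T=0$, monotonicity forces $\{D(1)=0\}\subseteq\{D(0)=0\}=\{D=0\}$, so this probability is equivalently one on the never-taker slice $\{D(1)=0,T=0,D=0,W=w\}$.

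Second, inside the subpopulation $\{T=0,D=0,W=w\}$, where $U_{0,0,w}$ is the latent rank of $Y(0,0)$, Definition \ref{def:latentRank} gives the almost-sure event identity $\{Y(0,0)\le y\}=\{U_{0,0,w}\le F_{Y(0,0)\mid D,T,W}(y\mid 0,0,w)\}$, which survives further conditioning on the positive-probability never-taker event. Assumption \ref{ass:RSTS2} then replaces $U_{0,0,w}$ by $U_{b,0,w}$ in the conditional probability, and Assumption \ref{ass:UQ}---applied to $Y_b\mid T=0,D=0,W=w$---makes $F_{Y_b\mid D,T,W}(\cdot\mid 0,0,w)$ strictly increasing and continuous, so $\{U_{b,0,w}\le u\}=\{Y_b\le Q_{Y_b\mid D,T,W}(u\mid 0,0,w)\}$ almost surely. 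Setting $u=F_{Y(0,0)\mid D,T,W}(y\mid 0,0,w)$ turns the integrand into a conditional probability for $Y_b$ on the never-taker slice.

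Third, a second application of Assumption \ref{ass:CIZ}---now $(Y_b,D(1))\indep T\mid W$---moves the $T=0$ conditioning to $T=1$, where $D=D(1)$, so the integrand collapses to the observable $F_{Y_b\mid D,T,W}(Q_{Y_b\mid D,T,W}(F_{Y(0,0)\mid D,T,W}(y\mid 0,0,w)\mid 0,0,w)\mid 0,1,w)$; integrating against $F_{W\mid D,T}(\cdot\mid 0,1)$ yields the display. For the mean, $E[Y\mid D=0,T=1]$ is observed directly, $E[Y(0)\mid D=0,T=1]$ is the first moment of the identified distribution, and their difference equals $E[Y(1,0)-Y(0,0)\mid D(1)=0]$ since $Y(1)=Y(1,0)$ on $\{D(1)=0\}$ by monotonicity and the $T=1$ conditioning drops by randomization applied to $(Y(1,0),Y(0,0),D(1))$. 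The main subtlety is tracking which probability space each rank variable is uniform on: $U_{0,0,w}$ is only uniform within $\{T=0,D=0,W=w\}$, and Assumption \ref{ass:RSTS2} concerns its typically non-uniform restriction to the never-taker subset, so one must verify that the event equality from Definition \ref{def:latentRank} is preserved under that further conditioning and that Assumption \ref{ass:UQ} is used precisely where the $Y_b$ inversion must hold with equality rather than a one-sided inclusion.
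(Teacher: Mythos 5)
Your proof is correct and takes essentially the same route as the paper's: monotonicity reduces $Y(0)$ to $Y(0,0)$ on the never-taker stratum $\{D(1)=0,D(0)=0\}$, the latent-rank event identity plus Assumption \ref{ass:RSTS2} swaps $U_{0,0,w}$ for $U_{b,0,w}$, and Assumption \ref{ass:UQ} inverts back to the level of $Y_b$, with randomization shuttling the conditioning between $T=1$ and $T=0$. You are merely more explicit than the paper about the covariate conditioning, the two randomization steps, and the support/uniformity subtleties that the paper suppresses by ``omitting $T$ and $W$.''
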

\begin{proof} 
 See Appendix B.
\end{proof}
%\begin{remark}
% Note that the monotonicity is sufficient but not necessary for the
% identification of the subgroup effects. I can weaken the monotonicity assumption
% into the following assumptions
% \[
%  P(Y_b \leq y|D_1 = 0) = P(Y_b \leq y|D_1 = 0,D_0 = 0)
% \]
% and
% \[
%  P(Y_{0} \leq y|D_1 = 0) = P(Y_{00} \leq y|D_1 = 0,D_0 = 0).
% \]
% See these statements are in fact necessary conditions for the
% monotonicity in the appendix lemma \ref{lmm:MonoFact}.
%
% These statements may be useful for treatments which has not been
% available in the baseline. Monotonicity requires that there is no defier. On the other hand, the
% above assumptions allow defiers but requires the defier and the
% never-taker have the same marginal distribution of $Y_b$ and
% $Y_{0}$. One possibility is the differences in defier and never-taker
% appear just because of random shocks which are independent of
% outcomes. Still the statement for $Y_0$ may be hard to justify in a
% general context. In the later section of differential attrition
 %example, I demonstrate that these alternatiev representation may be
 %useful alternative to the monotonicity.
%\end{remark}

The ITTTA may still be identified from the total law of expectation using the ITT estimate. In other words, $ITTTA = ITT - ITTNA \times \frac{P(D=0|T = 1)}{P(D=1|T = 1)}$. Nevertheless, it is limited in its precision when it must be divided by the small probability of $P(D = 1|T = 1)$ if the take-up rate is low. In addition, the ITTNA can be the only parameter of interest in some two-sided noncompliance designs. In Appendix C.4, I provide an example of the selective attrition problem, which is illustrated with \cite{vandenBergVikstrom2014}, to show the usefulness of Theorem \ref{thm:twosidedID} on its own.

\subsection{Identifying other parameters}
The limited identifiability in Theorem \ref{thm:twosidedID} derives from the restriction's weakness, which is typical of a two-sided noncompliance design. Further, the proposed rank similarity approach itself cannot identify the other parameters studied in \cite{HuberSchelkerStrittmatter2019}, such as the net (indirect) effects of $D$ for the compliers or other principal strata. The critical difference lies in their restrictions on the model. As with other CiC models, \cite{HuberSchelkerStrittmatter2019} assumed that the potential outcomes $Y(t,d)$ share a common latent rank variable across $(t,d) \in \{0,1\}^2$. Given that the same single unobserved scalar governs all the outcomes with and without assignment $T$ within a strong model, the restrictions on the common latent variable allow them to identify all the mediation parameters. \cite{HuberSchelkerStrittmatter2019} successfully found the mechanical latent rank conditions needed for each mediating effect, but the rationales for each latent rank restriction is not provided. Specifically, for operations involving the treated outcomes $Y(1)$, it is challenging to justify the associated latent rank restrictions. Further, it is difficult to weaken \cite{HuberSchelkerStrittmatter2019} restrictions because the latent rank restrictions are not separated from the model.

In this study, I do not impose such a model, but the potential outcome structure. With only the potential outcome structures, I explicitly declare which latent rank restrictions could be plausible under what kinds of designs, thereby offering a conservative approach to identifying the key principal stratification parameters. Nevertheless, I consider the net effects of treatment $D$, which is the ATT, with a homogeneity assumption rather than the latent rank restrictions for $Y(1,0)$ or $Y(1,1)$, as shown in Section 3.3. A similar procedure is possible in two-sided noncompliance designs. In Appendix C.2, I describe the identification process of the following \textit{net} effect of $D$ for the complier, similar to the local average treatment effect (LATE):
\begin{align}
 E[Y(0,1) - Y(0,0)|D(1) > D(0)] = (ITTTA - ITTNA)\frac{E[D(1)]}{E[D(1)] - E[D(0)]}\label{eq:LATE}
 %E&[Y(1,1) - Y(1,0)|T = 1, D(1) > D(0)] = (ITTTA - ITTNA)\frac{E[D(1)]}{E[D(1)] - E[D(0)]} \label{eq:LATE}\\
 %E&[Y(0,1) - Y(0,0)|T = 1, D(1) = 1] = ITTTA - \left(\frac{E[D(0)]^2}{E[D(1)]}\right)ITTNA \label{eq:LATE}
\end{align}
under a homogeneity assumption, $E[Y(1,1)-Y(0,1)|D(1) = 1] = E[Y(1,0) - Y(0,0)|D(1) = 0]$, that is stronger than that discussed in Section 3.3. The formula follows because $ITTNA = E[Y(1,0) - Y(0,D(0))|D(1) = 0] = E[Y(1,0) - Y(0,0)|D(1) = 0]$ when $D(1) \geq D(0)$ almost surely. As in Section 3.3, the above net effect may still work as a back-of-the-envelope calculation when the homogeneity assumption is violated. (see Appendix F for the empirical application of two-sided noncompliance and homogeneity assumption justifications).

%which shares a similarity to the local average treatment effect (LATE). This LATE parameter is identified with a homogeneity assumption as in Section 3. 

%This LATE parameter is related to the indirect effect of $T$ through $D$ in mediation analysis. \cite{HuberSchelkerStrittmatter2019} show point-identifications of a variety of other parameters including the direct and the indirect effects of the compliers using the rank similarity approach. 

%In the latex source, near the top of the file the command
%\verb+\newcommand{\blind}{1}+ can be used to hide the authors and
%acknowledgements, producing the required blinded version.
%Remember that in the blind version, you should not identify authors
%indirectly in the text.  That is, don't say ``In Smith et. al.  (2009) we
%showed that ...''.  Instead, say ``Smith et. al. (2009) showed that ...''.
%\item For more about ASA\ style, please see \url{http://journals.taylorandfrancis.com/amstat/asa-style-guide/}
%\item If you have supplementary material (e.g., software, data, technical
%proofs), identify them in the section below.  In early stages of the
%submission process, you may be unsure what to include as supplementary
%material.  Don't worry---this is something that can be worked out at later stages.

\section{Conclusions}
\label{sec:conc}

This study presents a method that does not require any additional instruments, treatment exogeneity, or specific experimental designs to identify treatment effect heterogeneity across endogenous decisions in randomized experiments. In contrast to the existing literature, I used a variable from a baseline survey to proxy for the control outcomes. I then offered a procedure to directly identify the ITTs conditional on endogenous strata using proxy variables from the baseline survey. My chosen method produced three distinct novelties. First, exploiting the baseline survey—a typical feature of experimental studies—the relatively conservative natural experimental approach in RCTs offers fewer parameters, which are identified through substantially fewer model structures and weaker restrictions, than in the recent work of \cite{HuberSchelkerStrittmatter2019}. Second, I only required the continuous variable to act as a proxy variable, not as the outcome of interest. Unlike in standard CiC models, I expanded the considerations of the proxy variable, not necessarily representing a repeated outcome measure, and this flexibility allowed for identification with discrete outcomes. This feature is critical, as the exact repeated outcome measure may not be available in a baseline survey. Therefore, this method allows us to apply a CiC strategy to nonrepeatable binary outcome measures, such as survival events and degree attainment. As the DiD strategy is not available without identically comparable repeated outcomes, this feature has added value to the CiC strategies themselves. Third, I proposed estimators that have several desirable properties, including flexible covariates, outcome specifications, and clustered sampling robustness.

Under the exclusion restriction requiring that the assignment's direct effect be 0, the ITT conditional on the treated, which I termed the ITTTA, must equal the ATT. In the microcredit application, I found that the conventional IV estimate for the ITTTA/ATT is 2.2 times larger than that of my preferred estimate, while the estimates are not statistically different. As the estimates are not distinguishable due to large standard errors, my empirical result does not offer a conclusive statement on the validity of the conventional estimate. In addition, my comparison may not be fair because the focus of my procedure was limited to a subsample of business owners. Nevertheless, this study highlights the importance of directly identifying treatment effect heterogeneity. Overall, I provide an option for applied researchers to further understand treatment effects and accordingly offer better policy guidelines.

By allowing nonrepeated measures within CiC strategies, additional research questions arise for the identification of treatment effects. For example, the need for a continuous proxy is a major limitation of the proposed procedure, emerging as the limited interpretability of the results in this study. Overall, continuous baseline variables are not always available. Thus, identification without a continuous proxy is an important future research topic. The estimation procedure followed a specific semiparametric distribution regression to tackle the possible high dimensionality. However, the proposed procedure may have resulted in a small sample problem when many discrete covariates were present. Testing rank similarity restrictions is also an important future research direction. In Appendix D.4, I implemented a version of \citeauthor{DongShen18}' (\citeyear{DongShen18}) means test for rank similarity through baseline covariates. Although the diagnostic test is successful, more sophisticated rank similarity tests that require additional data and restrictions are desirable in this context. In particular, for one-sided noncompliance experiments, the endogenous taking up of a treatment will not be observed in a control group and this produces tougher challenges for testing rank similarity. Overall, undertaking these identification and estimation issues remains an important topic for future research.

\newpage

\appendix

\begin{centering}
 \Large
 \textbf{Supplementary Appendix}
\normalsize
\end{centering}

\tableofcontents

\section{Asymptotics for the proposed estimators}
\subsection{For random sample of individual observations.}

I first assume the data generating process satisfies the following restrictions:
\begin{assumption}[DGP] \label{ass:condDGP}
 The sample $\{Y_i,Y_{b,i},D_i,W_i,T_i\}_{i=1}^{n}$ is an iid draw from the
 probability law $P$ over the support $\{\mathcal{Y} \times \mathcal{Y}_b \times \{0,1\}
 \times \mathcal{W} \times \{0,1\}\}$. Let $\mathcal{Y}^{0}$ be a support of $Y$ conditional on $T = 0$, and
 let $\mathcal{Y}^{1,d}_b$ and $\mathcal{W}_{d}$ be supports of $Y_b$ and
 $W$ conditional on $T = 1$ and $D = d$ for each $d \in
 \{0,1\}$. 

 Assume the following
 \begin{enumerate}
  \item $\mathcal{Y}^0 \times
 \mathcal{W}$ and $\mathcal{Y}^{1,d}_b \times \mathcal{W}_d$ are compact
 subsets of $\mathbb{R}^{1+d_w}$ for each $d \in
 \{0,1\}$.
  \item If $Y(0)$ is absolutely continuous with respect to the
 Lebesgue measure, then suppose the conditional density
 $f_{Y|W,T}(y_0|w,0)$ is uniformly bounded and
 uniformly continuous in $(y_0,w) \in \mathcal{Y}^0 \times
 \mathcal{W}$.
  \item $f_{Y_b|W,T}(y_b|w,1)$ and
 $f_{Y_b|W,T,D}(y_b|w,1,d)$ are uniformly bounded, and uniformly continuous
 in  and $(y_b,w) \in \mathcal{Y}_b \times \mathcal{W}$ for each $d \in
 \{0,1\}$.
  \item $\frac{n_{1,d}}{n}
 \equiv \frac{1}{n}\sum_{i} I\{T_i = 1,D_i=d\} \rightarrow^{p}
 \alpha_{1,d} \equiv Pr(T_i = 1,D_i = d) >
 0$ for each $d \in \{0,1\}$, and $\frac{n_0}{n} =
 \frac{1}{n}\sum_i I\{T_i = 0\} \rightarrow^p \alpha_0 \equiv Pr(T_i = 0) > 0$.
 \end{enumerate}
\end{assumption}

I also assume that the conditional distribution functions have the
following semiparametric forms
\begin{assumption}[Distribution Regression] \label{ass:QRFR}
 Suppose we have
 \[
  F_{Y|W,T}(y|w,0) = \Lambda(w'\beta^0(y)),
 \]
 for some link function $\Lambda(\cdot)$ for all $y, w$. For this
 specification, assume that the minimal eigenvalue of 
 \[
  J_{0}(y) \equiv E\left[\frac{\lambda(W'\beta^{0}(y))^2}{\Lambda(W'\beta^0(y)) [1 -
 \Lambda(W'\beta^0(y))]}W W' \right]
 \]
 is bounded away from zero uniformly over $y$, where $\lambda$ is the
 derivative of $\Lambda$. Assume also that the analogous restriction holds for
 \[
  F_{Y_b|W,T}(y|w,1) = \Lambda(w'\beta^{1}(y))
 \]
 and 
 \[
  F_{Y_b|W,T,D}(y|w,1,d) = \Lambda(w'\beta^{1,d}(y))
 \]
 for each $d \in \{0,1\}$.

 Assume further that $E\|W\|^{2} < \infty$.
\end{assumption}
This is a standard regularity condition for distribution regression
models (\citealp{ChernozhukovFernandezValMelly13}).

%\begin{comment}
%For an alternative procedure with quantile regression, following
%regulatory assumptions are required
%\begin{assumption}[Quantile Regression] \label{ass:QR}
% Suppose 
% \[
%  Q_{Y_0|W,K}(\tau|w,0) = w'\beta^0(\tau),
% \]
% for all $\tau, w$. Assume also that the minimal eigenvalue of 
% $J^Q_{0}(\tau) \equiv E[f_{Y_0|W,T}(W'\beta^0(\tau)|W,0)WW']$ is
% bounded away from zero uniformly
% over $\tau$. Assume further that $E\|W\|^{2+\epsilon} < \infty$ for
% some $\epsilon > 0$.
%\end{assumption}
%\end{comment}

Under these assumptions, these conditional distribution functions weakly converge jointly. Let
 \begin{align*}
  \hat{G}^{1,d}(y_{b}^{1,d},w) =&
  \sqrt{n}\left(\hat{F}_{Y_{b}|W,T,D}(y_{b}^{1,d}|w,1,d) -
  F_{Y_b|W,T,D}(y_{b}^{1,d}|w,1,d)\right), \forall y_{b}^{1,d} \in \mathcal{Y}^{w,\{1,d\}}_{b}\\
  \hat{G}^1(y_b^1,w) =& \sqrt{n}\left(\hat{F}_{Y_b|W,T}(y_b^1|w,1) -
  F_{Y_b|W,T}(y_b^1|w,1)\right), \forall y_b^1 \in \mathcal{Y}^{w,1}_b \\
  \hat{G}^{0}(y_0,w) =& \sqrt{n}\left(\hat{F}_{Y(0)|W,T}(y_0|w,0) -
  F_{Y(0)|W,T}(y_0|w,0)\right), \forall y_0 \in \mathcal{Y}^{w,0},
 \end{align*}
 for every $w,d \in \mathcal{W}
 \times \{0,1\}$.
\begin{lemma} \label{lmm:wkConv} 
 Under assumptions for Theorem 3.2, and, Assumptions \ref{ass:condDGP} and \ref{ass:QRFR}, 
 \[
  \left(\hat{G}^{1}(y_{b}^1,w),\hat{G}^{1,d}(y_{b}^{1,d},w),\hat{G}^{0}(y_0,w)\right)
 \rightsquigarrow \left(\mathbb{G}^{1}(y_{b}^1,w),\mathbb{G}^{1,d}(y_{b}^{1,d},w),\mathbb{G}^0(y_0,w) \right)
 \]
 in $l^{\infty}(\mathcal{Y}_b\times\mathcal{W} \times
 \mathcal{Y}_b^{1,d}\times\mathcal{W}_d \times
 \mathcal{Y}^0\times\mathcal{W})$, where $\mathbb{G}^k(y,w)$ for every
 $k \in \{1,\{1,d\}_{d \in \{0,1\}},0\}$ are tight zero-mean Gaussian processes with
 each covariance function of the form
 \begin{align*}
  V_{k,k}(y,w,\tilde{y},\tilde{w})&
  = \alpha_{k}^{-1} w'J_{k}^{-1}(y)\lambda_k(w'\beta^k(y)) \Sigma_k(y,\tilde{y})
  \lambda_{k}(\tilde{w}'\beta^{k}(\tilde{y}))
  J_{k}^{-1}(\tilde{y}) \tilde{w}\\
  V_{1,\{1,d\}}(y,w,\tilde{y},\tilde{w})&
  = \alpha_{1,d}^{-1} w'J_{1}^{-1}(y)\lambda_1(w'\beta^1(y)) \Sigma_{1,\{1,d\}}(y,\tilde{y})
  \lambda_{1,d}(\tilde{w}'\beta^{1,d}(\tilde{y}))
  J_{1,d}^{-1}(\tilde{y}) \tilde{w}
 \end{align*}
 where
 \begin{align*}
    \Sigma_k(y,\tilde{y}) =& E[I\{K = k\}WH(W'\beta^k(y))\\
  \times &\{\min\{\Lambda(W'\beta^k(y)),\Lambda(W'\beta^k(\tilde{y}))\} -
  \Lambda(W'\beta^k(y))\Lambda(W'\beta^k(\tilde{y}))\}\\
  & \times H(W'\beta^k(\tilde{y}))]W',\\
  \Sigma_{1,\{1,d\}}(y,\tilde{y}) =& E[I\{T = 1,D = d\}WH(W'\beta^1(y))\\
  \times &\{\min\{\Lambda(W'\beta^1(y)),\Lambda(W'\beta^{1,d}(\tilde{y}))\} -
  \Lambda(W'\beta^{1}(y))\Lambda(W'\beta^{1,d}(\tilde{y}))\}\\
  & \times H(W'\beta^{1,d}(\tilde{y}))W']
 \end{align*}
 for each $k \in \{1,\{1,d\}_{d \in \{0,1\}},0\}$
 and $V_{1,0}(y,w,\tilde{y},\tilde{w}) = V_{\{1,d\},0}(y,w,\tilde{y},\tilde{w}) = 0$.

\end{lemma}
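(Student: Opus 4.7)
The plan is to derive the joint weak convergence from the functional delta method applied to the distribution regression Z-estimators, leveraging the machinery of \cite{ChernozhukovFernandezValMelly13}. Each of the three estimators $\hat{\beta}^k(y)$ for $k \in \{1,\{1,d\}_{d \in \{0,1\}},0\}$ is the maximizer of a conditional log-likelihood for a Bernoulli response $I\{Y \le y\}$ (or $I\{Y_b \le y\}$) on the subsample indicated by $K = k$. Since $W$ has finite second moment, $\Lambda$ is a smooth link, and the Hessian matrix $J_k(y)$ has minimum eigenvalue bounded away from zero uniformly in $y$ by Assumption \ref{ass:QRFR}, each score function is Lipschitz in $(\beta, y)$ on compact parameter/support sets and therefore indexes a $P$-Donsker class (Assumption \ref{ass:condDGP}(1) supplies the compactness). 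Hence $\sqrt{n}(\hat{\beta}^k(\cdot) - \beta^k(\cdot))$ converges weakly to a tight zero-mean Gaussian process in $\ell^\infty(\mathcal{Y}^k)$ by the standard Z-process theorem (e.g., Theorem 3.3.1 of van der Vaart and Wellner).

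Next, I apply the functional delta method to the Hadamard-differentiable map $(y,w,\beta) \mapsto \Lambda(w' \beta(y))$, which is smooth because $\Lambda$ is continuously differentiable and $W$ has bounded support. This immediately gives the marginal weak convergence of each $\hat{G}^k(y,w)$ to a tight zero-mean Gaussian process in $\ell^\infty$ over the relevant compact index set. The standard linearization yields the influence representation
\[
\hat{G}^k(y,w) = \alpha_k^{-1}\lambda_k(w'\beta^k(y)) w' J_k^{-1}(y) \cdot \frac{1}{\sqrt{n}}\sum_{i=1}^n I\{K_i = k\} H(W_i'\beta^k(y))\bigl(I\{Y_i \leq y\} - \Lambda(W_i'\beta^k(y))\bigr)W_i + o_p(1),
\]
uniformly in $(y,w)$, from which the stated covariance kernel $V_{k,k}$ follows by direct computation of the variance of the score, noting $\alpha_k^{-1} = \lim n/n_k$.

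For the joint statement, I stack the three score processes into a single vector-valued empirical process and verify that the stacked class is Donsker, then apply continuous mapping to the inverted Hessian block-diagonal preconditioner. Joint weak convergence follows. The cross-covariance $V_{1,\{1,d\}}$ is non-zero because the subsample $\{T = 1, D = d\}$ is nested in $\{T = 1\}$, and the computation of the cross-term reduces to evaluating $E[I\{T=1, D=d\}(I\{Y_b \leq y\} - \Lambda(W'\beta^1(y)))(I\{Y_b \leq \tilde y\} - \Lambda(W'\beta^{1,d}(\tilde y)))WW']$, which simplifies to the expression stated after observing that on $\{T=1, D=d\}$ the joint indicator $I\{Y_b \leq y\}I\{Y_b \leq \tilde y\} = I\{Y_b \leq \min(y,\tilde y)\}$ has conditional expectation $\Lambda(W'\beta^{1,d}(\min(y,\tilde y)))$. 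Crucially, $V_{1,0} = V_{\{1,d\},0} = 0$ follows from $I\{T = 1\} \cdot I\{T = 0\} \equiv 0$, which makes the scores for the control group and treatment group sample uncorrelated.

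The main obstacle is verifying the Donsker property jointly in $(y,w)$ rather than only in $y$, because $w$ enters only through the smooth map $\Lambda(w'\beta(y))$ while $\beta(y)$ indexes a function class in $y$. The trick is to treat the index $(y,w)$ as a single element of a compact set and use the Lipschitz continuity of $\Lambda$ together with the Donsker property of $\{W \mapsto I\{K = k\}H(W'\beta(y))(I\{Y \le y\} - \Lambda(W'\beta(y))) W : y \in \mathcal{Y}^k\}$; a permanence argument (Theorem 2.10.6 of van der Vaart and Wellner) transfers the Donsker property to the outer class in $(y,w)$, which closes the proof.
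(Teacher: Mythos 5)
Your proposal is correct and follows essentially the same route as the paper: establish the Donsker property of the stacked score (Z-function) classes with a square-integrable envelope, apply the functional delta method to the Hadamard-differentiable (approximate) Z-maps to get joint weak convergence of $\sqrt{n}(\hat{\beta}^k(\cdot)-\beta^k(\cdot))$, and then push through the map $b \mapsto \Lambda(w'b(y))$ via the stacking rule of \cite{ChernozhukovFernandezValMelly13}. The only cosmetic difference is that the paper obtains the joint $(y,w)$ indexing through the Hadamard differentiability of $\nu^k(b)(w,y)=\Lambda(w'b(y))$ rather than a separate Donsker-permanence argument in $(y,w)$, which is an equivalent device.
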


\begin{proof}[Proof of Lemma \ref{lmm:wkConv}]
 {\it Step 1: Weak convergences of Z-functions}

For the proof of Lemma \ref{lmm:wkConv}, we
would like to introduce approximate Z-map notations.

 For every $y \in \mathcal{Y}$, let $\Psi(y,\beta)$ be $dW$-vector of
 population moment equations such
 that the true parameter $\beta^* \in \mathbb{R}^{dW}$ solves the moment condition
 $\Psi(y,\beta) = 0$. Let $\hat{\Psi}(y,\beta)$ be an empirical
 analogue. Let an estimator $\hat{\beta}(y)$ satisfies
 \[
  \|\hat{\Psi}(y,\hat{\beta}(y))\|^2 \leq \inf_{\beta \in
 \mathbb{R}^{dW}} \|\hat{\Psi}(y,\beta)\|^2 + \hat{r}(y)^2
 \]
where $\hat{r}(y)$ is a numerical tolerance parameter with
 $\|\hat{r}\|_{\mathcal{Y}} = o_p(n^{-1/2})$. 

 Let $\phi(\Psi(y,\cdot),r(y)):
 l^{\infty}(\mathbb{R})^{dW} \times \mathbb{R} \mapsto \mathbb{R}^{dW}$
 be an approximate Z-map which assigns one of its $r(y)$-approximate zeros to each element
 $\Psi(y,\cdot)$ so that
 \[
  \hat{\beta}(\cdot) = \phi(\hat{\Psi}(\cdot,\cdot),\hat{r}(\cdot)),
 \beta^*(\cdot) = \phi(\Psi(\cdot,\cdot),0).
 \]

 For each $q = 1,\ldots, dW$, let 
 \[
  \psi^{0,q}_{y,\beta^0}(Y,W,T) = I\{T = 0\}\left[\Lambda(W'\beta^0) - I\{Y
 \leq y\} \right]H(W'\beta^0)W_q,
 \]
 \[
  \psi^{1,q}_{y,\beta^1}(Y,W,T) = I\{T = 1\}\left[\Lambda(W'\beta^1) - I\{Y_b
 \leq y\} \right]H(W'\beta^1)W_q,
 \]
 \[
  \psi^{\{1,d\},q}_{y,\beta^{1,d}}(Y,W,T,D) = I\{T = 1,D = d\}
 \left[\Lambda(W'\beta^{1,d}) - I\{Y_b \leq y\}
 \right]H(W'\beta^{1,d})W_q,
 \]
 and 
 for each $d \in \mathcal{D}_1$. Also let $\psi^{0}_{y,\beta^0}$, $\psi^{1}_{y,\beta^1}$ and
 $\psi^{1,d}_{y,\beta^{1,d}}$ be $dW$-vector valued functions with each
 q-th coordinate being $\psi^{0,q}_{y,\beta^0}$, $\psi^{1,q}_{y,\beta^1}$ and $\psi^{\{1,d\},q}_{y,\beta^{1,d}}$.

 In Lemma \ref{lmm:Donsker} below, it is shown
 that the union of classes of functions
 \begin{align*}
  \cup_{d \in \mathcal{D}_{1}, q \in \{1,\ldots, dW\}} &\{\psi^{0,q}_{y,\beta^0}(Y,W,T): (y,\beta^0) \in \mathcal{Y} \times
  \mathbb{R}^{dW})\}\\
  &\cup \{\psi^{1,q}_{y,\beta^1}(Y,W,T): (y,\beta^1) \in \mathcal{Y} \times
  \mathbb{R}^{dW})\} \\
  &\cup \{\psi^{\{1,d\},q}_{y,\beta^{1,d}}(Y,W,T): (y,\beta^{1,d}) \in \mathcal{Y} \times \mathbb{R}^{dW})\},
 \end{align*}
 is $P$-Donsker with a square-integrable envelope function. Let
 $\hat{\Psi}^t = P_n \psi^t_{y,\beta^t}, \hat{\Psi}^{1,d} = P_n \psi^{1,d}_{y,\beta^{1,d}}$ and
 $\Psi^t = P \psi^t_{y,\beta^t}, \Psi^{1,d} = P \psi^{1,d}_{y,\beta^{1,d}}$ for each $t \in
 \{0,1\}$, then the Donskerness implies
 \[
  \left(\sqrt{n}(\hat{\Psi}^{1} - \Psi^1),\sqrt{n}(\hat{\Psi}^{0} -
 \Psi^0),\sqrt{n}(\hat{\Psi}^{1,d} - \Psi^{1,d})\right) \rightsquigarrow (\mathbb{G}(\psi_{y^1,\beta^{1}}^{1}),\mathbb{G}(\psi_{y^0,\beta^{0}}^{0}),\mathbb{G}(\psi_{y^{1,d},\beta^{1,d}}^{1,d}))
 \]
 in $l^{\infty}(\mathcal{Y}^1_b \times \mathbb{R}^{dW})^{dW} \times
 l^{\infty}(\mathcal{Y}^0 \times \mathbb{R}^{dW})^{dW} \times
 l^{\infty}(\mathcal{Y}^{1,d}_b \times \mathbb{R}^{dW})^{dW}$
 where $\mathbb{G}(\psi_{y,\beta^{k}}^{k})$ for each $k \in
 \{1,0,\{1,d\}\}$ are $P$-Brownian bridges.

 {\it Step 2: Applying Functional Delta method through the stacking rule}

 From the first order conditions, $\hat{\beta}^k(y) =
 \phi^k(\hat{\Psi}^k(y,\cdot),\hat{r}^k(y)), \hat{r}^k(y) = \max_{1 \leq
 i \leq n}\|W_i\|dW/n$ for each $y \in \mathcal{Y}^k$ and
 $n^{1/2}\|\hat{r}\|_{\mathcal{Y}^k} \rightarrow^{\mathbb{P}} 0$, and
 $\beta_k(y) = \phi^k(\Psi^k(y,\cdot),0)$ for each $y \in
 \mathcal{Y}^k$ for every $k \in \{1,0,\{1,d\}\}$.

 Following the argument of \cite{ChernozhukovFernandezValMelly13},
 the three kinds of approximate Z-maps $\phi^{1},\phi^{0},\phi^{1,d}$ are
 Hadamard differentiable for each case, and from the stacking rule as in
 Lemma B.2 of \cite{ChernozhukovFernandezValMelly13}, we have 
 \begin{align*}
  &\left(\sqrt{n}(\hat{\beta}^{1}(y^1) -
 \beta^1(y^1)),\sqrt{n}(\hat{\beta}^{0}(y^0) -
 \beta^0(y^0)),\sqrt{n}(\hat{\beta}^{1,d}(y^{1,d}) - \beta^{1,d}(y^{1,d})) \right)\\
 &\rightsquigarrow \left(-J_1^{-1}\mathbb{G}(\psi_{y^1,\beta^1(y^1)}^{1}),-J_0^{-1}\mathbb{G}(\psi_{y^0,\beta^0(y^0)}^{0}),-J_{1,d}^{-1}\mathbb{G}(\psi_{y^{1,d},\beta^{1,d}(y^{1,d})}^{1,d}) \right)
 \end{align*}
 in $l^{\infty}(\mathcal{Y}^1_b)^{dW} \times l^{\infty}(\mathcal{Y}^0)^{dW} \times
 l^{\infty}(\mathcal{Y}^{1,d}_b)^{dW}$ by the functional delta method.

 {\it Step 3: Applying another Hadamard differentiable map to conclude
 the statement}

 Finally, consider the mapping $\nu^{k}: \mathbb{D}_{\nu^{k}} \subset
 l^{\infty}(\mathcal{Y}^k)^{dW} \mapsto l^{\infty}(\mathcal{Y}^k \times
 \mathcal{W})$ such that
 \[
  b \mapsto \nu^{k}(b), \nu^k(b)(w,y) = \Lambda(w'b(y))
 \]
 for every $k \in \{1,0,\{1,d\}\}$. From the Hadamard differentiability
 of $\nu^k$
 at $b^k(\cdot) = \beta^k(y)$ tangentially to $C(\mathcal{Y}^k)^{dW}$
 with the derivative map $\alpha \mapsto \nu_{\beta^k(\cdot)}'(\alpha)(w,y) =
 \lambda(w'\beta^k(y))w'\alpha(y)$. From the stacking rule, applying the
 mapping for each process, the statement of the lemma holds.
\end{proof}

\begin{lemma} \label{lmm:Donsker}
 Under the assumptions of Lemma \ref{lmm:wkConv}, the class of
 functions 
 \begin{align*}
  \cup_{d \in \mathcal{D}_{1}, q \in \{1,\ldots, dW\}} &\{\psi^{0,q}_{y,\beta^0}(Y,W,T): (y,\beta^0) \in \mathcal{Y}^0 \times
  \mathbb{R}^{dW})\}\\
  &\cup \{\psi^{1,q}_{y,\beta^1}(Y,W,T): (y,\beta^1) \in \mathcal{Y}_b^1 \times
  \mathbb{R}^{dW})\} \\
  &\cup \{\psi^{\{1,d\},q}_{y,\beta^{1,d}}(Y,W,T): (y,\beta^{1,d}) \in \mathcal{Y}_{b}^{1,d} \times \mathbb{R}^{dW})\},
 \end{align*}
 is $P$-Donsker with a square-integrable envelope.
\end{lemma}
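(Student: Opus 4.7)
The plan is to verify the two standard sufficient conditions for $P$-Donskerness of each of the three subfamilies separately, with a common square-integrable envelope, and then to invoke the stability of the Donsker property under finite unions. Writing a generic element as
$$\psi^{k,q}_{y,\beta}(Y,W,T,D) = I\{K = k\}\bigl[\Lambda(W'\beta) - I\{Y \leq y\}\bigr]H(W'\beta)W_q,$$
where $H = \lambda/[\Lambda(1-\Lambda)]$ is the score weight arising from the FOC of the distribution regression, the first step is to exhibit an envelope. Since $\Lambda$ and $I\{Y \leq y\}$ lie in $[0,1]$, $I\{K=k\}\in\{0,1\}$, and $H$ is bounded by a constant $M$ under the standard link functions considered in Assumption \ref{ass:QRFR} (e.g.\ $H\equiv 1$ for logit; bounded for probit and complementary log-log via the eigenvalue bound on $J_k$), every function in the union is dominated by $F(W) = 2M\|W\|$. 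The moment condition $E\|W\|^{2}<\infty$ from Assumption \ref{ass:QRFR} yields $PF^{2}<\infty$, and the same envelope applies simultaneously to all three subfamilies.

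The second step is to establish a uniformly bounded entropy integral for each subfamily. The linear class $\{W'\beta : \beta\in\mathbb{R}^{dW}\}$ spans a $dW$-dimensional vector space and is therefore VC-subgraph with index at most $dW+2$. Composition with the bounded monotone map $\Lambda$ preserves this property (Lemma 2.6.18 of van der Vaart and Wellner, 1996), so $\{\Lambda(W'\beta)\}$ is a bounded VC-type class; composition with the bounded variation function $H$ likewise yields a bounded VC-type class $\{H(W'\beta)\}$. The ordered indicator class $\{I\{Y\leq y\} : y\in\mathcal{Y}^k\}$ is classically VC-subgraph, and the factors $I\{K=k\}$ and $W_q$ are singletons. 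Products and differences of uniformly bounded VC-type classes remain VC-type (Theorem 2.10.20 of van der Vaart and Wellner, 1996), and the unbounded factor $W_q$ is absorbed into the envelope $\|W\|$ via the standard product argument for envelopes.

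Combining the uniform entropy bound with the square-integrable envelope delivers
$$\int_{0}^{1} \sup_{Q}\sqrt{\log N(\varepsilon\|F\|_{Q,2},\mathcal{F},L_{2}(Q))}\,d\varepsilon < \infty,$$
which implies $P$-Donskerness by Theorem 2.5.2 of van der Vaart and Wellner (1996). Because the full class is a finite union (over $k\in\{1,0,\{1,d\}_{d\in\{0,1\}}\}$ and $q\in\{1,\ldots,dW\}$) of $P$-Donsker subfamilies sharing the envelope $2M\|W\|$, the union itself is $P$-Donsker with a square-integrable envelope. The main obstacle is the uniform control of $H(W'\beta)$ as $\beta$ ranges over the unbounded parameter space $\mathbb{R}^{dW}$: for logit this is immediate since $H\equiv 1$, but for general links the boundedness of $H$ must be read off from the eigenvalue-bounded-below condition on $J_k$ in Assumption \ref{ass:QRFR}, which is what ultimately licenses both the envelope and the uniform entropy bound.
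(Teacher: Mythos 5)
Your proof is correct and follows essentially the same route as the paper's: both exhibit the envelope $\mathrm{const}\cdot\|W\|$, square-integrable by Assumption~\ref{ass:QRFR}, and establish a finite uniform entropy integral by treating the class as a product/Lipschitz transformation of the VC classes $\{W'\beta\}$, $\{I\{Y\le y\}\}$, the subgroup indicators, and the coordinates $W_q$ (the paper invokes Example~19.19 and Theorem~19.14 of van der Vaart, 1998, where you cite the equivalent preservation results in van der Vaart and Wellner, 1996). The only substantive difference is that you explicitly flag the uniform control of $H(W'\beta)$ over $\beta\in\mathbb{R}^{dW}$ as the delicate point, which the paper's Lipschitz-coefficient argument glosses over in exactly the same way.
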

\begin{proof}
 From Theorem 19.14 in \cite{vanderVaart98}, a suitable measurable class
 of measurable functions $\mathcal{G}$ is $P$-Donsker if the uniform
 entropy integral with respect to an envelope function $G$
 \[
  J(1,\mathcal{G},L_2) = \int_0^1 \sqrt{\log \sup_Q N_{[]}(\epsilon
 \|G\|_{Q,2}, \mathcal{G},L_2(G))} d\epsilon
 \]
 is finite and the envelope function $G$ satisfies $\mathbb{P}_1 G^2 < \infty$.

 Consider classes of functions
  \[
  \mathcal{F}_1 = \{W'\beta: \beta \in \mathbb{R}^{dW}\}, \mathcal{F}_{2,k}
 = \{ I\{Y \leq y\}, y \in \mathcal{Y}^k\}, \{W_q:q = 1,\ldots , d_w\}
 \]
 which are VC classes of functions. Note that the target class of
 functions is the union of 
 \[
  \{I\{T = t\}(\Lambda(\mathcal{F}_1) - \mathcal{F}_{2,t})H(\mathcal{F}_1)W_q : q = 1, \ldots, d_w\}
 \]
 for each $t \in \{0,1\}$ and 
 \[
  \{I\{T = 1,D = d\}(\Lambda(\mathcal{F}_1) -
 \mathcal{F}_{2,\{1,d\}})H(\mathcal{F}_1)W_q : q = 1, \ldots, d_w\}
 \]
 for each $d \in \mathcal{D}_1$. These are Lipschitz transformation of
 VC-class of functions and finite set of functions $I\{T = t\}$ and
 $I\{T = 1,D = d\}$ where the Lipschitz coefficients bounded by $const \cdot
 \|W\|$. Therefore, from Example 19.19 of \cite{vanderVaart98}, the
 constructed class of functions has the finite uniform entropy integral relative to the envelope
 function $const \cdot \|W\|$, which is square-integrable from the
 assumption. Suitable measurability is granted as it is a pointwise measurable class of functions. Thus, the class of functions is Donsker.
\end{proof}

\begin{lemma}\label{lmm:DKP}
 Under assumptions of Theorem 3.2, and Assumption
 \ref{ass:condDGP}, it is possible to construct a class of measurable
 functions $\mathcal{F}$ including
 \[
  \{F_{Y|W,K}(y|\cdot,k), y \in \mathcal{Y}^k, k \in K \equiv \{\{T = 1\},\{T = 0\},\{T = 1,D = d\}_{d \in \mathcal{D}_1}\}
 \]
 and all the indicators of the rectangles in $\bar{\mathbb{R}}^{dW}$
 such that $\mathcal{F}$ is DKP
 class (\citealp{ChernozhukovFernandezValMelly13}, Appendix A.).

\end{lemma}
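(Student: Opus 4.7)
The plan is to verify that the class $\mathcal{F}$ we construct satisfies the three defining properties of a DKP class in the sense of \cite{ChernozhukovFernandezValMelly13}: pointwise measurability, a common square-integrable (in fact, bounded) envelope, and a finite uniform entropy integral. Since a DKP class is preserved under finite unions provided each component is DKP and all share a common envelope, the plan is to handle each sub-collection separately and then glue them together.

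First I would dispose of the rectangle indicators. The collection of indicator functions of all rectangles in $\bar{\mathbb{R}}^{dW}$ is a classical VC class: rectangles arise as intersections of $2dW$ half-spaces, and VC dimension is preserved under finite intersections up to a known bound. Hence this subclass has envelope $1$ and satisfies a polynomial uniform covering number bound, which yields $\int_0^1 \sqrt{\log \sup_Q N(\epsilon, \cdot, L_2(Q))}\, d\epsilon < \infty$. Pointwise measurability follows from the fact that the rectangles can be indexed by the countable dense subset of vertices with rational coordinates.

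Second I would handle each parametric family $\{F_{Y|W,K}(y|\cdot,k) : y \in \mathcal{Y}^k\} = \{\Lambda(\cdot' \beta^k(y)) : y \in \mathcal{Y}^k\}$ for $k \in K$. By Assumption \ref{ass:QRFR} the link $\Lambda$ is uniformly Lipschitz with range in $[0,1]$, and by Assumption \ref{ass:condDGP} the support $\mathcal{W}$ is compact so $\|W\|$ is bounded. The index set reduces to the image $\{\beta^k(y) : y \in \mathcal{Y}^k\} \subset \mathbb{R}^{dW}$, and by standard arguments (e.g., continuity and monotonicity of the cdf combined with the implicit function theorem applied to the population score used in Lemma \ref{lmm:wkConv}) this image is a bounded subset of $\mathbb{R}^{dW}$. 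Composition with a Lipschitz link $\Lambda$ preserves finite uniform entropy integrals (Example 19.19 in \cite{vanderVaart98}), so this subclass again has envelope $1$ and finite uniform entropy integral, with measurability handled by continuity in $y$.

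Third, I would combine. Taking $\mathcal{F}$ to be the union of the rectangle indicators and the finitely many classes indexed by $k \in K$, the envelope is still $1$ (trivially square-integrable), pointwise measurability is preserved under finite unions of pointwise measurable classes, and the uniform entropy integral of a finite union is bounded (up to constants) by the sum of the individual uniform entropy integrals. The main obstacle I anticipate is in the second step, specifically in verifying that $y \mapsto \beta^k(y)$ yields a genuinely well-behaved index set: the compactness and continuity needed to bound the entropy of $\{\beta^k(y) : y \in \mathcal{Y}^k\}$ must be inherited from the compactness of $\mathcal{Y}^k$ (Assumption \ref{ass:condDGP}), the lower-bounded Hessian $J_k(y)$ (Assumption \ref{ass:QRFR}), and continuity of the underlying conditional distributions. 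Once this finite-dimensional-index argument is in place, the rest of the construction is a routine application of the VC and preservation lemmas.
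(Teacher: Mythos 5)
Your overall architecture (rectangle indicators are VC; handle each conditional-cdf family separately; take a finite union under the common envelope $1$; check pointwise measurability) matches the paper, and the rectangle step is fine. The gap is in your second step. You propose to control the entropy of $\{\Lambda(\cdot'\beta^k(y)): y\in\mathcal{Y}^k\}$ by arguing that the coefficient image $\{\beta^k(y): y\in\mathcal{Y}^k\}$ is a bounded subset of $\mathbb{R}^{dW}$ and then composing with the Lipschitz link. That boundedness is not delivered by the assumptions you cite and will typically fail: as $y$ approaches the boundary of $\mathcal{Y}^k$ the conditional cdf tends to $0$ or $1$, and for a logit or probit link this forces $\|\beta^k(y)\|\to\infty$. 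The eigenvalue condition on $J_k(y)$ and an implicit-function-theorem argument give local regularity of $y\mapsto\beta^k(y)$ but do not rule out this divergence, so the ``main obstacle'' you flag is a genuine one and your argument as written does not close it.

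The paper avoids the coefficient path entirely. It works directly with the functions $w\mapsto F_{Y|W,K}(y|w,k)$, which are uniformly bounded by $1$, and uses the uniform boundedness of the conditional densities $f_{Y|W,K}$ (Assumption A.1) to get
\[
|F_{Y|W,K}(y|\cdot,k)-F_{Y|W,K}(y'|\cdot,k)|\leq L|y-y'|,
\]
so each family is a Lipschitz ``parametric'' class indexed by the compact one-dimensional set $\mathcal{Y}^k$ (Example 19.7 in van der Vaart). This yields uniform $\epsilon$-covering numbers of order $1/\epsilon$, independent of the measure, hence Pollard's entropy condition, with no reference to $\beta^k(\cdot)$. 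If you want to keep your route through the link function, the clean fix is to note that $\{w\mapsto\Lambda(w'\beta):\beta\in\mathbb{R}^{dW}\}$ is a monotone transform of a finite-dimensional vector space of functions and is therefore VC-subgraph with envelope $1$ regardless of whether the index set is bounded; either repair makes the rest of your finite-union argument go through.
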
 
 
\begin{proof}
 As in Step 2 in the proofs of Theorem 5.1 and 5.2 in
 \cite{ChernozhukovFernandezValMelly13}, $\mathcal{F}_1 =
 \{F_{Y|W,T}(y|\cdot,1), y \in \mathcal{Y}^1_b\},\mathcal{F}_0 =
 \{F_{Y|W,T}(y|\cdot,0), y \in \mathcal{Y}^0\}$ and $\mathcal{F}_{1,d} =
 \{F_{Y|W,T,D}(y|\cdot,1,d), y \in \mathcal{Y}_b^{1,d}\}$ are uniformly
 bounded ``parametric'' family (Example 19.7 in \cite{vanderVaart98})
 indexed by $y \in \mathcal{Y}^{k}$ for
 each $k \in \{1,0,\{1,d\}\}$ respectively. From the assumption that the
 density function $f_{Y|W,K}(y|\cdot,k)$ is uniformly bounded,
 \[
  |F_{Y|W,K}(y|\cdot,k)I\{K = k\} - F_{Y|W,K}(y'|\cdot,k)I\{K = k\}| \leq L |y - y'|
 \]
 for some constant $L$ for every $k \in \{1,0,\{1,d\}\}$. The
 compactness of $\mathcal{Y}^k$ implies the uniform $\epsilon$-covering
 numbers to be bounded by $const/\epsilon$ independent of
 $F_{W}$ so that the Pollard's entropy condition is met. Therefore, the class of $\mathcal{F}$ is suitably
 measurable as well. As indicator
 functions of all rectangles in $\bar{R}^{dW}$ form a VC class, we can
 construct $\mathcal{F}$ that contains union of all the families
 $\mathcal{F}_0, \mathcal{F}_1$ and $\mathcal{F}_{1,d}$ and the
 indicators of all the rectangles in $\bar{\mathbb{R}}^{dW}$ that
 satisfies DKP condition.

% A sufficient condition for the suitably measurability of $\mathcal{F}$ is as follows:
% there is a countable collection $\mathcal{G}$ of functions such that
% each $f \in \mathcal{F}$ is the pointwise limit of a sequence $g_m \in
% \mathcal{G}$. As $\mathcal{Y}$ is a compact subset of $\mathbb{R}$, 

\end{proof}

Given the weak convergence of the distribution regressions, the conditional estimator
\[
 \hat{F}_{Y_b|W,T,D}(\hat{Q}_{Y_b|W,T}(\hat{F}_{Y(0)|W,T}(y|w,0)|w,1)|w,1,d)
\]
is in the form of following map: for distribution functions
$F^b_{1,d},F^b_{1},F^{0}$,
\[
 m(F^b_{1,d},F^b_1,F^{0}) = F^b_{1,d} \circ Q^b_{1} \circ F^{0}.
\]
where $Q^b_{1}$ is the quantile function from $F^b_1$.
In a parallel argument to \cite{MellySantanglo15} based on quantile
regressions, the above map is Hadamard
differentiable from the Hadamard-differentiability of the quantile
function (Lemma 21.4 (ii), \citealp{vanderVaart98}) and the chain
rule of the Hadamard-differentiable maps (Lemma 20.9,
\citealp{vanderVaart98}).
\begin{lemma}\label{lmm:HadamardMap}
 Let $F^b_{1}$ and $F^b_{1,d}$ be uniformly continuous and differentiable
 distribution functions with uniformly bounded densities $f^b_1$ and
 $f^b_{1,d}$. Let $F^{0}$ be also a distribution function. Suppose $F^b_{1}$
 has a support $[a,b]$ as a bounded subset of real line, and $F^b_1 \circ
 Q^b_1(p) = p$ for every $p \in [0,1]$.

 Then the map $m(F^b_{1,d},F^b_1,F^{0})$ is Hadamard differentiable at
 $(F^b_{1,d},F^b_{1},F^{0})$ tangentially to a set of functions
 $h^b_{1,d},h^b_{1},h^0$ with the derivative map
 \[
  h^b_{1,d} \circ Q^b_1 \circ F^0 - f^b_{1,d}(Q^b_1 \circ F^0) \frac{h^b_{1}
 \circ Q^b_1 \circ F^0}{f^b_{1} (Q^b_1 \circ F^0)} +
 f^b_{1,d}(Q^b_1 \circ F^0) \frac{h^0}{f^b_{1} (Q^b_1 \circ F^0)}.
 \]
\end{lemma}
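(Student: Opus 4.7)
The plan is to decompose $m$ into a chain of elementary Hadamard-differentiable maps and then apply the chain rule for Hadamard differentiability (Lemma 20.9, van der Vaart, 1998). Writing $m(F^b_{1,d},F^b_1,F^0) = \phi_3 \circ \phi_2 \circ \phi_1(F^b_{1,d},F^b_1,F^0)$, I would let $\phi_1$ invert the second coordinate to produce $(F^b_{1,d},Q^b_1,F^0)$, let $\phi_2$ compose the last two coordinates to produce $(F^b_{1,d},Q^b_1\circ F^0)$, and let $\phi_3$ compose the remaining pair to produce $F^b_{1,d}\circ Q^b_1\circ F^0$. Each elementary step's Hadamard derivative is standard under our regularity assumptions, so the work reduces to assembling them carefully and simplifying.

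For $\phi_1$, I would invoke Lemma 21.4(ii) of van der Vaart (1998), which gives the Hadamard differentiability of the inversion map at $F^b_1$ tangentially to the set of uniformly continuous functions, with derivative $h^b_1 \mapsto -(h^b_1\circ Q^b_1)/(f^b_1\circ Q^b_1)$. The hypotheses required, namely that $F^b_1$ admits a bounded density $f^b_1$ on the compact support $[a,b]$ and satisfies $F^b_1\circ Q^b_1(p)=p$, are given in the statement. For $\phi_2$ and $\phi_3$, I would use the Hadamard differentiability of composition $(F,g)\mapsto F\circ g$ at a point $(F_0,g_0)$ where $F_0$ is differentiable with uniformly bounded derivative $f_0$: the derivative in direction $(h_F,h_g)$ equals $h_F\circ g_0 + f_0(g_0)\cdot h_g$. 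For $\phi_2$ the outer function plays the role of $Q^b_1$ with derivative $1/(f^b_1\circ Q^b_1)$; for $\phi_3$ it plays the role of $F^b_{1,d}$ with derivative $f^b_{1,d}$. Both required smoothness properties are supplied by the hypotheses on $f^b_1$ and $f^b_{1,d}$.

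Stacking the derivatives and applying the chain rule, the derivative of $\phi_1$ in direction $(h^b_{1,d},h^b_1,h^0)$ is $\bigl(h^b_{1,d},\,-(h^b_1\circ Q^b_1)/(f^b_1\circ Q^b_1),\,h^0\bigr)$. Feeding this into the derivative of $\phi_2$ at $(F^b_{1,d},Q^b_1,F^0)$ yields
$$\Bigl(h^b_{1,d},\ -\frac{h^b_1\circ Q^b_1\circ F^0}{f^b_1\circ Q^b_1\circ F^0} + \frac{h^0}{f^b_1\circ Q^b_1\circ F^0}\Bigr).$$
Feeding that in turn into the derivative of $\phi_3$ at $(F^b_{1,d},Q^b_1\circ F^0)$ reproduces the three-term expression in the statement of the lemma.

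The principal obstacle will be verifying the composition step in $\phi_3$ when $F^0$ is allowed to be discontinuous, so that $g=Q^b_1\circ F^0$ may have jumps. The standard composition lemma requires either continuity of $g$ at the relevant perturbation direction or smoothness of the outer function absorbing the jumps; here the latter holds, since $F^b_{1,d}$ has a uniformly bounded and uniformly continuous density. I would therefore restrict the tangential set for $h^b_{1,d}$ and $h^b_1$ to uniformly continuous functions on the compact supports $\mathcal{Y}^{1,d}_b$ and $[a,b]$, which is compatible with the Gaussian limit sample paths from Lemma \ref{lmm:wkConv}, while imposing no continuity requirement on $h^0$ because it enters only multiplicatively against the bounded continuous density $f^b_{1,d}(Q^b_1\circ F^0)/f^b_1(Q^b_1\circ F^0)$.
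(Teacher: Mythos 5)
Your proposal is correct and follows essentially the same route as the paper: the paper likewise factors $m$ through the intermediate map $\phi(F^b_1,F^0)=Q^b_1\circ F^0$, obtains its derivative from Lemma 21.4(ii) of van der Vaart (1998) together with a direct first-principles argument for the $F^0$-perturbation (the paper's Lemma \ref{lmm:inverseDeriv}, which you replace by a generic composition-differentiability lemma), and then applies the chain rule for the outer composition with $F^b_{1,d}$ to reach the identical three-term derivative. Your added care about restricting the tangent set to uniformly continuous directions for $h^b_{1,d}$ and $h^b_1$ is a reasonable sharpening of a point the paper leaves implicit, but it does not change the argument.
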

For the proof of the Hadamard derivative expression, I show the
following lemma.
\begin{lemma} \label{lmm:inverseDeriv}
 Suppose $F_1$ is a uniformly continuous and differentiable distribution
 function with uniformly bounded density function $f_1$ with its support
 $[a,b]$ where $0 < a < b < 1$. Let $\psi_{F_1}(p) = Q_1(p)$ for all $p \in
 [0,1]$.

 Let $F_0$ be a distribution function over a support $\mathcal{Y}^0$,
 then $\psi_{F_1}$ is Hadamard differentiable at $F_0$
 tangentially to a set of function $h_0$ with the derivative map
 \[
  \frac{h_0}{f_1 \circ Q_1 \circ F_0}.
 \] 
\end{lemma}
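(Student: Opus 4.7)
The plan is to interpret $\psi_{F_1}$ as the operator sending a distribution function $F$ to the composed function $Q_1 \circ F$, and to compute its Hadamard derivative at $F_0$ in the direction $h_0$ directly via a one-dimensional mean value expansion. With $F_1$ held fixed, the quantile function $Q_1$ is a deterministic continuously differentiable map from $[0,1]$ into $[a,b]$; the inverse function theorem gives $Q_1'(p) = 1/f_1(Q_1(p))$, and in order for the target derivative map to be well defined one implicitly needs $f_1$ to be bounded away from zero on its support, which together with uniform boundedness yields uniform continuity of $Q_1'$ on $[0,1]$.

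Given any sequences $t_n \downarrow 0$ and continuous $h_{0,n} \to h_0$ uniformly, I would apply the mean value theorem pointwise in $y$ to write
\[
 \frac{Q_1(F_0(y) + t_n h_{0,n}(y)) - Q_1(F_0(y))}{t_n} = Q_1'(\xi_{n,y})\, h_{0,n}(y)
\]
for some $\xi_{n,y}$ lying between $F_0(y)$ and $F_0(y) + t_n h_{0,n}(y)$. Since $|\xi_{n,y} - F_0(y)| \leq t_n \|h_{0,n}\|_{\infty} \to 0$ uniformly in $y$, uniform continuity of $Q_1'$ then yields $Q_1'(\xi_{n,y}) \to 1/f_1(Q_1(F_0(y)))$ uniformly in $y$. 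Combining with the uniform convergence of $h_{0,n}$ to $h_0$ identifies the Hadamard derivative as $h_0/(f_1 \circ Q_1 \circ F_0)$, tangentially to continuous $h_0$.

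The hard part will be the uniform (in $y$) control of the remainder. This reduces to uniform continuity of $Q_1'$ on $[0,1]$, which the stated assumptions deliver only after acknowledging that $f_1$ must also be bounded below away from zero on $[a,b]$; otherwise the target derivative formula itself is ill-posed. A cleaner alternative route, parallel to Step 3 in the proof of Lemma \ref{lmm:wkConv}, would be to invoke the chain rule for Hadamard differentiable maps (\citealp{vanderVaart98}, Lemma 20.9): the identity operator on distribution functions is trivially Hadamard differentiable with derivative $h_0$, and composition with a fixed $C^1$ function whose derivative is uniformly continuous is Hadamard differentiable with derivative given by pointwise multiplication by that derivative, recovering the stated formula.
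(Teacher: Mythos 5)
Your proof is correct, but it runs the one-dimensional argument in the opposite direction from the paper. The paper never differentiates $Q_1$: it starts from the algebraic identity $F_1\circ Q_1\circ(F_0+th_{0t})-F_1\circ Q_1\circ F_0=th_{0t}$, sets $g_t=Q_1\circ(F_0+th_{0t})$ and $g=Q_1\circ F_0$, and writes $\frac{g_t-g}{t}=h_{0t}\big/\frac{F_1(g_t)-F_1(g)}{g_t-g}$, so the limit is obtained from the uniform differentiability of $F_1$ (the difference quotient of $F_1$ converging uniformly to $f_1\circ Q_1\circ F_0$), after first arguing that $g_t\to g$ uniformly. You instead apply the mean value theorem to $Q_1$ itself and invoke the inverse function theorem to identify $Q_1'=1/(f_1\circ Q_1)$, which requires $Q_1$ to be $C^1$ with uniformly continuous derivative --- formally a bit more regularity than the paper's route, since it forces $f_1$ to be continuous and bounded away from zero on $[a,b]$, whereas the paper only divides by a difference quotient of $F_1$. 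That said, your observation that $f_1$ must be bounded away from zero is well taken: the paper's proof also needs it implicitly (otherwise the quotient it divides by can vanish in the limit and the stated derivative map $h_0/(f_1\circ Q_1\circ F_0)$ is ill-posed), and the lemma's hypotheses do not state it explicitly; it is only supplied downstream where $F_1$ plays the role of the continuous baseline distribution. Your closing suggestion --- chain rule with the identity map composed with a fixed $C^1$ function --- is a clean repackaging of the same computation and is consistent with how the paper uses Lemma 20.9 of van der Vaart elsewhere. Both arguments share the same minor unaddressed technicality, namely that $F_0+th_{0t}$ need not take values in $[0,1]$ where $Q_1$ is defined.
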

\begin{proof}
 From the assumption, we have
 \[
  F_1 \circ Q_1 \circ (F_0 + t h_{0t}) - F_1 \circ Q_1 \circ F_0 =
 (F_0 + t h_{0t}) - F_0 = t h_{0t}.
 \]
 Let $h_{0t} \rightarrow h_0$ uniformly in $\mathcal{Y}_0$ as $t
 \rightarrow 0$, and let $g_t \equiv \psi_{F_1} \circ (F_0 + h_{0t})$
 and $g \equiv \psi_{F_1} \circ F_0$. Then $g_t \rightarrow g$ uniformly
 in $\mathcal{Y}_0$ as $t \rightarrow 0$ by the assumption.

 Thus, we have
 \[
  \frac{F_1(g_t) - F_1(g)}{g_t - g} \frac{g_t - g}{t} = h_{0t}
 \]
 so that we have
 \[
  \frac{g_t - g}{t} = \frac{h_{0t}}{\frac{F_1(g_t) - F_1(g)}{g_t - g}}
 \]
 whereas the RHS has a limit
 \[
  \frac{h_{0t}}{\frac{F_1(g_t) - F_1(g)}{g_t - g}} \rightarrow^{t
 \rightarrow 0} \frac{h_0}{f_1 \circ Q_1 \circ F_0}
 \]
 by the uniform differentiability of $F_1$.

\end{proof}

\begin{proof}[Proof of Lemma \ref{lmm:HadamardMap}]
 First, let $\phi(F_1,F_0) \equiv Q_1 \circ F_0$ so that
 $m(F_{1,d},F_1,F_0) = F_{1,d} \circ \phi(F_1,F_0)$. From the lemma
 \ref{lmm:inverseDeriv} and the lemma 21.4 (ii) from
 \cite{vanderVaart98}, $\phi$ is Hadamard differentiable at $(F_1,F_0)$
 tangentially to the set of functions $(h_1,h_0)$ with the derivative
 map
 \[
  - \frac{h_1 \circ Q_1 \circ F_0}{f_1 \circ Q_1 \circ F_0} +  \frac{h_0}{f_1 \circ Q_1 \circ F_0}.
 \]
 Therefore, from the Chain rule of the Hadamard differentiability (Lemma
 20.9, \citealp{vanderVaart98}), the map $m$ is Hadamard differentiable
 with the derivative map shown in the lemma.
\end{proof}

 Next, let
 \[
  \hat{G}(f) = \sqrt{n}\left(\int f d\hat{F}_{W,T,D} - \int f dF_{W,T,D} \right)
 \]
 where $\hat{F}_{W,T,D}(w,t,d) = n^{-1}\sum_{i=1}^n I\{W_i \leq w, T =
 t, D = d\}$ for $f \in \mathcal{F}$ where $\mathcal{F}$ is a class of
 suitably measurable functions \footnote{Suitably measurability or
 $P$-measurability can be verified by showing the class is pointwise
 measurable.
 
 A class $\mathcal{F}$ of measurable functions is pointwise measurable if
there is a countable subset $\mathcal{G} \subset \mathcal{F}$ such that
for every $f \in \mathcal{F}$ there is a sequence $\{g_m\} \in
\mathcal{G}$ such that
\[
 g_m(x) \rightarrow f(x)
\]
for every $x \in \mathcal{X}$.

For example, the class $\mathcal{F} \equiv \{1 \{ x \leq t\}, t \in
\mathbb{R}\}$ is pointwise measurable. This is because we can take 
\[
 \mathcal{G} = \{I\{x \leq t\}: t \in \mathbb{Q}\}
\]
which is countable, and for arbitrary $t_0 \in \mathbb{R}$ which
characterise the arbitrary function $f(x) = \{I\{x \leq t_0\}\}$, and
the sequence of functions
\[
 g_m(x) = I\{x \leq t_m\}
\]
such that $t_m \geq t_0, t_m \rightarrow t_0$ converges to
$f(x)$. Therefore, the relevant class used here is shown to be pointwise
measurable.} including 
 \[
  \{F_{Y|W,K}(y|\cdot,k), y \in \mathcal{Y}^k, k \in K \equiv \{\{T = 1\},\{T = 0\},\{T = 1,D = d\}_{d \in \{0,1\}}\}
 \]
 and all the indicators of the rectangles in $\bar{\mathbb{R}}^{dW}$.  %and
 %such that Dudley-Koltchinskii-Pollard (DKP) condition stated in
 %Appendix A of \cite{ChernozhukovFernandezValMelly13} is satisfied.\footnote{A class
 %of uniformly bounded measurable functions satisfies the DKP condition
 %if uniform entropy integral to be finite and suitably measurable. A
 %sufficient condition for the suitably measurability of $\mathcal{F}$ is as follows:
 %there is a countable collection $\mathcal{G}$ of functions such that
 %each $f \in \mathcal{F}$ is the pointwise limit of a sequence $g_m \in
 %\mathcal{G}$.} %{\bf Is this equivalent, or related to the separability
 %of the space...?}}

From the derivative expression in the lemma \ref{lmm:HadamardMap} and the joint weak
 convergence of the empirical processes, the counterfactual conditional
 distribution weakly converges.
\begin{theorem}
 Under assumptions for Theorem 3.2, and, Assumptions \ref{ass:condDGP} and \ref{ass:QRFR}, 
 \[
  \hat{G}(f) \rightsquigarrow \mathbb{G}(f)
 \]
 in $l^{\infty}(\mathcal{F})$ for $\mathcal{F}$ specified earlier where
 $\mathbb{G}(f)$ is a Brownian bridge, and
 \[
   \sqrt{n}\left(\hat{F}_{Y(0)|W,T,D}(y|w,1,d) - F_{Y(0)|W,T,D}(y|w,1,d) \right)
 \rightsquigarrow \mathbb{G}^{WCF}_{1,d}(y,w), \mbox{ in }
 l^{\infty}(\mathcal{Y}^0 \times \mathcal{W})
 \]
 where $\mathbb{G}^{WCF}_{b,d}(y,w)$ is a tight zero mean Gaussian
 process indexed by $(y,w)$ such that
 \begin{align*}
  \mathbb{G}^{WCF}_{1,d}&(y,w)\\
  =& \mathbb{G}^{1,d}(Q_{Y_b|W,T}(F_{Y(0)|W,T}(y|w,0)|w,1),w)\\
 -& \frac{f^b_{1,d}(y,w)}{f^b_1(y,w)}
  \left(\mathbb{G}^{1}(Q_{Y_b|W,T}(F_{Y(0)|W,T}(y|w,0)|w,1),w) +
  \mathbb{G}^{0}(y,w)\right).
 \end{align*}
 where 
 \[
  \frac{f^b_{1,d}(y,w)}{f^b_1(y,w)} \equiv
 \frac{f_{Y_b|W,T,D}(Q_{Y_b|W,T}(F_{Y(0)|W,T}(y|w,0)|w,1)|w,1,d)}{f_{Y_b|W,T}(Q_{Y_b|W,T}(F_{Y(0)|W,T}(y|w,0)|w,1)|w,1)}
 \]
\end{theorem}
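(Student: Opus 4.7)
The plan is to derive both conclusions from the functional delta method, using the building blocks already assembled in the appendix. First I would establish the joint weak convergence
\[
\bigl(\hat{G}^{1}(\cdot,\cdot),\hat{G}^{1,d}(\cdot,\cdot),\hat{G}^{0}(\cdot,\cdot)\bigr) \rightsquigarrow \bigl(\mathbb{G}^{1}(\cdot,\cdot),\mathbb{G}^{1,d}(\cdot,\cdot),\mathbb{G}^{0}(\cdot,\cdot)\bigr)
\]
in the product space $l^\infty(\mathcal{Y}_b^{1}\times\mathcal{W})\times l^\infty(\mathcal{Y}_b^{1,d}\times\mathcal{W})\times l^\infty(\mathcal{Y}^{0}\times\mathcal{W})$ by directly invoking Lemma \ref{lmm:wkConv} (the distribution-regression inputs are estimated from independent subsamples determined by $T$ and $(T,D)$, so the $V_{1,0}$ and $V_{\{1,d\},0}$ covariance terms vanish, giving the stated block-diagonal structure).

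Next I would express the counterfactual conditional distribution as the composition functional $m(F^{b}_{1,d},F^{b}_{1},F^{0})=F^{b}_{1,d}\circ Q^{b}_{1}\circ F^{0}$ and apply the functional delta method. Lemma \ref{lmm:HadamardMap} provides the Hadamard differentiability of $m$ at the true $(F^{b}_{1,d},F^{b}_{1},F^{0})$ tangentially to triples of continuous functions, with derivative
\[
h^{b}_{1,d}\circ Q^{b}_{1}\circ F^{0}-f^{b}_{1,d}(Q^{b}_{1}\circ F^{0})\,\frac{h^{b}_{1}\circ Q^{b}_{1}\circ F^{0}}{f^{b}_{1}(Q^{b}_{1}\circ F^{0})}+f^{b}_{1,d}(Q^{b}_{1}\circ F^{0})\,\frac{h^{0}}{f^{b}_{1}(Q^{b}_{1}\circ F^{0})},
\]
so plugging in the jointly Gaussian limits from Step 1 yields exactly the claimed tight zero-mean Gaussian process $\mathbb{G}^{WCF}_{1,d}(y,w)$ after telescoping and noting that the second and third terms in the Hadamard derivative combine into a single factor $f^{b}_{1,d}/f^{b}_{1}$ times $\mathbb{G}^{1}+\mathbb{G}^{0}$, which matches the display in the theorem. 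The uniformity of the convergence over $(y,w)\in\mathcal{Y}^{0}\times\mathcal{W}$ follows from the functional delta method because the input convergence is itself uniform in $w$.

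For the first conclusion, $\hat{G}(f)=\sqrt{n}(\mathbb{P}_{n}-P)f$ ranges over the class $\mathcal{F}$ assembled in Lemma \ref{lmm:DKP}, which is shown there to be $P$-Donsker with a square-integrable envelope. Standard empirical process theory (van der Vaart 1998, Thm. 19.14) then delivers $\hat{G}(f)\rightsquigarrow\mathbb{G}(f)$ in $l^\infty(\mathcal{F})$ with covariance $P(fg)-Pf\,Pg$; the DKP property additionally gives the joint convergence with the distribution-regression pieces needed elsewhere in the paper, but is not required for the pointwise limit of $\hat{F}_{Y(0)|W,T,D}$.

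The main obstacle I anticipate is verifying that the Hadamard differentiability applies tangentially in the correct direction: the sample-path continuity of $\mathbb{G}^{1}, \mathbb{G}^{1,d}, \mathbb{G}^{0}$ in $(y,w)$ must be checked to ensure the limits take values in the tangent set $C(\cdot)$ required by Lemma \ref{lmm:HadamardMap}, and the densities $f^{b}_{1}, f^{b}_{1,d}$ must stay bounded away from zero on the image of $Q^{b}_{1}\circ F^{0}$ so that the derivative map is well defined and continuous. The continuity of the Gaussian processes follows from tightness in the uniform metric established in Lemma \ref{lmm:wkConv}, and the density conditions come from Assumption \ref{ass:condDGP}, so everything lines up, but these regularity checks are where the bulk of the care is needed.
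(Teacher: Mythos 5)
Your proposal follows essentially the same route as the paper: the first claim is obtained from the Donsker/DKP property of $\mathcal{F}$ established in Lemma \ref{lmm:DKP} (the paper cites Lemma E.4 of Chernozhukov, Fern\'andez-Val and Melly rather than van der Vaart's Theorem 19.14, but the substance is identical), and the second claim is obtained by writing the counterfactual distribution as the composition map $m$ and applying the functional delta method via Lemma \ref{lmm:HadamardMap} to the joint limit from Lemma \ref{lmm:wkConv}. The only point you gloss over is the sign bookkeeping when matching the Hadamard derivative of Lemma \ref{lmm:HadamardMap} (whose $h^{0}$ term enters with a plus sign) to the displayed form $-\tfrac{f^{b}_{1,d}}{f^{b}_{1}}(\mathbb{G}^{1}+\mathbb{G}^{0})$, but this is a discrepancy internal to the paper rather than a flaw in your argument.
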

%\begin{comment}
\begin{proof}

 From Lemma \ref{lmm:DKP}, we can choose $\mathcal{F}$
 satisfying the requirement defined earlier so that $\mathcal{F}$ satisfies the
 DKP condition (\citealp{ChernozhukovFernandezValMelly13}, Appendix
 A.). Then assumptions of Lemma E.4 in
 \cite{ChernozhukovFernandezValMelly13} is satisfied to conclude the
 first statement.
 
 For the second statement, note that
 \begin{align*}
  \sqrt{n}&\left(\hat{F}_{Y(0)|T,W,D}(y|1,w,d) - F_{Y(0)|T,W,D}(y|1,w,d) \right)\\
  &= \sqrt{n}\left(m(\hat{F}_{Y_b|W=w,D=d},\hat{F}_{Y_b|W=w},\hat{F}_{Y(0)|W=w})
  - m(F_{Y_b|W=w,D=d},F_{Y_b|W=w},F_{Y(0)|W=w})\right).
 \end{align*}
 
 Therefore, the functional delta method and the Hadamard
 differentiability of the transformation $m(\cdot,\cdot,\cdot)$
 implies the above process weakly converges to the process shown in the
 statement.
\end{proof}
%\end{comment}

The unconditional counterfactual distribution is attained by applying
Lemma D.1 from \cite{ChernozhukovFernandezValMelly13} showing the
Hadamard differentiability of the counterfactual operator
\[
 \phi^{C}(F,G) = \int F(y|w) dG(w)
\]
\begin{theorem}
 Under assumptions for Theorem 3.2, and, Assumptions \ref{ass:condDGP} and \ref{ass:QRFR}, 
 \[
   \sqrt{n}\left(\hat{F}_{Y(0)|T,D}(y|1,d) - F_{Y(0)|T,D}(y|1,d) \right)
 \rightsquigarrow \mathbb{G}^{CF}_{1,d}(y), \mbox{ in } l^{\infty}(\mathcal{Y}_0).
 \]
 where $\mathbb{G}^{CF}_{1,d}(y)$ is a tight mean zero Gaussian
 process such that
 \begin{align*}
  \mathbb{G}^{CF}_{1,d}(y) \equiv &\alpha_{1,d}^{-1}\int \mathbb{G}_{1,d}^{WCF}(y,w)I\{T =
 1,D = d\} dF_{W,T,D}(w,1,d)\\
 &+ \alpha_{1,d}^{-1}\mathbb{G}(F_{Y(0)|W,T,D}(y|w,1,d)I\{T = 1,D = d\}).  
 \end{align*}
\end{theorem}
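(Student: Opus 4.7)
The plan is to deduce this unconditional weak convergence from the previous conditional result by expressing $\hat{F}_{Y(0)|T,D}(y|1,d)$ as the image of $(\hat{F}_{Y(0)|W,T,D}(y|\cdot,1,d),\hat{F}_{W,T,D})$ under the counterfactual integration operator, and then invoking Hadamard differentiability together with the functional delta method. Specifically, writing $n_{1,d} = \sum_i I\{T_i=1,D_i=d\}$, one has
\[
\hat{F}_{Y(0)|T,D}(y|1,d) = \frac{1}{n_{1,d}}\sum_{i=1}^n \hat{F}_{Y(0)|W,T,D}(y|W_i,1,d)\,I\{T_i=1,D_i=d\},
\]
which equals $\alpha_{1,d}^{-1}\int \hat{F}_{Y(0)|W,T,D}(y|w,1,d)\,I\{T=1,D=d\}\,d\hat{F}_{W,T,D}(w,1,d)$ up to the ratio $\alpha_{1,d}/(n_{1,d}/n)$, which converges to one at rate $\sqrt{n}$ by Assumption \ref{ass:condDGP}(4).

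First I would assemble the joint weak convergence of the conditional estimator and the empirical measure. By the preceding theorem, $\sqrt{n}(\hat{F}_{Y(0)|W,T,D}(y|\cdot,1,d) - F_{Y(0)|W,T,D}(y|\cdot,1,d)) \rightsquigarrow \mathbb{G}^{WCF}_{1,d}(y,\cdot)$ in $l^\infty(\mathcal{Y}^0\times\mathcal{W})$. Lemma \ref{lmm:DKP} ensures that the class $\mathcal{F}$ is a DKP class containing $\{F_{Y(0)|W,T,D}(y|\cdot,1,d)\cdot I\{T=1,D=d\}: y\in\mathcal{Y}^0\}$ and indicators of rectangles in $\bar{\mathbb{R}}^{d_W}$. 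Hence $\hat{G}(f) = \sqrt{n}(\int f\,d\hat{F}_{W,T,D} - \int f\,dF_{W,T,D})$ converges weakly in $l^\infty(\mathcal{F})$ to a tight Brownian bridge $\mathbb{G}(f)$, jointly with the conditional process.

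Next I would apply Lemma D.1 of \cite{ChernozhukovFernandezValMelly13}, which gives the Hadamard differentiability of the counterfactual operator $\phi^C(F,G) = \int F(y|w)\,dG(w)$ tangentially to an appropriate set. The derivative evaluated at the perturbations $(h_1,h_2)$ returns $\int h_1(y,w)\,dG(w) + \int F(y|w)\,dh_2(w)$. Applying this to the perturbations arising above yields, after combining with the scalar correction for $n_{1,d}/n \to \alpha_{1,d}$, the two-term decomposition
\[
\mathbb{G}^{CF}_{1,d}(y) = \alpha_{1,d}^{-1}\int \mathbb{G}^{WCF}_{1,d}(y,w)\,I\{T=1,D=d\}\,dF_{W,T,D}(w,1,d) + \alpha_{1,d}^{-1}\mathbb{G}\bigl(F_{Y(0)|W,T,D}(y|\cdot,1,d)\,I\{T=1,D=d\}\bigr).
\]
Tightness and mean-zero Gaussianity then follow from the continuous mapping of a tight Gaussian process under a continuous linear functional.

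The main obstacle, and the one I would handle most carefully, is verifying the tangential set and the DKP condition needed to invoke Lemma D.1 uniformly in $y\in\mathcal{Y}^0$. This requires that the family $\{F_{Y(0)|W,T,D}(y|\cdot,1,d)I\{T=1,D=d\}: y\in\mathcal{Y}^0\}$ be uniformly bounded, totally bounded with a suitable covering number, and jointly measurable with the rectangle-indicator class generating $\hat{F}_{W,T,D}$. Uniform boundedness is automatic; the covering/entropy bound follows from compactness of $\mathcal{Y}^0$ together with uniform continuity of $y\mapsto F_{Y(0)|W,T,D}(y|w,1,d)$, which can be inferred from the continuity and boundedness of the densities $f^b_1, f^b_{1,d}$ pushed through the identification formula. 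Once this is in place, joint convergence and the functional delta method yield the stated limit in $l^\infty(\mathcal{Y}^0)$.
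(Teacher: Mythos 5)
Your proposal is correct and follows essentially the same route as the paper: both express the conditional counterfactual distribution as the joint quantity $\int \hat{F}_{Y(0)|W,T,D}(y|w,1,d)I\{T=1,D=d\}\,dP_n$ normalized by $n_{1,d}/n$, invoke the DKP class of Lemma \ref{lmm:DKP} for joint weak convergence with the empirical measure, apply the Hadamard differentiability of the counterfactual operator $\phi^{C}$ (Lemma D.1 of Chernozhukov, Fern\'andez-Val, and Melly) via the functional delta method, and handle the denominator by Slutsky/continuous mapping. Your added discussion of the entropy and measurability conditions needed to place $\{F_{Y(0)|W,T,D}(y|\cdot,1,d)I\{T=1,D=d\}\}$ in the DKP class is consistent with (and slightly more explicit than) the paper's treatment.
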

\begin{proof}
 Note that
 \[
  \sqrt{n}\hat{F}_{Y(0)|T,D}(y|1,d) = \sqrt{n}\frac{\hat{F}_{Y(0),T,D}(y,1,d)}{n_{1,d}/n}
 \]
 and
 \[
  \sqrt{n}F_{Y(0)|T,D}(y|1,d) = \sqrt{n}\frac{F_{Y(0),T,D}(y,1,d)}{\alpha_{1,d}}.
 \]
 Also we have,
 \begin{align*}
  \sqrt{n}&(\hat{F}_{Y(0),T,D}(y,1,d) - F_{Y(0),T,D}(y,1,d))\\
  =& \sqrt{n}\left(\frac{1}{n}\sum_{i=1}^n
  \hat{F}_{Y(0)|W,T,D}(y|W_i,1,d)I\{T_i = 1, D_i = d\} - \int
  F_{Y(0)|W,T,D}(y|W,1,d)I\{T = 1,D = d\} dP\right)\\
  =& \sqrt{n}\left(\int 
  \hat{F}_{Y(0)|W,T,D}(y|W,1,d)I\{T = 1, D = d\} dP_n  - \int
  F_{Y(0)|W,T,D}(y|W,1,d)I\{T = 1,D = d\} dP\right).
 \end{align*}

 From the Slutzky lemma (Theorem 18.10 (v) in \citealp{vanderVaart98}),
 Functional delta method, and the Hadamard differentiability of the operator $\phi^C(F,G)$
 (\citealp{ChernozhukovFernandezValMelly13}, Lemma D.1), we have
 \begin{align*}
  &\left(\sqrt{n}(\hat{F}_{Y(0),T,D}(y,1,d) -
  F_{Y(0),T,D}(y,1,d)),n_{1,d}/n\right) \rightsquigarrow  \left(\tilde{\mathbb{G}}(y,w,d),\alpha_{1,d}\right)
 \end{align*}
 where
 \[
  \tilde{\mathbb{G}}(y,w,d) \equiv \left(\int \mathbb{G}_{1,d}^{WCF}(y,w) dF_{W,T,D}(w,1,d) +
 \mathbb{G}(F_{Y(0)|W,T,D}(y|\cdot,1,d)\right) I\{T =
  1,D = d\}
 \]

 Therefore, continuous mapping theorem (Theorem 18.11 in
 \citealp{vanderVaart98}) implies the statement.
\end{proof}

\begin{corollary}
 Under assumptions for Theorem 3.2, and, Assumptions \ref{ass:condDGP} and \ref{ass:QRFR}, 
 \[
   \sqrt{n}\left(\hat{Q}_{Y(0)|T,D}(y|1,d) - Q_{Y(0)|T,D}(y|1,d) \right)
 \rightsquigarrow
 -\frac{\mathbb{G}^{CF}_{1,d}(Q_{Y(0)|T,D}(y|1,d))}{f_{Y(0)|T,D}(Q_{Y(0)|T,D}(y|1,d)|1,d)},
 \mbox{ in } l^{\infty}(\mathcal{Y}_0)
 \] 
 for every $d \in \{0,1\}$.
\end{corollary}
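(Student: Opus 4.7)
The plan is to obtain this as a direct application of the functional delta method to the quantile map, using the weak convergence of the counterfactual distribution function already established in the preceding theorem.

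First, the preceding theorem gives
\[
\sqrt{n}\bigl(\hat{F}_{Y(0)|T,D}(\cdot|1,d) - F_{Y(0)|T,D}(\cdot|1,d)\bigr) \rightsquigarrow \mathbb{G}^{CF}_{1,d}(\cdot)
\]
in $l^{\infty}(\mathcal{Y}_0)$, where $\mathbb{G}^{CF}_{1,d}$ is a tight mean-zero Gaussian process whose sample paths are almost surely continuous. Second, I would invoke Lemma 21.4(ii) of \citet{vanderVaart98}: at any distribution function $F$ that is continuously differentiable on an interval with continuous, strictly positive density $f$, the inverse (quantile) map $\phi:F \mapsto F^{-1}$ is Hadamard differentiable tangentially to the set of functions continuous on $F$'s range, with derivative $h \mapsto -(h/f)\circ F^{-1}$. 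Combining the two via the functional delta method (Theorem 20.8 of \citealp{vanderVaart98}) yields the stated limit
\[
-\frac{\mathbb{G}^{CF}_{1,d}(Q_{Y(0)|T,D}(\cdot|1,d))}{f_{Y(0)|T,D}(Q_{Y(0)|T,D}(\cdot|1,d)|1,d)}.
\]
The tangential continuity required by Lemma 21.4(ii) is supplied by the a.s.\ continuity of the sample paths of $\mathbb{G}^{CF}_{1,d}$ guaranteed by the tightness statement.

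The main obstacle is not the delta-method step itself but verifying the density regularity needed to apply Lemma 21.4(ii), i.e., that $f_{Y(0)|T,D}(\cdot|1,d)$ exists, is continuous, and is strictly positive on the relevant subset of $\mathcal{Y}_0$. To obtain this, I would differentiate the identification formula of Theorem \ref{thm:IdRS} under the integral over $W$: Assumption \ref{ass:UQ} gives a continuous, strictly increasing conditional CDF for $Y_b$, Assumption \ref{ass:condDGP} parts (2)--(3) supply uniformly bounded, uniformly continuous conditional densities for both $Y_b|W,T$ and $Y_b|W,T,D$, and the mixture over $W$ (with respect to $F_{W|T,D}(\cdot|1,d)$) preserves these properties under dominated convergence. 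Positivity of the density on the interior of the support follows from the corresponding positivity of the conditional densities of $Y_b$ combined with the monotonicity of the quantile transformation; on boundary points or flat regions of $F_{Y(0)|T,D}$, one restricts the index set to a compact subinterval on which the density is bounded away from zero, which is the standard caveat for uniform quantile convergence.
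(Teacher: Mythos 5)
Your proposal is correct and follows essentially the same route as the paper, whose entire proof is the one-line invocation of the Hadamard differentiability of the quantile map (Lemma 21.4(ii) of \citealp{vanderVaart98}) combined with the functional delta method applied to the weak convergence of $\hat{F}_{Y(0)|T,D}(\cdot|1,d)$ from the preceding theorem. Your additional discussion of verifying the density regularity of $f_{Y(0)|T,D}(\cdot|1,d)$ via the identification formula is a useful elaboration of conditions the paper leaves implicit, but it does not constitute a different argument.
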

\begin{proof}
 Immediate from the Hadamard-differentiability of the quantile function (Lemma 21.4 (ii), \citealp{vanderVaart98}).
\end{proof}

\begin{corollary}
 Under assumptions for Theorem 3.2, and, Assumptions \ref{ass:condDGP} and \ref{ass:QRFR}, 
 \[
  \sqrt{n}\left(\int y d\hat{F}_{Y(0)|T,D}(y|1,d) - \int y
 F_{Y(0)|T,D}(y|1,d)\right)\rightsquigarrow \int y d\mathbb{G}_{1,d}^{CF}(y)
 \]
\end{corollary}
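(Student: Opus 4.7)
My plan is to obtain the result as a direct application of the functional delta method to the preceding theorem, using the fact that the mean functional $\phi: F \mapsto \int y\,dF(y)$ is a continuous linear operator (hence Hadamard differentiable) on the space of distribution functions with compact support.

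First I would invoke Assumption \ref{ass:condDGP}.1, which ensures $\mathcal{Y}^0$ is a compact subset of $\mathbb{R}$, so $\mathcal{Y}^0 \subset [a,b]$ for some finite $a < b$. On this compact support, integration by parts yields
\[
\int y\,dF(y) = b - \int_a^b F(y)\,dy
\]
for every distribution function $F$ supported in $[a,b]$. Consequently, writing the mean as $\phi(F) = b - \int_a^b F(y)\,dy$, the map $\phi: l^{\infty}([a,b]) \to \mathbb{R}$ is a continuous linear functional in the supremum norm, and is therefore Hadamard differentiable at every $F$ tangentially to all of $l^{\infty}([a,b])$, with derivative $h \mapsto -\int_a^b h(y)\,dy$. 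Via a formal integration by parts, this derivative can be equivalently expressed as $h \mapsto \int_a^b y\,dh(y)$, matching the notation in the statement.

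Next I would apply the functional delta method (van der Vaart (1998), Theorem 20.8) to the weak convergence established in the preceding theorem,
\[
\sqrt{n}\bigl(\hat{F}_{Y(0)|T,D}(\cdot|1,d) - F_{Y(0)|T,D}(\cdot|1,d)\bigr) \rightsquigarrow \mathbb{G}^{CF}_{1,d}(\cdot) \quad \text{in } l^{\infty}(\mathcal{Y}^0),
\]
composed with the Hadamard-differentiable map $\phi$. This yields
\[
\sqrt{n}\bigl(\phi(\hat{F}_{Y(0)|T,D}(\cdot|1,d)) - \phi(F_{Y(0)|T,D}(\cdot|1,d))\bigr) \rightsquigarrow \int y\,d\mathbb{G}^{CF}_{1,d}(y),
\]
which is precisely the stated conclusion. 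Tangential conditions are automatic here because the derivative functional is continuous on all of $l^{\infty}([a,b])$ and $\mathbb{G}^{CF}_{1,d}$ is a tight zero-mean Gaussian process in that space.

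I do not anticipate a substantive obstacle: the only nontrivial ingredient, namely the functional weak convergence of $\hat{F}_{Y(0)|T,D}$, has been delivered by the preceding theorem, and compactness of the outcome support disposes of any integrability concerns at the boundary. The mildest care is needed in interpreting $\int y\,d\mathbb{G}^{CF}_{1,d}(y)$ as a stochastic Riemann--Stieltjes integral, which is well defined through the integration-by-parts identification $\int y\,d\mathbb{G}^{CF}_{1,d}(y) = b\,\mathbb{G}^{CF}_{1,d}(b) - \int_a^b \mathbb{G}^{CF}_{1,d}(y)\,dy$, given that $\mathbb{G}^{CF}_{1,d}$ has (almost surely) bounded sample paths on the compact support.
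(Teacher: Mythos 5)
Your proof is correct and follows essentially the same route as the paper's: Hadamard differentiability of the linear mean functional followed by the functional delta method applied to the weak convergence of $\hat{F}_{Y(0)|T,D}(\cdot|1,d)$ established in the preceding theorem. Your integration-by-parts reformulation $\phi(F) = b - \int_a^b F(y)\,dy$ actually supplies the sup-norm continuity that the paper's one-line verification of the derivative (``$\int y\,dh_t \to \int y\,dh$'') leaves implicit, so if anything your write-up is the more careful of the two.
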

\begin{proof}
 Let $\mu(F) = \int y dF(y)$ be the mapping $\mu: F \rightarrow
 \mathbb{R}$. Let $h_t \rightarrow h$ as $t \rightarrow 0$ and let $F_t
 = F + t h_t$. Then it is Hadamard differentiable at $F$ tangentially to
 a set of functions $h$ such that
 \[
  \frac{1}{t}\left(\int y dF_t - \int y dF \right) = \int y dh_t
 \rightarrow_{t \rightarrow \infty} \int ydh.
 \]
 Since
 \[
  \sqrt{n}(\hat{F}_{Y(0)|T,D}(y|1,d) - F_{Y(0)|T,D}(y|1,d))
 \rightsquigarrow \mathbb{G}_{1,d}^{CF}(y),
 \]
 the statement holds.
\end{proof}

Given the asymptotic normality, I propose an inference based on a
bootstrap procedure. Suppose the bootstrap draws are exchangeable.
\begin{assumption}[Exchangeable bootstrap] \label{ass:EXBt}
 Let $(w_{1},\ldots,w_n)$ is an exchangeable, non-negative random
 vector independent of the data $\{Y_{i},Y_{b,i},D_i,W_i,T_i\}_{i=1}^n$ such
 that for some $\epsilon > 0$,
 \[
  E[w_{1}^{2 + \epsilon}] < \infty, n^{-1} \sum_{i=1}^n (w_i -
 \bar{w})^2 \rightarrow^{\mathbb{P}} 1, \bar{w} \rightarrow^{\mathbb{P}} 1
 \]
 where $\bar{w} = n^{-1} \sum_{i=1}^n w_i$, and $\mathbb{P}$ is an outer
 probability measure with respect to $P$.\footnote{For an arbitrary
 maps $D: \Omega \mapsto \mathbb{D}$ on a metric space $\mathbb{D}$ and
 a bounded function $f:\mathbb{D} \mapsto \mathbb{R}$, 
 $\mathbb{P}f(D) = \inf \{PU : U:
 \Omega \mapsto \mathbb{R} \mbox{ is measurable }, U \geq f(D), PU
 \mbox{ exists.}\}$}
\end{assumption}

Let
$\hat{F}^{*}_{Y_b|W,T,D}(y|w,1,d),\hat{F}^{*}_{Y_b|W,T}(y|w,1),\hat{F}^{*}_{Y(0)|W,T}(y|w,0)$
be the bootstrapped version of the estimators using
\[
 \hat{\beta}^{*,k}(y) = \arg\max_{b \in \mathbb{R}^{dW}} \sum_{i=1}^n
 w_i I\{K_i = k\}\left[I\{Y_i \leq y\}\log \Lambda(W_i'b)  + I\{Y_i >
 y\}\log(1 - \Lambda(W_i'b))\right]
\]
and let
\[
 \hat{F}_{W,T,D}^*(w,1,d) = (n^*)^{-1} \sum_{i=1}^n w_i I\{W_i \leq
 w,T_i = 1, D_i = d\}, w \in \mathcal{W}
\]
where $n^* = \sum_i^{n} w_i$.
\begin{corollary} 
 Let
 \[
  \hat{G}^*(f) = \frac{1}{\sqrt{n}}\sum_{i=1}^n (w_{i} - \bar{w}) f
 \]
 for $f \in \mathcal{F}$, and let
 \[
  \hat{G}^{*,k}(y,w) = \sqrt{n}\left(\hat{F}^*_{Y|W,K}(y|w,k) - \hat{F}_{Y|W,K}(y|w,k) \right)
 \]
 Then under assumptions for Theorem 3.2, and, Assumptions \ref{ass:condDGP}, \ref{ass:QRFR}, and
 \ref{ass:EXBt}
 \[
  \left(\hat{G}^{*,1,d}(y,w),\hat{G}^{*,1}(y,w),\hat{G}^{*,0}(y,w),\hat{G}^*(f)\right)
 \rightsquigarrow^{\mathbb{P}} \left(\mathbb{G}^{1,d}(y,w),\mathbb{G}^1(y,w),\mathbb{G}^0(y,w),\mathbb{G}(f)\right)
 \]
 in $l^{\infty}(\mathcal{Y}_b^{1,d} \times \mathcal{W}_d \times
 \mathcal{Y}_b^{1} \times \mathcal{W} \times
 \mathcal{Y}_{0} \times \mathcal{W}) \times
 l^{\infty}(\mathcal{F})$. Therefore, 
 \[
  \sqrt{n}\left(\hat{F}^{*}_{Y(0)|T,D}(y|1,d) - \hat{F}_{Y(0)|T,D}(y|1,d) \right)
 \rightsquigarrow^{\mathbb{P}} \mathbb{G}^{CF}_{1,d}(y).
 \]
\end{corollary}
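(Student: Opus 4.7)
The plan is to mirror the chain of arguments used for the non-bootstrapped processes in Lemma A.3 and Theorems A.5 and A.6, replacing each ordinary weak convergence step with its exchangeable bootstrap counterpart, and then invoking the functional delta method for the bootstrap at each Hadamard-differentiable map. Throughout I will rely on the Donskerness of the stacked score class established in Lemma A.2 and the DKP class $\mathcal{F}$ constructed in Lemma A.4.

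First, I would handle the Z-process step. Define the bootstrapped score processes $\hat{\Psi}^{*,k}(y,\beta) = n^{-1}\sum_i w_i \psi^{k}_{y,\beta}(Y_i,W_i,T_i,D_i)$ for each $k \in \{1,0,\{1,d\}_{d\in\{0,1\}}\}$, together with the bootstrapped empirical process $\hat{G}^*(f) = n^{-1/2}\sum_i (w_i - \bar{w}) f$ indexed by $\mathcal{F}$. By the exchangeable multiplier CLT (e.g.\ Theorem 3.6.13 in van der Vaart and Wellner, as used in Chernozhukov, Fern\'andez-Val, and Melly 2013, Lemma E.6), Assumption \ref{ass:EXBt} together with the Donskerness of the score class and of $\mathcal{F}$ implies that these bootstrapped processes converge unconditionally on the data, in outer probability, to the same Gaussian limits as their non-bootstrap counterparts. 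Concretely, $\sqrt{n}(\hat{\Psi}^{*,k} - \hat{\Psi}^{k}) \rightsquigarrow^{\mathbb{P}} \mathbb{G}(\psi^{k}_{y,\beta^{k}(y)})$ jointly across $k$ and jointly with $\hat{G}^*(f) \rightsquigarrow^{\mathbb{P}} \mathbb{G}(f)$.

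Second, I would transfer this through the approximate Z-map and the link-function map exactly as in Steps 2 and 3 of the proof of Lemma A.3. The approximate Z-maps $\phi^k$ are Hadamard-differentiable at $\Psi^{k}(y,\cdot)$ with derivative given by $-J_k^{-1}(y)$ applied to the score process, and the map $\nu^k(b)(w,y) = \Lambda(w'b(y))$ is Hadamard-differentiable tangentially to continuous functions. Invoking the functional delta method for the bootstrap (Theorem 3.9.11 in van der Vaart and Wellner, cf.\ Chernozhukov, Fern\'andez-Val, and Melly 2013, Lemma E.7), I obtain the first displayed conclusion of the corollary: the joint bootstrap convergence of $(\hat{G}^{*,1,d},\hat{G}^{*,1},\hat{G}^{*,0},\hat{G}^*)$ to $(\mathbb{G}^{1,d},\mathbb{G}^{1},\mathbb{G}^{0},\mathbb{G})$ in outer probability.

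Third, to reach the statement about $\hat{F}^*_{Y(0)|T,D}$, I would chain two more Hadamard-differentiable operators, exactly as in Theorems A.5 and A.6. The composition map $m(F^b_{1,d},F^b_1,F^0) = F^b_{1,d} \circ Q^b_1 \circ F^0$ is Hadamard-differentiable by Lemma A.4, which delivers bootstrap weak convergence of $\sqrt{n}(\hat{F}^*_{Y(0)|W,T,D}(y|w,1,d) - \hat{F}_{Y(0)|W,T,D}(y|w,1,d))$ to $\mathbb{G}^{WCF}_{1,d}(y,w)$ in probability. Integrating out $W$ via the counterfactual operator $\phi^C(F,G) = \int F(y|w)dG(w)$, which is Hadamard-differentiable by Lemma D.1 of Chernozhukov, Fern\'andez-Val, and Melly (2013), together with the bootstrap convergence of $\hat{F}^*_{W,T,D}$ and of $n^*/n \to^{\mathbb{P}} 1$ (from Assumption \ref{ass:EXBt}), another application of the bootstrap functional delta method and Slutsky yields the claimed bootstrap limit $\mathbb{G}^{CF}_{1,d}(y)$.

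The main obstacle, and the step that requires the most care, is the joint bootstrap convergence in Step 1: one must verify simultaneously that (i) the stacked score class from Lemma A.2 and the DKP class $\mathcal{F}$ from Lemma A.4 can be combined into a single $P$-Donsker class with square-integrable envelope, so that the exchangeable multiplier CLT applies jointly, and (ii) the subsample sizes $n_{k}/n$ have bootstrap analogues $n^*_k/n$ that converge in probability to the same limits $\alpha_k$ used in the covariance kernels. Once this joint bootstrap Donsker step is in place, the remaining functional-delta-method arguments are essentially mechanical repetitions of the ones already used to prove Lemma A.3 and Theorems A.5 and A.6.
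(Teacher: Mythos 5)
Your proposal is correct and follows essentially the same route as the paper, whose own proof is a one-line citation of Theorem 3.6.13 of van der Vaart and Wellner (1996) as employed in Chernozhukov, Fern\'andez-Val, and Melly (2013), Theorems 5.1 and 5.2 --- i.e., exactly the exchangeable multiplier CLT plus repeated applications of the bootstrap functional delta method through the same chain of Hadamard-differentiable maps that you describe. Your sketch simply makes explicit the steps that the paper's citation leaves implicit (including the care needed for joint Donskerness of the combined score and DKP classes), so there is nothing substantively different to compare.
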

\begin{proof}
 Argument follows from Theorem 3.6.13 of \cite{vanderVaartWellner96} as
 employed in \cite{ChernozhukovFernandezValMelly13}, Theorem 5.1 and 5.2.
\end{proof}

\subsection{For clustered samples}
Now consider the asymptotic property of the clustered version of
estimator. First, I assume the following data generating process:
\begin{assumption}[Cluster DGP] \label{ass:condDGPcluster}
 Let $c = 1,\ldots, \bar{C}$ denote clusters, and an index $(i,c)$
 denote individual $i$'s observation in a cluster $c$. 

 Suppose the following restrictions:
 \begin{enumerate}
  \item  \[
	 \{N_c, \{Y_{i;c},Y_{b,i;c},D_{i;c},W_{i;c},T_c\}_{i \geq 1}\}_{c \geq 1}
	 \]
	 is exchangeable, namely, for any permutation $\pi$ of $\{1,\ldots,\bar{C}\}$,
	 \begin{align*}
	  &\{N_c, \{Y_{i;c},Y_{b,i;c},D_{i;c},W_{i;c},T_c\}_{i \geq 1}\}_{c \geq 1}\\
	  &\sim \{N_{\pi(c)},
	  \{Y_{i;\pi(c)},Y_{b,i;\pi(c)},D_{i;\pi(c)},W_{i;\pi(c)},T_{\pi(c)}\}_{i \geq 1}\}_{c
	  \geq 1}.  
	 \end{align*}
	 and clusters $\{N_c, \{Y_{i;c},Y_{b,i;c},D_{i;c},W_{i;c},T_c\}$ are
	 independent across $c$. 
  \item  $E[N_1] > 0, E[N_1^2] < \infty$ and $\bar{C} \rightarrow \infty$.
  \item $T$ is supported for $\{0,1\}$ and $D$ has a finite support $\{0,1\}$.
  \item the same support and density conditions as in Assumption \ref{ass:condDGP}.
  \item \[
	 \hat{\alpha}_{1,d} \equiv \frac{1}{\bar{C}}\sum_{1 \leq c \leq \bar{C}}\frac{1}{N_c}\sum_{i=1}^{N_c} I\{T_c =
 1,D_{i;c}=d\} \rightarrow \alpha_{1,d}
	\]
	where
	\[
	 \alpha_{1,d} \equiv E\left[\frac{1}{N_1}\sum_{i=1}^{N_1}
	I\{T_c = 1,D_{i;c} = d\} \right] > 0
	\]
	for each $d \in \{0,1\}$ , and $\hat{\alpha}_0 = \frac{1}{\bar{C}}\sum_{1 \leq c \leq
 \bar{C}} I\{T_c = 0\} \rightarrow \alpha_0
 \equiv E[I\{T_1 = 0\}] > 0$.
 \end{enumerate}
\end{assumption}
Furthermore, I modify the moment condition as follows:
\begin{assumption}[Distribution Regression and Cluster Moment Condition] \label{ass:QRFRcluster}
 Suppose we have
 \[
  F_{Y|W,T}(y|w,0) = \Lambda(w'\beta^0(y)),
 \]
 for some link function $\Lambda(\cdot)$ for all $y, w$. For this
 specification, assume that the minimal eigenvalue of 
 \[
  J_{0}(y) \equiv E\left[\sum_{i=1}^{N(1)}\frac{\lambda(W_i'\beta^{0}(y))^2}{\Lambda(W_i'\beta^0(y)) [1 -
 \Lambda(W_i'\beta^0(y))]}W_i W_i' \right]
 \]
 is bounded away from zero uniformly over $y$, where $\lambda$ is the
 derivative of $\Lambda$. Assume also that the analogous restriction holds for
 \[
  F_{Y_b|W,T}(y|w,1) = \Lambda(w'\beta^{1}(y))
 \]
 and 
 \[
  F_{Y_b|W,T,D}(y|w,1,d) = \Lambda(w'\beta^{1,d}(y))
 \]
 for each $d \in \{0,1\}$. Assume further that $E\left[N_1 \sum_{i=1}^{N_1} \|W_i\|^{2}\right] < \infty$.
\end{assumption}
\begin{remark}
 The additional assumption of $E[N_1 \sum_{i=1}^{N_1} \|W_i\|^2] <
 \infty$ is replacing the square integrability of the envelope function
 $const \cdot \|W_i\|$ for a class of Z-maps as the first order
 conditions of the semiparameteric distribution regressions. 

 From the linearity of the expectation operator, it is sufficient to
 have the cluster size finite $N_1 < \infty$ as well as the previous
 moment condition $E[\|W_i\|^2] < \infty$ for every $i \in N_1$.
\end{remark}

Let
\[
 \hat{G}^C(f) = \sqrt{\bar{C}}\left(\frac{1}{\bar{C}} \sum_{1 \leq j
 \leq \bar{C}} \sum_{i=1}^{N_j} f(W_{j,i},T_{j},D_{j,i}) - E\left[\sum_{i=1}^{N_1} f(W_{1,i},T_{1},D_{1,i}) \right] \right)
\]
and
 \begin{align*}
  \hat{G}^{C,1,d}(y_{1,d}w) =&
  \sqrt{\bar{C}}\left(\hat{F}^{C}_{Y_b|W,T,D}(y^{1,d}|w,1,d) -
  F_{Y_b|W,T,D}(y^{1,d}|w,1,d)\right), \forall y^{1,d} \in \mathcal{Y}_b^{w,\{1,d\}}\\
  \hat{G}^{C,1}(y^1,w) =& \sqrt{\bar{C}}\left(\hat{F}^{C}_{Y_b|W,T}(y^1|w,1) -
  F_{Y_b|W,T}(y^1|w,1)\right), \forall y^1 \in \mathcal{Y}_b^{w,1} \\
  \hat{G}^{C,0}(y_0,w) =& \sqrt{\bar{C}}\left(\hat{F}^{C}_{Y(0)|W,T}(y^0|w,0) -
  F_{Y(0)|W,T}(y^0|w,0)\right), \forall y^0 \in \mathcal{Y}^{w,0},
 \end{align*}
where
\[
 \hat{F}^C_{Y|W,K}(y|w,k) = \Lambda (w'\hat{\beta}^{C,k}(y))
\]
and
\begin{align*}
 \hat{\beta}^{C,k}(y) = \arg \max_{b \in \mathbb{R}^{dW}} &\sum_{j =
 1}^{\bar{C}}\sum_{i = 1}^{N_j}\left\{ \right.
 \left[I\{Y_{j,i} \leq y\}\log[\Lambda(w'b)] \right]\\
 &+ \left. \left[I\{Y_{j,i} > y\}\log[1 - \Lambda(w'b)] \right]\right\}I\{K_{j,i}
 = k\}
\end{align*}
for each $y \in \mathcal{Y}_k$.

\begin{corollary}
 Under assumptions for Theorem 3.2, and, Assumptions \ref{ass:condDGPcluster} and \ref{ass:QRFRcluster}, we have
 \[
  \left(\hat{G}^{C,1}(y,w),\hat{G}^{C,1,d}(y,w),\hat{G}^{C,0}(y,w)\right)
 \rightsquigarrow \left(\mathbb{G}^{C,1}(y,w),\mathbb{G}^{C,1,d}(y,w),\mathbb{G}^{C,0}(y,w) \right)
 \]
 in $l^{\infty}(\mathcal{Y}_b\times\mathcal{W} \times
 \mathcal{Y}_b^{1,d}\times\mathcal{W}_d \times
 \mathcal{Y}^0\times\mathcal{W})$, where $\mathbb{G}^k(y,w)$ for every
 $k \in \{\{1,d\}_{d \in \{0,1\}},1,0\}$ are tight zero-mean Gaussian processes with
 each covariance function of the form
 \begin{align*}
  V^C_{k,k}(y,w,\tilde{y},\tilde{w})&
  = \alpha_{k}^{-1} w'J_{k}^{-1}(y)\lambda_k(w'\beta^k(y)) \Sigma_k(y,\tilde{y})
  \lambda_{k}(\tilde{w}'\beta^{k}(\tilde{y}))
  J_{k}^{-1}(\tilde{y}) \tilde{w}\\
  V_{1,\{1,d\}}(y,w,\tilde{y},\tilde{w})&
  = \alpha_{1,d}^{-1} w'J_{1}^{-1}(y)\lambda_1(w'\beta^1(y)) \Sigma_{1,\{1,d\}}(y,\tilde{y})
  \lambda_{1,d}(\tilde{w}'\beta^{1,d}(\tilde{y}))
  J_{1,d}^{-1}(\tilde{y}) \tilde{w}
 \end{align*}
 where
 \begin{align*}
    \Sigma^C_k(y,\tilde{y}) =& E[\sum_{i=1}^{N(1)}I\{K_{1,i} = k\}W_{1,i}H(W_{1,i}'\beta^k(y))\\
  \times &\{\min\{\Lambda(W_{1,i}'\beta^k(y)),\Lambda(W_{1,i}'\beta^k(\tilde{y}))\} -
  \Lambda(W_{1,i}'\beta^k(y))\Lambda(W_{1,i}'\beta^k(\tilde{y}))\}\\
  & \times H(W_{1,i}'\beta^k(\tilde{y}))]W_{1,i}',\\
  \Sigma^{C}_{1,\{1,d\}}(y,\tilde{y}) =& E[\sum_{i=1}^{N(1)} I\{T_{1} = 1,D_{1,i} = d\}W_{1,i}H(W_{1,i}'\beta^1(y))\\
  \times &\{\min\{\Lambda(W_{1,i}'\beta^1(y)),\Lambda(W_{1,i}'\beta^{1,d}(\tilde{y}))\} -
  \Lambda(W_{1,i}'\beta^{1}(y))\Lambda(W_{1,i}'\beta^{1,d}(\tilde{y}))\}\\
  & \times H(W_{1,i}'\beta^{1,d}(\tilde{y}))W_{1,i}']
 \end{align*}
 for each $k \in \{\{1,d\}_{d \in \{0,1\}},1,0\}$
 and $V^C_{1,0}(y,w,\tilde{y},\tilde{w}) =
 V^C_{\{1,d\},0}(y,w,\tilde{y},\tilde{w}) = 0$, and we have
 \[
  \hat{G}^C(f) \rightsquigarrow \mathbb{G}^C(f)
 \]
 in $l^{\infty}(\mathcal{F})$ for suitably measurable $\mathcal{F}$ specified earlier where
 $\mathbb{G}^C(f)$ is a tight zero-mean Gaussian process with the
 covariance kernel
 \[
  V^C(f_1,f_2) = Cov\left(\sum_{i=1}^{N(1)}
 f_1(W_{1,i},D_{1,i},T_{1}),\sum_{i=1}^{N(1)}
 f_2(W_{1,i},D_{1,i},T_{1}) \right)
 \]
\end{corollary}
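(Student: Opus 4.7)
The plan is to replicate the three-step argument used to prove Lemma A.3 (the iid weak convergence result) but replace each invocation of classical empirical process theory for iid data with its cluster-robust counterpart from \cite{DaveziesDHaultfoeuilleGuyonvarch18}. Since the clusters themselves are iid under Assumption \ref{ass:condDGPcluster} and it is only the within-cluster observations that are dependent, the natural tactic is to re-index the score and moment functions at the cluster level: for each $q, k$, define cluster-level scores $\tilde\psi^{k,q}_{y,\beta^k}(Z_c) \equiv \sum_{i=1}^{N_c}\psi^{k,q}_{y,\beta^k}(Y_{i;c},W_{i;c},T_c,D_{i;c})$, where $Z_c$ denotes the whole cluster $c$. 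The clusters $Z_c$ are iid with finite expectation, so the cluster-level empirical process is formally an iid process indexed by a function class whose envelope is $const\cdot\sum_{i=1}^{N_c}\|W_{i;c}\|$.

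First I would re-establish a cluster-version of Lemma A.2 (the Donsker lemma): the relevant classes are still built from a VC class $\mathcal{F}_1=\{W'\beta\}$ and an indicator class $\mathcal{F}_{2,k}=\{I\{Y\leq y\}\}$, and summing over a cluster preserves the uniform entropy bound up to a constant. The cluster-summed envelope $G_c = const\cdot \sum_{i=1}^{N_c}\|W_{i;c}\|$ has $E[G_c^2] \leq const\cdot E[N_c\sum_{i=1}^{N_c}\|W_{i;c}\|^2] < \infty$ by the strengthened moment condition in Assumption \ref{ass:QRFRcluster}, which is exactly why that assumption was chosen. Thus the cluster-level class is $P$-Donsker, yielding
\[
\sqrt{\bar{C}}(\hat\Psi^{C,k} - \Psi^{C,k}) \rightsquigarrow \mathbb{G}^C(\tilde\psi^k_{y,\beta^k(y)}),
\]
jointly across $k\in\{1,0,\{1,d\}\}$, with covariance kernels of the form $V^C(f_1,f_2)=\mathrm{Cov}(\sum_{i}f_1(\cdot_{1,i}),\sum_{i}f_2(\cdot_{1,i}))$ inherited from cluster-level iid covariance.

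Second I would run the same two-step functional delta method as in the iid proof. The approximate Z-maps $\phi^k$ defined in the proof of Lemma A.3 are Hadamard differentiable at $\beta^k$ with derivative $-J_k^{-1}$, and the composition map $\nu^k(b)(w,y)=\Lambda(w'b(y))$ is Hadamard differentiable with derivative $\alpha\mapsto \lambda(w'\beta^k(y))w'\alpha(y)$. Neither Hadamard derivative depends on whether the data are iid or clustered; only the limit of the underlying empirical process changes. Combining via the stacking rule and the chain rule (Lemmas B.2 of \cite{ChernozhukovFernandezValMelly13} and 20.9 of \cite{vanderVaart98}) gives the joint weak convergence of $(\hat{G}^{C,1},\hat{G}^{C,1,d},\hat{G}^{C,0})$ to the stated Gaussian processes, with covariance kernels determined by sandwiching the cluster-level $\Sigma^C$ between the inverse Hessians $J_k^{-1}$ and link derivatives $\lambda_k$, which reproduces the formulas $V^C_{k,k}$ and $V^C_{1,\{1,d\}}$ in the statement. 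The claim $\hat{G}^C(f)\rightsquigarrow \mathbb{G}^C(f)$ in $l^\infty(\mathcal{F})$ follows from the same cluster-level Donsker argument applied directly to the class $\mathcal{F}$ identified in Lemma A.4 (which is still DKP), noting that indicators of rectangles in $\bar{\mathbb{R}}^{dW}$ and the conditional distribution functions $F_{Y|W,K}(y|\cdot,k)$ give a cluster-summed envelope that is bounded by $N_c$, itself square-integrable by Assumption \ref{ass:condDGPcluster}.

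The principal obstacle is verifying the cluster-level uniform entropy integral and measurability conditions cleanly enough to invoke \citeauthor{DaveziesDHaultfoeuilleGuyonvarch18}'s \citeyearpar{DaveziesDHaultfoeuilleGuyonvarch18} cluster-robust Donsker theorem rather than the classical one; in particular, one must check that cluster-summing a VC-subgraph class does not inflate the entropy beyond a constant factor and that the strengthened envelope condition $E[N_1\sum_{i=1}^{N_1}\|W_{1,i}\|^2]<\infty$ is precisely the right replacement for $E\|W\|^2<\infty$ in the iid case. Once this step is in place, the rest of the argument is mechanical: the functional delta method step, the choice of $\mathcal{F}$, and the final covariance computation all carry over verbatim from the iid proof, with only the underlying covariance kernel $\Sigma_k$ being replaced by its cluster analogue $\Sigma^C_k$.
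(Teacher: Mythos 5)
Your proposal is correct and follows essentially the same route as the paper: the paper's proof likewise reduces everything to verifying the conditions of the cluster-robust Donsker theorem (Theorem 3.1) of \cite{DaveziesDHaultfoeuilleGuyonvarch18} --- exchangeability of clusters, pointwise measurability, the strengthened envelope moment condition $E[N_1\sum_{i=1}^{N_1}\|W_{1,i}\|^2]<\infty$ for the cluster-summed envelope $const\cdot\|W_i\|$, and the uniform entropy integral inherited from Lemma \ref{lmm:Donsker} --- and then concludes by the same functional delta method steps as in the iid case. Your write-up merely makes explicit the cluster-level re-indexing and the Cauchy--Schwarz step behind the envelope condition that the paper leaves implicit.
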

\begin{proof}
 It is sufficient to verify conditions for Theorem 3.1 in
 \cite{DaveziesDHaultfoeuilleGuyonvarch18}. Assumption 1 is guaranteed
 by the exchangeable cluster assumption. Assumption 2 is assumed and
 can be verified to be pointwise measurable. For Assumption 3 in
 \cite{DaveziesDHaultfoeuilleGuyonvarch18}, it is sufficient to show
 that the envelope function of the class of Z-functions $F$ satisfies
 \[
  E\left[N_1 \sum_{i=1}^{N_1} F(Y_{1,i},W_{1,i},D_{1,i},T_{1})^2\right] < \infty
 \]
 In fact, the envelope function is $const \cdot \|W_i\|$ and therefore it
 is sufficient to have
 \[
  E\left[N_1 \sum_{i=1}^{N_1} \|W_{1,i}\|^2\right] < \infty
 \]
 which is guaranteed by the assumption. Finally, the finiteness of the
 uniform entropy integral is also guaranteed by Lemma
 \ref{lmm:Donsker} so that all the assumptions hold.

 As the weak convergences of the Z-function processes are guaranteed,
 Functional Delta method guarantees the statement to hold.
\end{proof}

\section{Proofs for the statements in the main text}

\begin{proof}[Proof for Theorem 3.2]
 The counterfactual cdf $F_{Y(0)|W,T,D}(y|w,1,d)$ is expressed as
 \begin{align*}
  F_{Y(0)|W,T,D}(y|w,1,d) =& E[1\{Y(0) \leq y\}|W=w,T=1,D=d]\\
  =& E[1\{F_{Y(0)|W,T}(Y(0)|w,1) \leq F_{Y(0)|W,T}(y|w,1)\}|W=w,T=1,D=d]\\
  =& E[1\{F_{Y(0)|W}(Y(0)|w) \leq F_{Y(0)|W}(y|w)\}|W=w,T=1,D=d]\\
  =& E[1\{U_{0,w} \leq F_{Y(0)|W}(y|w)\}|W=w,T=1,D=d]
 \end{align*}
 in terms of the conditional latent rank $U_{0,w}$. In the second and third
 equality, I use the assumption that $Y(0) \indep T| W$ and therefore,
 the support of $Y(0)|W = w,T = 1,D = d$ is the subset of the support of
 $Y(0)|W = w, T = 1$ that equals to the support of $Y(0)|W = w$. The conditional rank
 similarity assumption implies
 \begin{align*}
  E&[1\{U_{0,w} \leq F_{Y(0)|W}(y|w)\}|W=w,T=1,D=d] \\
  =& E[1\{U_{b,w} \leq F_{Y(0)|W}(y|w)\}|W=w,T=1,D=d]\\
  =& E[1\{F_{Y_b|W}(Y_b|w) \leq F_{Y(0)|W}(y|w)\}|W=w,T=1,D=d].
 \end{align*}
From the unique quantile transformation with $Y_b$ and $Y_b \indep T| W$, we have
\begin{align*}
 E&[1\{F_{Y_b|W}(Y_b|w) \leq F_{Y(0)|W}(y|w)\}|W=w,T=1,D=d]\\
 =&E[1\{Y_b \leq Q_{Y_b|T,W}(F_{Y(0)|W}(y|w)|1,w)\}|W=w,T=1,D=d]\\
 =& F_{Y^b|W,T,D}(Q_{Y_b|T,W}(F_{Y(0)|T,W}(y|0,w)|1,w)|w,1,d).
\end{align*}
\end{proof}

\begin{proof}[Proof of Theorem 6.2]
 For the proof, I omit the covariates as the same argument goes through
 under the common support assumption. Also, I omit $T$ from most of the
 conditioning variables using the randomization assumption. From the monotonicity, we have
 \begin{align*}
  F_{Y(0)|D = 0,T = 1}&(y) = P(Y(0) \leq y|D(1) = 0) = P(Y(0,0) \leq y|D(1) = 0,D(0) = 0)\\
  =& P(U_{00} \leq F_{Y(0,0)|D = 0, T =0}(y)|D(1) = 0,D(0) = 0)
 \end{align*}
 for any value of $y$ in the support of $Y(0)|D(1) = 0$ because the
 support of $Y(0)|D(1) = 0$ is the subset of the support of $Y(0)|D(0) = 0$
 from the monotonicity. From the rank similarity, we have
 \begin{align*}
  P(U_{00} &\leq F_{Y(0,0)|D = 0,T = 0}(y)|D(1) = 0,D(0) = 0)\\
  =& P(U_{b0} \leq F_{Y(0,0)|D = 0, T = 0}(y)|D(1) = 0,D(0) = 0).
 \end{align*}
 The continuity of the baseline varible allows us to invert back in
 the level
 \begin{align*}
  P(U_{b0} &\leq F_{Y(0,0)|D = 0,T = 0}(y)|D(1) = 0,D(0) = 0)\\
   =& P(Y_b \leq Q_{Y_b|D=0,T=0}(F_{Y(0,0)|D = 0,T = 0}(y))|D(1) = 0,D(0) = 0)\\
  =& P(Y_b \leq Q_{Y_b|D=0,T = 0}(F_{Y(0,0)|D = 0,T = 0}(y))|D(1) = 0),
\end{align*}
 and the last equality follows from the monotonicity again.
\end{proof}

\section{Additional discussions}

\subsection{Stability of the treatment and the rank similarity assumption}
In Section 6, I show the identification with \textit{stable} two-sided noncompliance cases. In Corollary 6.1, I use the result below which shows that the conditional rank similarity for each control outcome $Y(0,1)$ and $Y(0,0)$, and the \textit{stable} take-up behavior give the conditional rank similarity for $Y(0)$:
\begin{lemma} %\ref{lem:stable}
Suppose Assumptions 3.1, 3.2, 3.3, 3.5, 6.1 and %\ref{ass:RSTS} and 
\[
  P(D(0) = 1|D(1) = d, W = w) = P(D_b = 1|D(1) = d, W = w)
 \]
 holds for each $d \in \{0,1\}$ and $w \in \mathcal{W}$, where $\mathcal{W}$ represents the intersection of the conditional supports of $W$ given $\{D = d,T = t\}$ for each $t \in \{0,1\}$ and $d \in \{0,1\}$. Then, we have
\[
 U_{0,w} \sim U_{b,w} | T = 1, D = d, W = w
\]
for each $d \in \{0,1\}$ and $w \in \mathcal{W}$.
\end{lemma}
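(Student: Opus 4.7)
The plan is to verify that the rank similarity condition of Theorem 3.2 (Assumption 3.4) holds for the mixtures $Y(0) = D(0)Y(0,1) + (1-D(0))Y(0,0)$ and $Y_b = D_b Y_b(1) + (1-D_b) Y_b(0)$, so that Corollary 6.1 reduces to a direct application of Theorem 3.2. The argument proceeds in three steps that mirror the one-sided case.

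First, I would use the randomization in Assumption 3.2 to drop the conditioning on $T = 1$. Since $D = D(1)$ on $\{T = 1\}$ and $(Y(0), Y_b, D(0), D_b, D(1)) \indep T \mid W$, the target reduces to
\[
 U_{0,w} \mid D(1) = d, W = w \;\sim\; U_{b,w} \mid D(1) = d, W = w.
\]
Second, I would stratify both conditional distributions using the latent take-up, $D(0)$ on the control-outcome side and $D_b$ on the baseline side, via the total law of probability. The stability assumption $P(D(0) = 1 \mid D(1) = d, W = w) = P(D_b = 1 \mid D(1) = d, W = w)$ immediately equates the mixing weights on both sides, so the target collapses to showing, for each $d'' \in \{0,1\}$,
\[
 P(U_{0,w} \leq u \mid D(1) = d, D(0) = d'', W = w) = P(U_{b,w} \leq u \mid D(1) = d, D_b = d'', W = w).
\]

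Third, on $\{D(0) = d''\}$ the mixture collapses to $Y(0) = Y(0,d'')$, making $U_{0,w}$ a monotone transform of $Y(0,d'')$ through the pooled CDF $F_{Y(0)|W}$; the analogous statement holds for $U_{b,w}$ on $\{D_b = d''\}$ via $F_{Y_b|W}$. Rewriting each probability as the conditional CDF of $Y(0,d'')$ (respectively $Y_b(d'')$) evaluated at the mixture quantile, and using the decomposition
\[
 F_{Y(0)|W}(y|w) = \sum_{d'''} P(D(0) = d''' \mid W = w)\, F_{Y(0,d''')|D(0) = d''', W}(y|w),
\]
together with the analogous expression for $F_{Y_b|W}$, reduces the claim to applications of the per-component rank similarity in Assumption 6.1 across $d'' \in \{0,1\}$, combined with the stability identity at the marginal level (which follows by summing the conditional stability identity over $d$ weighted by $P(D(1) = d \mid W = w)$).

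The main obstacle is this last step. Assumption 6.1 is phrased in terms of $U_{0,d'',w}$ and $U_{b,d'',w}$, defined via the individual potential-outcome CDFs $F_{Y(0,d'')|W}$ and $F_{Y_b(d'')|W}$, whereas the target involves the mixture ranks $U_{0,w}$ and $U_{b,w}$ defined via the pooled CDFs $F_{Y(0)|W}$ and $F_{Y_b|W}$. Bridging the two requires that the reparameterizations $F_{Y(0)|W} \circ F_{Y(0,d'')|W}^{-1}$ and $F_{Y_b|W} \circ F_{Y_b(d'')|W}^{-1}$ align as maps $[0,1] \to [0,1]$; this is precisely what the combination of stability (aligning the mixture weights at every value of $D(1)$, and hence marginally) with per-component rank similarity across both values of $d''$ is designed to deliver, and it is the heart of where the proof must do real work rather than mere bookkeeping.
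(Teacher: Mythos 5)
Your overall architecture matches the paper's: condition on $D(1)=d$ (using randomization to drop $T=1$), decompose by $D(0)$ on the outcome side and $D_b$ on the baseline side, use the stability condition to equate the mixing weights, and invoke the per-component rank similarity of Assumption 6.1 stratum by stratum. But the step you yourself flag as ``the heart of where the proof must do real work'' is precisely the step you have not supplied, and the route you sketch for supplying it does not go through. The paper closes this gap with a single structural identity: it asserts that
\[
 U_{0,w} = U_{0,1,w}\,D(0) + U_{0,0,w}\,(1-D(0)), \qquad U_{b,w} = U_{b,1,w}\,D_b + U_{b,0,w}\,(1-D_b),
\]
i.e.\ on each stratum $\{D(0)=d''\}$ the pooled rank coincides with the component rank $U_{0,d'',w}$ (the paper justifies this by the linearity of the rank construction in the distribution function and the independence of the tie-breaking variable $V$ from $Y$). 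With that identity, $P(U_{0,w}\le u\mid D(1)=d, D(0)=d'',W=w)=P(U_{0,d'',w}\le u\mid D(1)=d,D(0)=d'',W=w)$, Assumption 6.1 applies term by term, and the rest is bookkeeping.

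Your proposed alternative --- expanding the pooled CDF as a mixture and arguing that $F_{Y(0)|W}\circ F_{Y(0,d'')|W}^{-1}$ and $F_{Y_b|W}\circ F_{Y_b(d'')|W}^{-1}$ align --- cannot be completed from the stated assumptions. Expanding $F_{Y(0)|W}\bigl(Q_{Y(0,d'')|W}(u|w)\,\big|\,w\bigr)$ produces, besides the own term $F_{Y(0,d'')|D(0)=d'',W}\circ Q_{Y(0,d'')|W}$ (which stability plus Assumption 6.1 does match to its baseline counterpart), a cross term $F_{Y(0,1-d'')|D(0)=1-d'',W}\circ Q_{Y(0,d'')|W}$: the conditional law of one component evaluated at the quantiles of the other. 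Assumption 6.1 only relates $Y(0,d)$ to $Y_b(d)$ for the \emph{same} $d$; nothing in the hypotheses ties $Y(0,1)$ to $Y(0,0)$ across components, so these cross terms are uncontrolled and the reparameterization maps need not coincide. To finish, you need the paper's stratum-wise identification of the pooled rank with the component rank (or an equivalent justification), not the mixture-CDF calculation you describe.
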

\begin{proof}
In the proof, I omit $w$ as it does not alter the logic of the proof. Note that we have
 \[
  U_{0} = U_{0,1} D(0) + U_{0,0} (1 - D(0))
 \]
 and
 \[
  U_{b} = U_{b,1} D_b + U_{b,0} (1 - D_b).
 \]
This is because the construction of the latent rankings is linear in their distribution functions, and $V$ in the contruction is independent of $Y$. Therefore,
\begin{align*}
 P&(U_0 \leq u|D(1) = d) = P(U_{0,1} D(0) + U_{0,0}(1 - D(0)) \leq u|D(1) = d) \\
 =& P(U_{0,1} \leq u| D(1) = d,D(0) = 1)P(D(0) = 1|D(1) = d)\\
&+ P(U_{0,0} \leq u| D(1) = d,D(0) = 0)P(D(0) = 0|D(1) = d)\\
 =& P(U_{b,1} \leq u| D(1) = d,D_b = 1)P(D_b = 1|D(1) = d)\\
&+ P(U_{b,0} \leq u| D(1) = d,D_b = 0)P(D_b = 0|D(1) = d)\\
=& P(U_b \leq u|D(1) = d)
\end{align*}
which concludes the statement.
\end{proof}

\subsection{Identification of the net effects in two-sided noncompliance designs}
In Section 6, I illustrate the identification of the net effects of the treatment $D$ in two-sided noncompliance:
 \begin{equation}\tag{4}
  E[Y(0,1) - Y(0,0)|T = 1, D(1) > D(0)].
 \end{equation}
%and
% \begin{equation*}
%  E[Y(1,1) - Y(1,0)|T = 1, D(1) > D(0)]
% \end{equation*}

%For this identification
%we need an additional assumption. 
Let me introduce additional notations
%\[
% \tilde{Y}(1) \equiv Y(0,1), \tilde{Y}(0) \equiv Y(0,0)
%\]
%so that
%\[
% Y(1,1) = \tilde{Y}(1) + DE(1), Y(1,0) = \tilde{Y}(0) + DE(0)
%\]
%where
\[
 DE(1) \equiv Y(1,1) - Y(0,1),  DE(0) = Y(1,0) - Y(0,0)
\]
which are direct effects with and without the treatment $D$.

\begin{corollary}
 Suppose we identify the ITTTA 
 \begin{equation}\tag{1}
  E[Y(1) - Y(0)|T = 1, D = 1],
 \end{equation}
 and the ITTNA
 \begin{equation}\tag{2}
  E[Y(1) - Y(0)|T = 1, D = 0].
 \end{equation}
 Assume $D(1) \geq D(0)$ almost surely and $P(D=1|T=1) > 0$.

 If $E[DE(1)|D(1)= 1] = E[DE(0)|D(1)=0]$ then we identify $E[Y(0,1) - Y(0,0)|D(1) > D(0)]$. %$If $E[DE(0)|D(1)=0] = E[DE(0)|D(1) > D(0)]$ as well, then $E[Y(0,1) - Y(0,0)|D(1) > D(0)] = E[Y(1,1) - Y(1,0)|D(1) > D(0)]$.
\end{corollary}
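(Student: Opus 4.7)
The plan is to derive the closed-form expression
$$E[Y(0,1)-Y(0,0)\mid D(1)>D(0)] = (ITTTA - ITTNA)\,\frac{E[D(1)]}{E[D(1)]-E[D(0)]}$$
stated in the paper. The right-hand side is identified because $ITTTA$ and $ITTNA$ are assumed identified, and $E[D(1)] = E[D\mid T=1]$, $E[D(0)] = E[D\mid T=0]$ follow from randomization (Assumption 3.2).

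First I would rewrite both ITT parameters using monotonicity. Since $D(1)\geq D(0)$ almost surely, $\{D(1)=0\} = \{D(1)=D(0)=0\}$, so
$$ITTNA = E[Y(1,0)-Y(0,0)\mid D(1)=0] = E[DE(0)\mid D(1)=0].$$
For $ITTTA$ I would split $\{D(1)=1\}$ into the complier stratum $\{D(1)>D(0)\}$ and the always-taker stratum $\{D(1)=D(0)=1\}$. Using the identities $Y(1,1)-Y(0,0) = DE(1) + (Y(0,1)-Y(0,0))$ for compliers and $Y(1,1)-Y(0,1) = DE(1)$ for always-takers, the law of total expectation gives
$$ITTTA\cdot P(D(1)=1) = E[DE(1)\mid D(1)=1]\,P(D(1)=1) + E[Y(0,1)-Y(0,0)\mid D(1)>D(0)]\,P(D(1)>D(0)).$$

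Dividing through by $P(D(1)=1)>0$, which is guaranteed by $P(D=1\mid T=1)>0$ and randomization, and then applying the homogeneity assumption $E[DE(1)\mid D(1)=1]=E[DE(0)\mid D(1)=0]=ITTNA$, I obtain
$$ITTTA - ITTNA = E[Y(0,1)-Y(0,0)\mid D(1)>D(0)]\cdot \frac{P(D(1)>D(0))}{P(D(1)=1)}.$$
Under monotonicity, $P(D(1)>D(0)) = E[D(1)]-E[D(0)]$ and $P(D(1)=1)=E[D(1)]$, so a rearrangement delivers the target identity, and randomization identifies both moments on the right from the observed $(D,T)$.

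The main obstacle is conceptual rather than computational: the homogeneity assumption equates a quantity averaged over two principal strata (compliers and always-takers) with a quantity averaged over a third stratum (never-takers), which is substantively stronger than the two-stratum homogeneity in Section 3.3. This strength is exactly what is needed to eliminate the always-taker contribution $E[DE(1)\mid D(1)=D(0)=1]$, which $ITTNA$ cannot see on its own. When compliers dominate the treated population the back-of-the-envelope formula should be plausible, whereas a large share of always-takers would heighten sensitivity to this homogeneity restriction.
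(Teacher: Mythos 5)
Your proposal is correct and follows essentially the same route as the paper: it rewrites ITTNA as the never-taker direct effect via monotonicity, decomposes ITTTA into the direct effect on $\{D(1)=1\}$ plus the complier net effect weighted by $P(D(1)>D(0))/P(D(1)=1)$, and then applies the homogeneity assumption — your explicit complier/always-taker split via the law of total expectation is just the stratum-by-stratum version of the paper's algebraic expansion of $Y(0,D(0)) = D(0)Y(0,1)+(1-D(0))Y(0,0)$. Your final formula agrees with equation (4) in the main text (the appendix's last displayed line has the fraction inverted, evidently a typo), and your closing remark about the strength of the homogeneity restriction matches the paper's own discussion.
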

\begin{proof}
Note first that $ITTNA = E[Y(1) - Y(0)|T=1,D(1) = 0] = E[Y(1,0) - Y(0,0)|D(1) = 0] = E[DE(0)|D(1) = 0]$. %For the first statement,
\begin{align*}
 ITTTA =& E[Y(1) - Y(0)|D(1) = 1] = E[Y(1,1) - Y(0,D(0))|D(1)=1]\\
=& E[Y(1,1) - Y(0,1) + Y(0,1) - D(0)Y(0,1) - (1 - D(0))Y(0,0)|D(1)=1]\\
%=& E[DE(1) + Y(0,1) - D(0)Y(0,1) - (1 - D(0))Y(0,0)|D(1)=1]\\
=& E[DE(1) + (1 - D(0))(Y(0,1) - Y(0,0)|D(1)=1]\\
=& E[(1 - D(0))(Y(0,1) - Y(0,0))|D(1) = 1] + E[DE(1)|D(1)=1]\\
=& E[Y(0,1) - Y(0,0)|D(1) > D(0)]\frac{P(D(1) > D(0))}{E[D(1)]} + E[DE(1)|D(1)=1].
\end{align*}
Therefore,
\[
 E[Y(0,1) - Y(0,0)|D(1) > D(0)] = (ITTTA - ITTNA) \frac{P(D(1) > D(0))}{P(D(1) = 1)}.
\]
\end{proof}

\begin{remark}

This homogeneity assumption is stronger than that of one-sided noncompliance in Section 3 and more related to additivity or the lack of interaction effects, as in \cite{Holland_1988} and \cite{Robins_2003} (see \citealp{Imai_Tingley_Yamamoto_2013}). While homogeneity in one-sided noncompliance only limits the heterogeneity of the direct effect, we must also limit the interaction effect of $T$'s direct effect with $D$'s treatment effect. In this context, the existence of the interaction effect is defined as
\[
 Y_i(1,1) - Y_i(1,0) \neq Y_i(0,1) - Y_i(0,0).
\]
Identifying detailed causal effect mechanisms without imposing the assumption of no-interaction effect is challenging. For example, \cite{Imai_Tingley_Yamamoto_2013} proposed a specific experimental design, called a crossover design, to identify the average causal indirect effect, $E[Y(1,D(1))-Y(1,D(0))]$ and $E[Y(0,D(1))-Y(0,D(0))]$. With these indirect effects, we may calculate the interaction effect for compliers
\begin{align*}
 E&[Y(1,D(1)) - Y(1,D(0))] - E[Y(0,D(1)) - Y(0,D(0))] \\
 &= E[Y(1,1) - Y(1,0) - (Y(0,1) - Y(0,0))|D(1) > D(0)] 
\end{align*}
where under monotonicity $D(1) \geq D(0)$ almost surely. 
\end{remark}

Although the homogeneity assumption is particularly strong for two-sided noncompliance, the homogeneity restriction is substantially weaker than that in the no-direct effect assumption, which postulates a constant direct effect for which the constant equals 0. Homogeneity may be a reasonable restriction if assignment $T$ has an additive impact on the outcome when the treatment status $D$ is left unchanged. For example, assigning a fixed amount of $T$ transfers may have an additive impact on consumption when investment level $D$ is fixed. Appendix F offers further discussion in the context of an empirical application.

 %This feature is
%essential to the case when the direct effect is concerning and
%compatible with behavioral and equilibrium effects to generate the
%direct effect. 

\subsection{Attrition as a special case of two-sided noncompliance}
An important example of two-sided noncompliance, when the endogenous \textit{treatment} is not available in the baseline, is the selective attrition problem, a serious concern in experimental evaluations. Below, I demonstrate that the ITTNA alone is sufficient to uncover the most relevant parameter in the context of selective attrition. In an evaluation problem for assignment $T$, in which $D$ is an attrition indicator in the follow-up survey for the outcome variable $Y$,
\[
 D = \begin{cases}
      1 &\mbox{ if } Y \mbox{ is missing in the follow-up survey},\\
      0 &\mbox{ if else}.
     \end{cases}
\]

By construction, the attrition event is realized only after the $T$ assignment. Regarding interventions in unemployment benefit systems, \citeauthor{vandenBergVikstrom2014}'s (\citeyear{vandenBergVikstrom2014}) Swedish study examined the extent to which punishments for noncompliance to job search requirements affect job quality. They reported that wages, serving as the job quality measure, contain nonrandom missing values because their sample covered mostly large firms, with only a fraction of smaller firms. Thus, attrition D may be associated with punishment intervention $T$ because the unemployed may have led workers to accept poor quality jobs in response to the punishment. In other words, there may be an issue of differential attrition, that is, $D(1) \neq D(0)$.
Furthermore, if we consider that potential outcomes $Y(1)$ and $Y(0)$ are truncated observations of the true potential outcomes $Y^*(1), Y^*(0)$, then
\[
 Y(1) = (1 - D(1)) Y^*(1), Y(0) = (1 - D(0)) Y^*(0).
\]
$Y^*(1), Y^*(0) > 0$ to ensure that 0 is only taken by the missing values. This definition is a normalization available for the bounded supported $Y^*(t)$ by adding the minimum of $Y^*(t)$ to make it positive. Additionally, the monotonicity, $D(1) \geq D(0)$ almost surely, may be reasonable in the earlier example, as attrition is more likely to occur with punishment intervention $T$ because such a punishment may have pushed unemployed workers to join smaller firms that are not covered by the dataset. Based on the above specification of the observed potential outcomes $Y(1)$ and $Y(0)$, identifying the counterfactual distribution
\[
 F_{Y(0)|T = 1,D(1) = 0}(y) = F_{Y(0)|T=1,D(1) = 0,D(0) = 0}(y),
\]
by Theorem 6.2, relates its underlying potential outcome distribution because
\[
 F_{Y(0)|T=1,D(1) = 0,D(0) = 0}(y) = F_{Y^{*}(0)|T=1,D(1) = 0,D(0) = 0}(y).
\]
Therefore, we can identify intervention $T$'s causal effect for those who always took the survey, $D(1) = D(0) = 0$, as
\begin{align*}
 E[Y(1) - Y(0)|T = 1, D(1) = 0] =& E[Y(1) - Y(0)|T = 1, D(1) = 0,D(0) = 0]\\
 =& E[Y^*(1) - Y^*(0)|D(1) = 0, D(0) = 0].
\end{align*}

\subsection{A set of conditions sufficient for rank similarity}

%The one-sided noncompliance structure reduces the rank similarity to a
%reasonably straightforward restriction. 
In Section 3, I illustrate the identification process with the rank similarity's reduced-form restriction on the potential outcomes. The target readers may wonder how restrictive the rank similarity assumption is and what kinds of covariates $W$ may help ease the restriction. Here, I offer a set of low-level conditions for the rank similarity. Remember that we require the following models for illustrative purposes only, to consider candidates for $Y_b$ and $W$.

To fix the ideas, I consider \citeauthor{CreponDevotoDufloPariente2015}'s (\citeyear{CreponDevotoDufloPariente2015}) microcredit experiment for which $Y$ is the production output sales value two years after the experiment and $Y_b$ is the same sales value measured before the experiment. Although I demonstrate the simplest case of the repeated outcome measures in the example and application, the same argument applies to the binary indicator of business performance as the outcome of interest.

To rationalize the rank similarity between $Y_b$ and $Y(0)$, I consider
scalar latent productivity $U_{b,w}$ and $U_{0,w}$ that determine the rankings of variables $Y_b$ and $Y(0)$ conditional on other baseline predetermined variables $W$. Such scalar rankings stem from Definition 3.1, but it sometimes helps to consider the rankings $U_{b,w}$ and $U_{0,w}$ as having the same meanings (e.g., productivity). In other words, people with higher $U_{b,w}$ and $U_{0,w}$ are likely to have higher latent productivity. The conditioning variables $W$ help rationalize such a monotone relationship, as $W$ may include current borrowing statuses or household characteristics, other than productivity.

%there are scalar latent productivity , which
%determine the rankings of the variables  conditional on
%other baseline pre-determined variables $W$. Such scalar rankings exist
%from Definition 3.1. Nevertheless, it sometimes helps to consider
%the rankings  have the same meanings such as
%productivity. In other words, people with higher 
%are likely to have higher latent productivity. Conditioning
%variables $W$ would help rationalize such a monotone relationship as $W$
%may include current borrowing status or household characteristics, other than the productivity.

To further explore the restriction, I consider that there is a common latent variable $U$ such that $U_{b,w} = U$ and $U_{0,w} = U$ almost surely; then, the rank similarity trivially holds. While this specification is an extreme example, called the rank invariance restriction, this is a fair starting point to relax the restrictions. In particular, the $U_{b,w}$ and $U_{0,w}$ rankings may accept random permutations, called \textit{slippages}, from the common level $U$ (\citealp{HeckmanSmithClements97}). For example, if $\tilde{U}_{b,w}$ and $\tilde{U}_{0,w}$ are a pair of identically distributed random variables and normalization is achieved through the common distribution function $F$, and I can define latent productivity in the following forms:
\begin{equation}\tag{A.1}
 U_{b,w} \equiv F(U + \tilde{U}_{b,w}), \; U_{0,w} \equiv F(U + \tilde{U}_{0,w}). \label{eq:Slip}
\end{equation}
This specification is substantially weaker than that of rank invariance because it enables an arbitrary positive correlation between 
$U_{0,w}$ and $U_{b,w}$.

If $V$ represents the unobserved determinants $V$ of taking up treatment $D$ and the slippages are independent of $V$, but conditional on their potential productivity $U$, then the rank similarity assumption holds.
%Let me formalize the above statement. Following proposition reveals that
%the rank similarity assumption is suitable when the
%self-selection occurs based on the potential ``productivity'' $U$ rather than
%the realization of the past productivity $U + U_{b,w}$.
\begin{proposition} \label{prop:suffRS}
 Let $U_{b,w}$ and $U_{0,w}$ be the latent rank variables for $Y_b$
 and $Y(0)$ conditional on $W$ and constructed as in (\ref{eq:Slip}). Assume
 that there is a vector of unobservables $V$ such that $D =
 \delta(U,V,T,W)$ for some measurable map $\delta$, and $(\tilde{U}_{b,w},\tilde{U}_{0,w}) \indep
 V|U,T=1,W$ and $\tilde{U}_{b,w} \sim \tilde{U}_{0,w}|U,T=1,W$. Then Assumption 3.4 holds.
\end{proposition}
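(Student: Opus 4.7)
The plan is to verify Assumption 3.4 by computing $P(U_{b,w}\leq u\mid W=w, T=1, D=d)$ and $P(U_{0,w}\leq u\mid W=w, T=1, D=d)$, conditioning first on the unobserved productivity $U$, and showing the two coincide pointwise in $u$ and $s$, before integrating out $U$.

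First, I would apply the tower property to write, for any $u\in[0,1]$,
\[
P(U_{b,w}\leq u\mid W=w, T=1, D=d) = \int P(U_{b,w}\leq u\mid U=s, W=w, T=1, D=d)\, dF_{U\mid W,T,D}(s\mid w,1,d),
\]
and similarly for $U_{0,w}$, using the same mixing distribution $F_{U\mid W,T,D}(\cdot\mid w,1,d)$ on the right-hand side. It therefore suffices to show that the two conditional distributions agree inside the integrand, i.e.\ for $F_{U\mid W,T,D}(\cdot\mid w,1,d)$-almost every $s$,
\[
P(U_{b,w}\leq u\mid U=s, W=w, T=1, D=d) = P(U_{0,w}\leq u\mid U=s, W=w, T=1, D=d).
\]

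Next, I would exploit the structural form $D=\delta(U,V,T,W)$. Conditioning on $\{U=s,\, W=w,\, T=1\}$, the event $\{D=d\}$ is the event $\{V\in A_{s,w,d}\}$ where $A_{s,w,d}\equiv\{v:\delta(s,v,1,w)=d\}$, so it is a statement about $V$ only. Combined with the conditional independence $(\tilde U_{b,w},\tilde U_{0,w})\indep V\mid U,T=1,W$, this lets me drop the $D=d$ conditioning:
\[
P(U_{b,w}\leq u\mid U=s, W=w, T=1, D=d) = P\bigl(F(s+\tilde U_{b,w})\leq u\bigm\vert U=s, W=w, T=1\bigr),
\]
and the same identity holds with $b$ replaced by $0$. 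The identical-distribution assumption $\tilde U_{b,w}\sim\tilde U_{0,w}\mid U,T=1,W$ then yields equality of the two right-hand sides, since $v\mapsto F(s+v)$ is a fixed measurable map. Integrating back against $dF_{U\mid W,T,D}(s\mid w,1,d)$ delivers Assumption 3.4.

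The only delicate step is the second one: writing the conditioning event $\{D=d\}$ as an event on $V$ alone given $(U,W,T)$ and then invoking the conditional independence to peel it off. I would be careful to present this via conditional expectations (or Dynkin-style arguments on $\sigma$-algebras) rather than as a naive ratio of probabilities, because $\{V\in A_{s,w,d}\}$ may have zero probability for some $s$; however, the set of such $s$ has $F_{U\mid W,T,D}(\cdot\mid w,1,d)$-measure zero, so the integrated identity is unaffected. Everything else is routine measure-theoretic bookkeeping once this step is handled.
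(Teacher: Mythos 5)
Your proposal is correct and follows essentially the same route as the paper's own proof: condition on the latent productivity $U$, use the selection equation $D=\delta(U,V,T,W)$ to recast $\{D=d\}$ as an event on $V$ alone, invoke $(\tilde U_{b,w},\tilde U_{0,w})\indep V\mid U,T=1,W$ to peel off the $D=d$ conditioning, apply $\tilde U_{b,w}\sim\tilde U_{0,w}\mid U,T=1,W$, and integrate back against the same mixing law. Your added remark about handling zero-probability conditioning events is a care the paper's proof does not make explicit, but it does not change the argument.
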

\begin{proof}
  For the simplify of notations, omit $W$ in the expressions.
 \begin{align*}
  P&(F(U_{b}) \leq \tau|D = d,T = 1) = P(F(U + \tilde{U}_{b}) \leq \tau|\delta(U,V,1) = d,T = 1)\\
  =& \int P(F(u + \tilde{U}_{b}) \leq \tau|\delta(u,V,1) = d,T = 1,U=u) dF_{U|\delta(u,V,1) = d,T = 1}(u)\\
  =& \int P(F(u + \tilde{U}_{b}) \leq \tau|T = 1,U=u) dF_{U|\delta(u,V,1) = d,T = 1,U=u}(u)\\
  =& \int P(F(u + \tilde{U}_{0}) \leq \tau|T = 1,U=u) dF_{U|\delta(u,V,1) = d,T = 1,U=u}(u)\\
  =& \int P(F(u + \tilde{U}_{0}) \leq \tau|\delta(u,V,1) = d,T = 1,U=u) dF_{U|\delta(u,V,1) = d,T = 1,U=u}(u)\\
  =& P(F(U_{0}) \leq \tau|D = d,T = 1)
 \end{align*}
\end{proof}
The second assumption claims that the marginal distributions of slippages are the same, but conditional on the same potential productivity $U$. However, the joint distribution of slippages is not restricted to ensure that they can correlate with each other. The above proposition may help us rationalize the rank similarity through a model for taking up a treatment.
\begin{example}
Consider a threshold crossing model of take-up decision $D$ for repaying microcredit debt. Assume that the business owners decide whether to borrow from a microcredit institution by comparing the expected revenue with loan-enabled investments. $U$ represents the private information of business owners; they can make a subjective prediction of performance when they take up the treatment: $\tilde{E}[Y(1,1)|U]$. Under this, $\tilde{Y}$ is the random repayment amount, which is expected to increase when the other microcredit group members fail to repay. As a result, the decision to take up the treatment may be written as
\[
 D = 1\{\tilde{E}[Y(1,1)|U] \geq \tilde{Y}\}.
\]
If $\tilde{Y}$ is independent of $Y_b$ given $U$, then the former condition for the above proposition holds. 
In other words, if we fix the true profitability of the business $U = u$, then today's bad luck, $\tilde{U}_b$, should be independent of the expected repayment amount $\tilde{Y}$. There may be other reasons why $\tilde{Y}$ may change systematically. For instance, today's bad luck, $\tilde{U}_b$, may have affected the baseline loan status which, in turn, may have impacted the borrowing amount $\tilde{Y}$. To prevent such a correlation, it is important to include the baseline credit status and other household characteristics as the conditioning covariates $W$.
\end{example}

%$\subsection{Flexibility of the Rank Similarity Assumption on $D,Y_1$}

%The rank similarity assumption imposes restrictions on the marginal
%distributions of $Y(0)$ and $Y_b$. On the other hand, the remaining
%variables $(D,Y_1)$ are left flexible. For example, the common
%latent ranking $U$ may arbitrarily correlate with $(D,Y_1)$ as long as
%the slippages $(\tilde{U}_{b,w},\tilde{U}_{0,w})$ are independent of
%$(D,Y_1)$ conditional on $U$. This flexibility implies that $D$ may be endogenous to the
%potential outcomes $Y_1$ and $Y_0$. The conditional
%exogeneity may still be possible under the rank similarity assumption,
%although they are not nested.

%Furthermore, $Y_1$ may be left
%unrestricted as long as the slippages of $Y_0$ and $Y_b$ are independent
%of $Y_1$. It is worth noting two facts. First, this rank similarity does not
%necessarily restrict the form of the gains from $T$, $Y_1 - Y_0$. This
%absence of the restriction on the gain is a distinction from other
%bounding approaches based on restrictions such as monotone
%treatment response assumptions $Y_1 \geq Y_0$. 

%See more discussion of an example in Appendix C for how peer effects or
%equilibrium effects due to treatment exposure $T = 1$ may be compatible
%with the identification restriction.

\subsection{Rank similarity justification for a peer effect model}
Here, I consider how peer or equilibrium effects, due to exposure $T = 1$, are compatible with the identification restriction. Consider that each person $i$ faces his/her community $c$. Then, this would affect his/her outcome as an equilibrium response to the common assignment $T_c = 1$ within $c$. For each person $i$ in community $c$, the underlying potential outcomes $Y_{i;c}(1,1)$ and $Y_{i;c}(1,0)$ would be explained as
 \[
  Y_{i;c}(1) = D_{i;c} Y_{i;c}(1,1) + (1 - D_{i;c}) Y_{i;c}(1,0)
 \]
 Under this, Assumption 3.1 remains valid. For example, if we consider $D_{i;c}$ with a generalized Roy model, then there are $Y_{i;c}(1,d), d \in \{0,1\}$ defined with $D_{i;c}
 = 1\{Y_{i;c}(1,1) \geq Y_{i;c}(1,0) + \epsilon_i\}$ for each $i$ in community $c$.
I perceive the strategic interactions such that the individual threshold crossing may not solely determine taking up treatment $D$. Rather, these potential outcomes depend on neighbors taking up the treatment, $D^{-i}_{c}(1,D_{i;c})$ and their decisions may depend on $i$'s behavior $D_{i;c}$. Assume that the same potential productivity $U_i$, as for the control, and slippages $\tilde{U}_{i}(1,1)$ and $\tilde{U}_{i}(1,0)$ determines the potential outcomes with assignment $T_c = 1$. Then, $Y_{i;c}(1,1) = y_{1,1}(U_i + \tilde{U}_{i,11};D^{-i}_{c}(1,1))$
 and $Y_{i;c}(1,0) = y_{1,0}(U_i + \tilde{U}_{i,10};D^{-i}_{c}(1,0))$ for some functions $y_{1,1}$ and $y_{1,0}$. In other words, the Roy model becomes
 \[
  D_{i;c} = 1\{y_{1,1}(U_i + \tilde{U}_{i,11};D^{-i}_{c}(1,1)) \geq
 y_{1,0}(U_i + \tilde{U}_{i,10};D^{-i}_{c}(1,0)) + \epsilon_i\}
 \]

Here, I assume that Assumption 3.1 guarantees the existence of a unique equilibrium vector for taking up treatment $D_{i;c}$ as the solution to the above interaction model.

In this case, Proposition \ref{prop:suffRS} justifies the rank similarity for $U_b$ and $U_0$ if
 \[
  (\{U_j,\tilde{U}_{j,11},\tilde{U}_{j,10},D_{j;c}\}_{j \in
 c\backslash i}, \epsilon_i) \indep (\tilde{U}_{i,0},\tilde{U}_{i,b}) |
 U_i
 \]
in addition to $\tilde{U}_{b,w} \sim \tilde{U}_{0}|U,T=1$, which represents the assumption solely on the control outcomes. As the above condition does not impose independence of $U_i$ and $U_j$ within the community, the outcomes may correlate with each another.

\subsection{Illustrating a binary outcome variable compatible with the identification assumptions}
In Section 3.1, I noted that Assumption 3.3 and 3.4 do not rule out a discrete $Y(0)$ with a continuous $Y_b$. Below I offer an example with a binary $Y(0)$ satisfying both restrictions.

For simplicity, ignore the covariates, but note that they allow us to make the model much more flexible. Consider that $Y_b$ is a uniform random variable,
        \[
         F_{Y_b}(y) = y, y \in [0,1].
        \]
        This is strictly increasing and continuous over $[0,1]$, satisfying Assumption 3.3. With this distribution, we can define the latent variable for $Y_b$ as
        \[
         U_b \equiv Y_b \sim U[0,1].
        \]

        Conversely, if we consider $V_0 \sim U[0,1]$ and
        \[
         Y(0) = 1\{V_0 \geq 0.8 \},
        \]
        the distribution function is neither continuous nor strictly increasing over $[0,1]$ as
        \[
         0 = F_{Y(0)}(0-) < F_{Y(0)}(0) = F_{Y(0)}(1-) = 0.8 < F_{Y(0)}(1) = 1,
        \]
        but $V_0$ works as the latent rank variable for $Y(0)$, $U_0$. To perceive this, $V \sim U[0,1]$ such that
        \begin{align*}
         V_0 =& 0.8 Y(0) + V\left[0.2 Y(0)  + 0.8 (1 - Y(0))\right] = 0.8 Y(0) + V (0.8 - 0.6 Y(0)).
        \end{align*}
        Then,
        \[
         V = \frac{V_0 - 0.8 Y(0)}{0.8 - 0.6 Y(0)}.
        \]
        This $V$ is consistent with Definition 3.1 because for $0 \leq v \leq 1$,
        \[
         P(V \leq v|Y(0) = 0) = P\left(\frac{V_0}{0.8} \leq v \middle| V_0 < 0.8\right) = v
        \]
        as $V_0|V_0 < 0.8 \sim U[0,0.8]$. Similarly for $0 \leq v \leq 1$,
        \[
         P(V \leq v|Y(0) = 1) = P\left(\frac{V_0 - 0.8}{0.2} \leq v \middle| V_0 \geq 0.8\right) = v
        \]
        as $(V_0 - 0.8)|V_0 \geq 0.8 \sim U[0,0.2]$. Thus, $V \indep Y(0)$.
        
        With these constructions, Assumption 3.4 imposes a restriction on $U_0$ and $U_b$, or equivalently $V_0$ and $Y_b$:
        \[
         Y_b \sim V_0 |T = 1, D = d.
        \]
        Therefore, discreteness of $Y(0)$ induced by the map
        \[
         Q_{Y(0)}(u) = 1\{u \geq 0.8\}
        \]
        does not contradict with Assumptions 3.3 and 3.4.

\newpage
\section{Additional Plots and Robustness Checks}
\subsection{Quantile Difference Plots}

  \begin{figure}[h!]
   \centering
  \includegraphics[height=0.8\textwidth,width=0.8\textwidth]{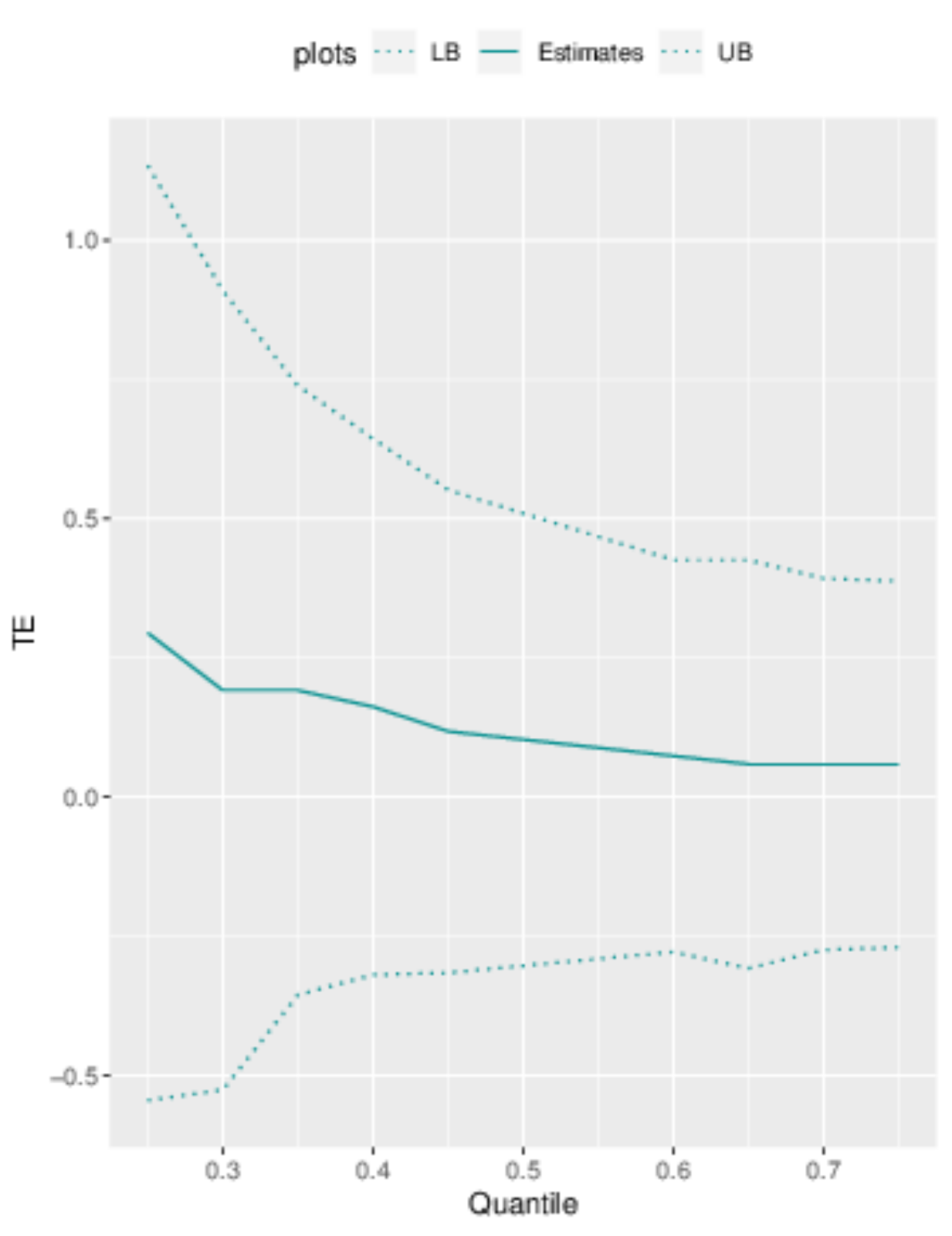}
 \caption{Quantile difference of the direct effects for the nontreated, D = 0, with Uniform 95\% CI. The line
   in center indicates the quantile difference estimates at each value
   of quantiles. The dotted lines indicate the 95\% CI of the estimates
   uniformly valid for the specified range.}
  \label{fig:MFQTE0}
 \end{figure}

\newpage

  \begin{figure}[h!]
   \centering
  \includegraphics[height=0.8\textwidth,width=0.8\textwidth]{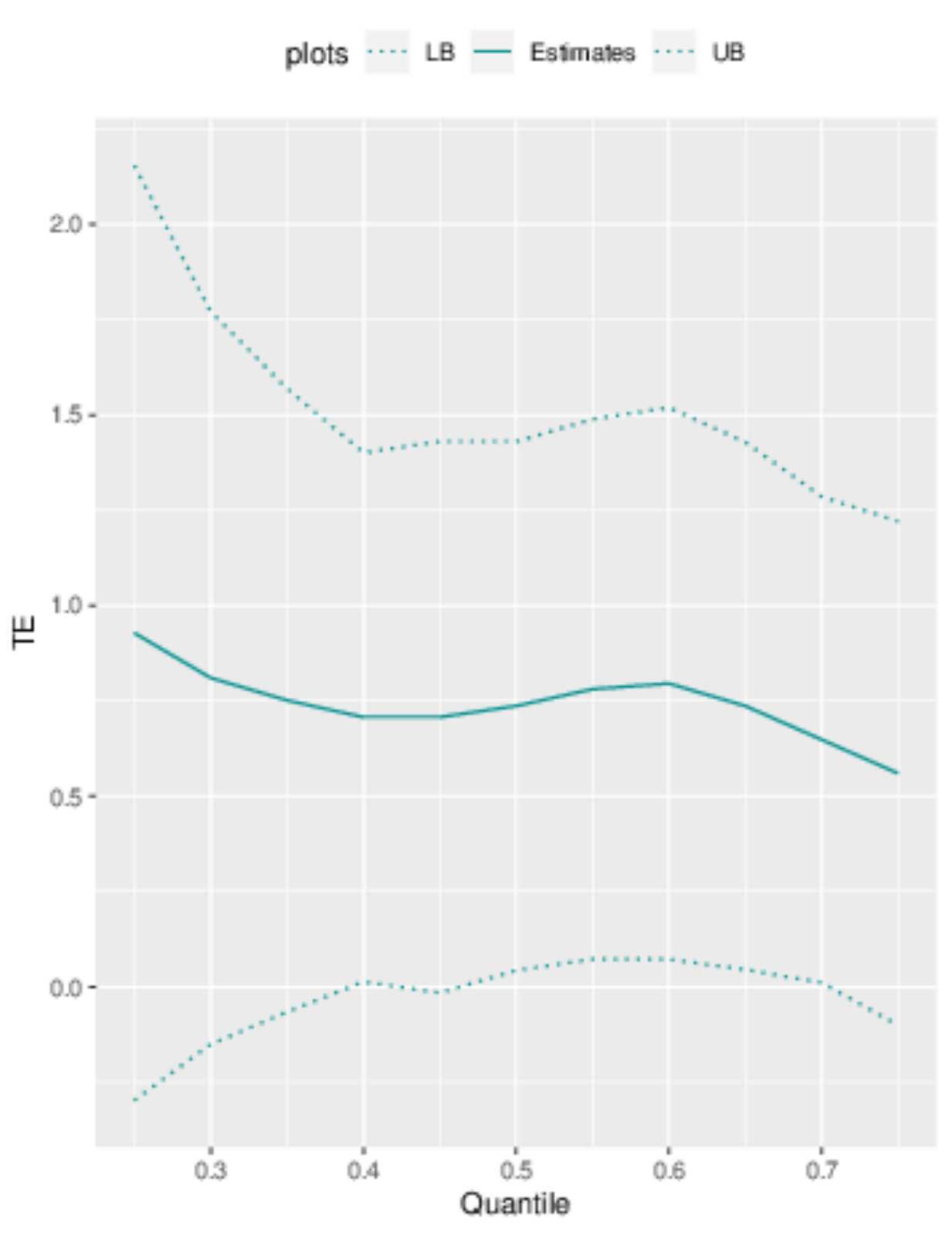}
 \caption{Quantile difference of the combined effects for the treated, D = 1, with Uniform 95\% CI. The line
   in center indicates the quantile difference estimates at each value
   of quantiles. The dotted lines indicate the 95\% CI of the estimates
   uniformly valid for the specified range.}
  \label{fig:MFQTE1}
 \end{figure}

\newpage

\subsection{Robustness to other link functions}
\begin{table}[H]
 \centering
  \caption{ITTTA estimates with other link functions} 

 \begin{tabular}{lcccc}
 \hline
 & (1) & (2) & (3) & (4)\\
& & & complementary & \\
Link & Logit & Probit & Log-Log & Cauchy\\
Parameter & ITTTA & ITTTA & ITTTA & ITTTA \\ \cline{2-5}%& ATT \\%& ITT\\
Outcome & \multicolumn{4}{c}{log output}\\
\hline
  Assignment ($T$) & $1.0252^{**}$ & $1.0280^{**}$ & $1.0286^{**}$ & $1.0548^{**}$\\
  by subgroups of ($D$)          & ($0.4492$) & ($0.4030$) & ($0.4304$)& ($0.4247$)\\
 \hline
  Self-employed in baseline & Y & Y & Y & Y\\
  Obs       & $2,453$ & $2,453$  & $2,453$ & $2,453$\\
  %\hline
  %Selection bias: & 0.034 & 0.110 &  0.133 \\
  %Selection bias: & -0.058 & -0.056 &  0.031 \\
  %predicted & & & \\
  \hline
 \end{tabular}
\begin{minipage}{325pt}
{\flushleft \fontsize{9pt}{9pt}\selectfont \smallskip NOTE: Standard
 errors reported in parenthesis are generated from 300 bootstrap draws clustered in
 randomization cluster levels for (1)-(4). *,**,*** indicates statistical significance of
 10\%,5\% and 1\% sizes respectively.}
\end{minipage}
\end{table} 

\begin{table}[H]
 \centering
  \caption{ITTNA estimates with other link functions} 

 \begin{tabular}{lcccc}
 \hline
 & (1) & (2) & (3) & (4)\\
& & & complementary & \\
Link & Logit & Probit & Log-Log & Cauchy\\
Parameter & ITTNA & ITTNA & ITTNA & ITTNA \\ \cline{2-5}%& ATT \\%& ITT\\
Outcome & \multicolumn{4}{c}{log output}\\
\hline
Assignment ($T$) & $0.2837$ & $0.2805$ & $0.2867$ & $0.3192$\\
by subgroups of ($D$)            & ($0.2842$) & ($0.2828$) & ($0.2688$)& ($0.2650$)\\
 \hline
  Self-employed in baseline & Y & Y & Y & Y\\
  Obs       & $2,453$ & $2,453$  & $2,453$ & $2,453$\\
  %\hline
  %Selection bias: & 0.034 & 0.110 &  0.133 \\
  %Selection bias: & -0.058 & -0.056 &  0.031 \\
  %predicted & & & \\
  \hline
 \end{tabular}
\begin{minipage}{325pt}
{\flushleft \fontsize{9pt}{9pt}\selectfont \smallskip  NOTE: Standard
 errors reported in parenthesis are generated from 300 bootstrap draws clustered in
 randomization cluster levels for (1)-(4). *,**,*** indicates statistical significance of
 10\%,5\% and 1\% sizes respectively.}
\end{minipage}
\end{table} 

\newpage
\subsection{Estimation of main estimates with level of outputs}
\begin{table}[H]
 \centering
  \caption{Estimates with baseline proxy $Y_b$} 

 \label{tbl:mainResultsLevel}
 \begin{tabular}{lcccc}%c}
 % Method: & \multicolumn{2}{c}{Rank Similar} & 2SLS \\%& Unconditional \\
 %\hline
 \hline
 & (1) & (2) & (3) & (4)\\
Parameter & ITTNA & ITTTA & ITTNA & ITTTA \\ \cline{2-5}%& ATT \\%& ITT\\
Outcome & \multicolumn{4}{c}{level output}\\
\hline 
Assignment ($T$)   & $4,081$ & $26,664^{**}$ & $5,161$ & $28,567^{**}$\\%& $2.29$** \\%& 0.4569 \\
by subgroups of ($D$)      & ($5,016$) & ($10,883$) & ($5,628$) & ($12,222$)\\ %& ($0.85$) \\%& \\
  %exp(TE) - 1 & $0.33$ & $1.79$ \\ %& $8.92$ \\ %& 0.5792 \\
  %control mean & 8.7079 & 8.7079 & 8.7079\\
  %control median & 6,960 & 9,796 & 9.8829\\
 \hline
  Self-employed in baseline & Y & Y & Y & Y\\
  Trimming at 99 percentile &  & & Y & Y\\
  Obs       & $2,453$ & $2,453$ & $2,429$ & $2,429$\\ %& $2453$  \\%& 2453 \\
  %\hline
  %Selection bias: & 0.034 & 0.110 &  0.133 \\
  %Selection bias: & -0.058 & -0.056 &  0.031 \\
  %predicted & & & \\
  \hline
 \end{tabular}
 \vspace{0.3cm}
\begin{minipage}{325pt}
{\flushleft
 \fontsize{9pt}{9pt}\selectfont \smallskip NOTE: Standard
 errors reported in parenthesis are generated from 100 bootstrap draws clustered in
 village levels for (1)-(3). *,**,*** indicates statistical significance of
 10\%,5\% and 1\% sizes respectively. Logit link is used for (1) and
 (2).} %2SLS in (3) uses the same set of covariates as in (1) and (2).}
\end{minipage}
\end{table} 
\newpage 

\subsection{Diagnostic test for rank similarity}
\citet{DongShen18} and \citet{FrandsenLefgren18} proposed diagnostic tests for the rank similarity assumption in the same form as \citet{ChernozhukovHansen05}. However, their testable restrictions were meant for a rank similarity restriction on the unconditional latent rank rather than the conditional latent rank. As rejecting their tests does not violate the conditional latent rank assumption, we may still validate the conditional rank similarity by conditioning on such a harmful covariate. Nevertheless, I implement a version of \citeauthor{DongShen18}'s (\citeyear{DongShen18}) means test, a chi-squared test of mean differences in the latent ranks, for the unconditional latent ranks. In my application, I compare the conditional mean of the latent ranks for $Y_b$ and $Y(0)$. However, my theoretical flexibility allows $Y(0)$ to have a point mass. Therefore, the raw latent rank cannot be observed for a subset of units with $Y(0) = 0$. Instead, I compute the latent rank of $Y_b$ and $Y(0)$ among those that had $Y(0) > 0$ to ensure that their latent ranks are continuous in both variables.

Following the analogous argument of \citeauthor{DongShen18}'s (\citeyear{DongShen18}) means test, I take $J$ different values for the covariates $W$, denoted by $w_1,\ldots, w_J$, and test the null hypothesis of 
\[
 E[U_0|W = w_j] = E[U_b|W = w_j]
\]
being true for all $j$ against the alternative hypothesis of the null being not true. I test this null in the control group satisfying both $Y_b > 0$ and $Y(0) > 0$ with this study's sample selection. Additionally, I consider $U_0$ and $U_b$ to be the latent ranks of $Y(0)$ and $Y_b$ within the selected sample. I consider the following statistics
\[
 m^{w_j}_0 = \sum_{i:W_i = w_j} \hat{U}_{0i}
\]
and
\[
 m^{w_j}_b = \sum_{i:W_i = w_j} \hat{U}_{bi},
\]
and $m^w_0$ and $m^w_b$ are the vector of these sample means.

For the vector of the key covariates, I use the following baseline variables: number of adults residents and age of heads of households as well as the binary responses of livestock ownership, business activity, and borrowing. I split the sample into 12 subgroups based on the four binary responses and three categories each for the two continuous responses. For number of adults, I separate the sample per member groups: fewer than or equal to two members, more than two and fewer than or equal to four, and more than four. For age, I split the sample into less than 42, more than 41 and less than 55, and more than 54. Following \citet{DongShen18}, I construct the cluster bootstrapped variance matrix $\sqrt{n}(m^{w}_0 - m^{w}_b)$ denoted by V and compute the test statistic:
\[
 n(m^{w}_0 - m^{w}_b)' V^{-1}(m^{w}_0 - m^{w}_b).
\]
Using \citeauthor{DongShen18}'s (\citeyear{DongShen18}) results, the test statistics must follow the $\chi^2$ distribution with 11 degrees of freedom. The test statistics has a value of $0.9759$ and the corresponding p-value is $0.99996$. Therefore, I do not reject this null of equal conditional means of the latent ranks.

\section{Monte Carlo simulations for the proposed estimators}

To demonstrate the finite sample properties of the proposed estimators, I run two types of Monte Carlo simulations. First, I verify that the coverage of the proposed estimator is satisfactory when the number of clusters is moderate. Second, I check the validity of the estimation when \citeauthor{AtheyImbens06}' (\citeyear{AtheyImbens06}) parametric specification is not satisfied.

\subsection{Finite sample performance of the proposed estimators with clustered data}
I simulate a dataset of clustered samples drawn from the following data generating process: there are three latent ranks $U_1, U_0$ and $U_b$ for each individual $i$ in cluster $c$,
\begin{align*}
 U_{1,i;c} =& 0.8 U_{i} + 0.1 \tilde{U}_{1,i} + 0.1 \nu_c\\
 U_{0,i;c} =& 0.8 U_{i} + 0.1 \tilde{U}_{0,i} + 0.1 \nu_c\\
 U_{b,i;c} =& 0.8 U_{i} + 0.1 \tilde{U}_{b,i} + 0.1 \nu_c
\end{align*}
where $U_i \sim U[0,1]$ is an individual-specific unobserved determinant common across potential outcomes and $\tilde{U}_{d,i} \sim U[0,1]$ represents the slippages from the common latent rank with additional cluster specific heterogeneity $\nu_c \sim U[0,1]$. All these random errors are jointly independent. I define treatment $D(1)$ with assignment $T = 1$ to ensure that it depends on the common latent rank $U$:
\[
 D_i(1) = 1\{\phi U_i + (1 - \phi) \eta_i \leq 0.5\}
\]
where $\eta_i \indep U_i, \eta_i \sim U[0,1]$. I consider one-sided noncompliance by setting $D(0) = 0$ for everyone. This parameter $\phi \in [0,1]$ captures the strength of the endogeneity for taking up the treatment. I simplify the arguments by letting the predetermined covariates be independent of the latent ranks: %For now, these latent ranks $U_{1}, U_0$ and $U_b$ are conditional latent ranks with marginal distributions of $U[0,1]$. 
\begin{align*}
 W_{c1} & \sim U[0,1]\\
 W_{c2} & \sim U[0,1]\\
 W_{d1} & = 1\{\xi \geq 0.5\}
\end{align*}
where $\xi \sim U[0,1]$. I set $(W_{c1}, W_{c2}, W_{d1})$ to be jointly independent of each other and from $(U_1,U_0,U_b)$. I formulate
\[
 Y(0) = 0.5 W_{c1} + 0.25 W_{c2} + 0.5 W_{c1} W_{c2} + 0.75 W_{c1}^2 + 0.25 W_{d1} + 0.5 W_{d1} W_{c2} + \Phi^{-1}(U_0)
\]
and
\[
 Y_b = 0.5 W_{c1} - 0.5 W_{c2} + 0.5 W_{c1} W_{c2} - W_{d1} W_{c1} W_{c2} + \Phi^{-1}(U_b)*1.5
\]
so that they have similar polynomial models, but not necessarily the same type of random variables, where $\Phi^{-1}(\cdot)$ represents the inverse of the standard normal distribution. I define $Y(1,1)$ and $Y(1,0)$ by specifying the direct and treatment effects to be added to $Y(0)$, and I allow these effects to depend on $U_1$. Specifically I define
\[
 Y(1,0) = Y(0) + 0.05 W_{c1} + 0.05 W_{c2} + 0.2 W_{c1} W_{c2} W_{d1} + \Phi^{-1}(U_1)
\]
and
\[
 Y(1,1) = Y(0) + 0.05 W_{c1} + 0.1 W_{c2} + 0.4 W_{c1} W_{c2} + 0.4 W_{d1} + 0.2 W_{c1} W_{c2} W_{d1} + \Phi^{-1}(U_1).
\]

Given this data-generating process, I consider two numbers of clusters $\{75, 150\}$ with a fixed number of observations in a cluster, $10$. For each sample size, I consider three cases of different strengths in endogeneity, $\phi \in \{0.113,0.203, 0.338\}$. The specification of taking up the treatment implies that U is negatively selected. In other words, lower $U$ units are more likely to take up the treatment. Without endogeneity, $\phi = 0$, both the treatment effects and the direct effects are positive. I choose three values of $\phi$ to ensure that the ITTTA and ITTNA equal the values in Table 4 in a large sample. Table 5 presents the simulation results.
\begin{table}[H]
 \centering
\caption{Large sample values of the ITTTA and ITTNA}
 \begin{tabular}{lccc}
\hline
 Strength of endogeneity & mild & medium & strong\\
 $\phi$ & $0.113$ & $0.203$ & $0.338$ \\
  \hline  
 ITTTA & $0.35$ & $0.3$ & $0.2$\\
 ITTNA & $0.25$ & $0.3$ & $0.4$\\
\hline
\end{tabular}
\end{table}

\begin{table}[H]
 \centering
 \caption{Estimator's finite sample performance with $75$ clusters}

 \begin{tabular}{lcccccccc}
\hline
& (1) & (2) & (3) & (4) & (5) & & \\
Parameer  & ITTTA & ITTNA & QT & QN &  IV &  & \\ \cline{2-6}
Statistic & \multicolumn{5}{c}{95\% coverage probability} & & \\
\hline
$\phi = 0.113$ & 0.952 & 0.972 & 0.976 & 0.980 & 0.092 &  &  \\
$\phi = 0.203$ & 0.946 & 0.962 & 0.974 & 0.978 & 0.048 &  &  \\
$\phi = 0.338$ & 0.946 & 0.970 & 0.958 & 0.932 & 0.016 &  &  \\
\hline
& (1) & (2) & (3) & (4) & (5) & (6) & (7) \\
Parameter & ITTTA & ITTNA & sum of QTs & sum of QNs &  IV & OLST & OLSN \\ \cline{2-8}
Statistic & \multicolumn{7}{c}{Mean absolute bias} \\
\hline
$\phi = 0.113$ & 0.076 & 0.073 & 0.937 & 0.813 & 0.262 & 0.341 & 0.091 \\
$\phi = 0.203$ & 0.075 & 0.075 & 0.832 & 0.908 & 0.307 & 0.492 & 0.118 \\
$\phi = 0.338$ & 0.076 & 0.074 & 0.761 & 1.422 & 0.405 & 0.793 & 0.199 \\ \cline{2-8}
Statistic & \multicolumn{7}{c}{Mean squared error} \\
\hline
$\phi = 0.113$ & 0.009 & 0.009 & 0.610 & 0.342 & 0.091 & 0.136 & 0.013 \\
$\phi = 0.203$ & 0.009 & 0.009 & 0.313 & 0.627 & 0.119 & 0.263 & 0.020 \\
$\phi = 0.338$ & 0.009 & 0.009 & 0.141 & 2.212 & 0.191 & 0.649 & 0.049 \\
\hline
 \end{tabular}
\begin{minipage}{325pt}
{\flushleft \fontsize{9pt}{9pt}\selectfont \smallskip NOTE: Every sample is clustered data: $75$ clusters with $10$ observations each. I computed 95\% coverage, mean absolute bias (ABias), and mean squared errors (MSE) using Monte Carlo simulations of $500$ simulated sample draws. QT and QN indicate the results of the quantile ITTTA and ITTNA, and the sum of QTs and QNs represents the sum of ABias and MSE over seven estimates over percentiles. OLST and OLSN indicate the ITTTA and ITTNA OLS estimation results, assuming that D is exogenous, IV indicates the ITTTA IV estimation results, and $T$ is an excluded instrument for $D$. I omit coverages for OLS estimates because none of the results give positive coverage rates due to their short confidence intervals.}
\end{minipage}
\end{table} 
\begin{table}[H]
 \centering
 \caption{Estimator's finite sample performance with $150$ clusters}

 \begin{tabular}{lccccccc}
\hline
& (1) & (2) & (3) & (4) & (5) & & \\
Parameer & ITTTA & ITTNA & QT & QN &  IV &  & \\ \cline{2-6}
Statistic & \multicolumn{5}{c}{95\% coverage probability} & & \\
\hline
$\phi = 0.113$ & 0.956 & 0.964 & 0.968 & 0.966 & 0.024 &  & \\
$\phi = 0.203$ & 0.964 & 0.952 & 0.968 & 0.964 & 0.012 &  & \\
$\phi = 0.338$ & 0.960 & 0.950 & 0.960 & 0.950 & 0.002 &  & \\
\hline
& (1) & (2) & (3) & (4) & (5) & (6) & (7) \\
Parameter & ITTTA & ITTNA & sum of QTs & sum of QNs &  IV & OLST & OLSN \\ \cline{2-8}
Statistic & \multicolumn{7}{c}{Mean absolute bias} \\
\hline
$\phi = 0.113$ & 0.056 & 0.051 & 0.582 & 0.524 & 0.245 & 0.351 & 0.07\\
$\phi = 0.203$ & 0.056 & 0.051 & 0.553 & 0.582 & 0.294 & 0.499 & 0.103\\
$\phi = 0.338$ & 0.056 & 0.051 & 0.537 & 0.964 & 0.394 & 0.800 & 0.197\\ \cline{2-8}
Statistic & \multicolumn{7}{c}{Mean squared error} \\ 
\hline
$\phi = 0.113$ & 0.005 & 0.004 & 0.160 & 0.076 & 0.074 & 0.133 & 0.008\\
$\phi = 0.203$ & 0.005 & 0.004 & 0.082 & 0.234 & 0.101 & 0.259 & 0.015\\
$\phi = 0.338$ & 0.005 & 0.004 & 0.064 & 1.397 & 0.171 & 0.650 & 0.044\\
\hline
 \end{tabular}
\begin{minipage}{325pt}
{\flushleft \fontsize{9pt}{9pt}\selectfont \smallskip NOTE: Every sample is clustered data: $150$ clusters with $10$ observations each. I computed 95\% coverage, mean absolute bias (ABias), and mean squared errors (MSE) using Monte Carlo simulations of $500$ simulated sample draws. QT and QN indicate the results of the quantile ITTTA and ITTNA, and the sum of QTs and QNs represents the sum of ABias and MSE over seven estimates over percentiles. OLST and OLSN indicate the ITTTA and ITTNA OLS estimation results, assuming that D is exogenous, IV indicates the ITTTA IV estimation results, and $T$ is an excluded instrument for $D$. I omit coverages for OLS estimates because none of the results give positive coverage rates due to their short confidence intervals.}
\end{minipage}
\end{table} 

For all the mean and quantile estimate results, close to $95$\% coverages are achieved. In the smaller sample of $750$ observations with $75$ clusters, some quantile estimates fail to achieve $95$\% coverage, and some appear to be slightly conservative. Nevertheless, within the $1,500$ observations with $150$ clusters, which is a little smaller than for the main application, all the mean and quantile estimates achieve $95$\% coverage and only have moderate over-coverage. On average, the proposed estimators' coverage becomes slightly worse when the number of clusters is smaller, and endogeneity is stronger. Nevertheless, endogeneity's strength is sufficiently extreme to make a case for \textit{strong} endogeneity. The IV and OLS estimates fail to provide any valid coverage and their estimate precision is unsatisfactory for both the mean absolute bias and the mean squared error measures.

\subsection{Finite sample performance with a random sample relative to the existing implementation}

\citet{AtheyImbens06} (see Section 5.1 of \citealp{AtheyImbens06}) propose a parametric implementation of their identified parameter, including a vector of the covariates. In their specification, the potential outcomes $Y(d,t)$ and proxy variable $Y_b$ are specified with structural functions, $h_{d,t}, h_b:([0,1],\mathcal{W}) \rightarrow \mathbb{R}$, such that
\[
 Y(d,t) = h_{d,t}(U,W), Y_b = h_b(U,W)
\]
with a common scalar latent variable $U$, resulting in $U \indep W$ given $D$. The key restriction is the additivity of the covariates. Specifically, they consider
\[
 h_{d,t}(u,w) = h_{d,t}(u) + w'\beta,  h_b(u,w) = h_b(u) + w'\beta.
\]
While their consideration accepts the nonparametric specification of $w'\beta$ or $\beta$ being different across $(d,t)$ and the proxy, the above specification is their main implementation. Regarding the recent \textit{did} R package, \cite{did_2020} implemented this specification to allow for the covariates.
The key restriction is the additivity, ensuring that the index of the covariates $W$ does not interact with the latent rank $U$ and that $W$ is independent of $U$. This can be violated in many specifications when the proxy variable is not an exact repeated measure of the outcome measures, and the latent rank is only conditionally similar in rank, not unconditionally through a correlation with $W$.

To demonstrate the differences, I consider the following modified data-generating process: most of the specifications are the same as in Appendix E.1, but there are a few differences. First, as cluster shock is absent, the data represent a random sample of observations. This modification was performed to accommodate \citeauthor{AtheyImbens06}' (\citeyear{AtheyImbens06}) original implementation because they did not incorporate a clustered sample. Second, I consider two cases. For the first, I use a dependent $W$, allowing $W_{c1}$ and $W_{c2}$ to be correlated with the underlying latent variables. Specifically, I set
\[
 W_{c1} = (0.5 U + 0.3 \xi_c)^2
\]
and
\[
 W_{c2} = U_0
\]
where $\xi_c \sim U[0,1]$, when the dependent $W$ is considered. For the second, I left-censor the outcome measures, while the proxy variable is continuously distributed. In particular, I let
\[
 Y^*(d,t) = Y(d,t) 1\left\{Y(d,t) > 0\right\}
\]
for every $(d,t) \in \{(1,1),(1,0),(0,0)\}$. As implementation of the \textit{did} package implementation does not exactly accept censored data, I added a small tiebreak noise to the outcome measures in the estimation. This standard procedure is often undertaken in related implementations. For example, \citeauthor{Rothe_2012}'s (\citeyear{Rothe_2012}) numerical implementation in his proposed estimator involved a similar tiebreak feature to deal with discreteness. Both modifications in the original data-generating process violate \citeauthor{AtheyImbens06}' (\citeyear{AtheyImbens06}) additive separable implementation, while my proposed estimator remains valid, as it allows for a flexible interaction between $U$ and $W$ as well as dependency in (unconditional) $U$ and $W$. I use the same values of $\phi \in \{0.113, 0.203, 0.338\}$ as in Appendix E.1. For reference, Table 7 shows their ITTTA and ITTNA values in large samples.

\begin{table}[H]
\centering
\caption{ITTTA and ITTNA values in large samples with censoring and dependent $W$}
 \begin{tabular}{lcccccc}
\hline
 Strength of endogeneity & mild & medium & strong & mild & medium & strong\\
 $\phi$ & $0.113$ & $0.203$ & $0.338$ & $0.113$ & $0.203$ & $0.338$ \\
\hline
 ITTTA & $0.47$ & $0.43$ & $0.36$ & $0.44$ & $0.41$ & $0.34$ \\
 ITTNA & $0.37$ & $0.41$ & $0.48$ & $0.39$ & $0.42$ & $0.48$\\
  \hline
 Censoring & Y & Y & Y & Y & Y & Y\\
 Dependent $W$ & & & & Y & Y & Y\\ 
  \hline
\end{tabular}
\end{table}

Tables 8 through 10 present the simulation results. Table 8 reports the coverages of the proposed estimators and the corresponding \citeauthor{AtheyImbens06}' (\citeyear{AtheyImbens06}) estimator. The proposed mean estimators are valid throughout the specifications with a slight indication of over-coverage. The quantile estimates also reveal satisfactory results, while the quantile estimates are more unstable relative to the mean estimates in terms of coverage. Nevertheless, \citeauthor{AtheyImbens06}' (\citeyear{AtheyImbens06}) ITTTA estimator fails to achieve coverage in all cases. In particular, the unconditional rank similarity is violated with the censoring results in their confidence intervals, which are very brief. Tables 9 and 10 report the estimators' mean absolute values and mean squared errors. \citeauthor{AtheyImbens06}' (\citeyear{AtheyImbens06}) mean estimates have larger mean absolute biases and mean squared errors when both conditions are violated. Nevertheless, they perform better, although at the cost of failed coverage, when $W$ is independent of the unconditional $U$. This pattern is more apparent in the quantile estimates; however, \citeauthor{AtheyImbens06}' (\citeyear{AtheyImbens06}) quantile estimates fail to cover the target parameter. Nonetheless, as their estimator works well for the data-generating process with the additive separable mean function, and the independence of $W$ and $U$, it would be better to use their estimator for the repeated outcome CiC, as originally proposed. Conversely, the results reveal that we should use my proposed estimator for the other possibilities, especially for the nonrepeated outcomes with discrete outcomes and a continuous proxy.
\begin{table}[H]
 \centering
  \caption{Coverage comparisons of estimators with a $1,000$ sample size} 

 \begin{tabular}{lccccccc}
\hline 
 &               &          & (1) & (2) & (3) & (4) & (5)\\
Parameter & & & ITTNA & ITTTA & QN & QT & ITTTA (AI)\\
Statistic & & & \multicolumn{5}{c}{95\% coverage probability} \\ \cline{4-8}
DGP & Dependent $W$ & Censoring &&&&&\\
\hline
$\phi = 0.113$ &   & Y & 0.964 & 0.960 & 0.944 & 0.924 & 0.770\\
$\phi = 0.203$ &   & Y & 0.970 & 0.958 & 0.970 & 0.928 & 0.830\\
$\phi = 0.338$ &   & Y & 0.968 & 0.970 & 0.966 & 0.984 & 0.878\\
$\phi = 0.113$ & Y & Y & 0.966 & 0.974 & 0.948 & 1.000 & 0.446\\
$\phi = 0.203$ & Y & Y & 0.960 & 0.972 & 0.952 & 0.992 & 0.386\\
$\phi = 0.338$ & Y & Y & 0.956 & 0.974 & 0.964 & 0.980 & 0.478\\
\hline
 \end{tabular}
\begin{minipage}{325pt}
{\flushleft \fontsize{9pt}{9pt}\selectfont \smallskip NOTE: Every sample is a random sample of $1000$ observations. I computed the $95$\% coverage rates using the Monte Carlo simulations of $500$ simulated sample draws. QT and QN indicate the quantile ITTTA and ITTNA results. The ITTTA (AI) represents \citeauthor{AtheyImbens06}' (\citeyear{AtheyImbens06}) ITTTA estimate. The OLS, IV, and \citeauthor{AtheyImbens06}' (\citeyear{AtheyImbens06}) QT estimate coverages were omitted because none of the results provide positive coverage rates due to their brief confidence intervals.}
\end{minipage}
\end{table} 

\begin{table}[H]
 \centering
  \caption{MSEs and Abias comparisons of estimators with a $1,000$ sample size} 
 \begin{tabular}{lccccccccc}
\hline
 &          &               & (1) & (2) & (3) & (4) & (5) & (6)\\
Parameter & & &  ITTTA & ITTNA & ITTTA (AI) &  IV & OLST & OLSN\\
Statistic & & &  \multicolumn{6}{c}{Mean absolute bias} \\ \cline{4-9}
DGP & Dependent $W$ & Censoring &&&&&&\\
\hline
$\phi = 0.113$ &  & Y & 0.054 & 0.051  & 0.051 & 0.371 & 0.460 & 0.072\\
$\phi = 0.203$ &  & Y & 0.054 & 0.052  & 0.045 & 0.410 & 0.581 & 0.095\\
$\phi = 0.338$ &  & Y & 0.052 & 0.053  & 0.038 & 0.475 & 0.839 & 0.174\\
$\phi = 0.113$ & Y & Y & 0.037 & 0.036  & 0.083 & 0.389 & 0.385 & 0.035\\
$\phi = 0.203$ & Y & Y & 0.035 & 0.037  & 0.084 & 0.415 & 0.417 & 0.036\\
$\phi = 0.338$ & Y & Y & 0.035 & 0.039  & 0.069 & 0.476 & 0.476 & 0.037\\
\hline
Statistic & & & \multicolumn{6}{c}{Mean squared error} \\ \cline{4-9}
DGP & Dependent $W$ & Censoring &&&&&&\\
\hline
$\phi = 0.113$ &  & Y & 0.004 & 0.004  & 0.003 & 0.153 & 0.223 & 0.008\\
$\phi = 0.203$ &  & Y & 0.004 & 0.004  & 0.003 & 0.184 & 0.349 & 0.013\\
$\phi = 0.338$ &  & Y & 0.004 & 0.004  & 0.002 & 0.243 & 0.715 & 0.036\\
$\phi = 0.113$ & Y & Y & 0.002 & 0.002  & 0.008 & 0.157 & 0.151 & 0.002\\
$\phi = 0.203$ & Y & Y & 0.002 & 0.002  & 0.008 & 0.178 & 0.177 & 0.002\\
$\phi = 0.338$ & Y & Y & 0.002 & 0.002  & 0.006 & 0.232 & 0.230 & 0.002\\
\hline
 \end{tabular}
\begin{minipage}{325pt}
{\flushleft \fontsize{9pt}{9pt}\selectfont \smallskip NOTE: Every sample is a random sample of $1000$ observations. I computed the mean absolute bias and mean squared errors using the Monte Carlo simulations of $500$ simulated sample draws. QT and QN indicate the quantile ITTTA and ITTNA results, and sum of QTs and QNs are the sum of ABias and MSE over seven estimates over percentiles. The ITTTA (AI) represents \citeauthor{AtheyImbens06}' (\citeyear{AtheyImbens06}) ITTTA estimate. The sum of QTs (AI) represents the sum of ABias and MSE over seven estimates over percentiles for \citeauthor{AtheyImbens06}' (\citeyear{AtheyImbens06}) quantile estimates. OLST and OLSN indicate the results of OLS estimation of the ITTTA and the ITTNA assuming that $D$ is exogenous, and IV indicates the results of IV estimation of the ITTTA assuming that $T$ is an excluded instrument for $D$.}
\end{minipage}
\end{table}

\begin{table}[H]
 \centering
  \caption{MSEs and Abias comparisons of estimators with a $1,000$ sample size, continued} 
 \begin{tabular}{lccccc}
\hline
 &              &          & (1) & (2) & (3)\\
Parameter & & & sum of QTs & sum of QNs & sum of QTs (AI)\\
Statistic & & & \multicolumn{3}{c}{Mean absolute bias} \\ \cline{4-6}
DGP & Dependent $W$ & Censoring &&\\
\hline
$\phi = 0.113$ &  & Y & 0.763 & 0.704 & 0.243\\
$\phi = 0.203$ &  & Y & 0.657 & 0.778 & 0.225\\
$\phi = 0.338$ &  & Y & 0.572 & 1.160  & 0.206\\
$\phi = 0.113$ & Y & Y & 0.503 & 0.512  & 0.411\\
$\phi = 0.203$ & Y & Y & 0.464 & 0.622  & 0.368\\
$\phi = 0.338$ & Y & Y & 0.456 & 1.085  & 0.344\\
\hline
Statistic & & & \multicolumn{3}{c}{Mean squared error} \\ \cline{4-6}
DGP & Dependent $W$ & Censoring &&\\
$\phi = 0.113$ &  & Y &  0.415 & 0.253  & 0.030\\
$\phi = 0.203$ &  & Y &  0.173 & 0.474  & 0.027\\
$\phi = 0.338$ &  & Y &  0.080 & 1.681  & 0.025\\
$\phi = 0.113$ & Y & Y &  0.176 & 0.104 & 0.08\\
$\phi = 0.203$ & Y & Y &  0.127 & 0.317 & 0.062\\
$\phi = 0.338$ & Y & Y &  0.158 & 1.620  & 0.052\\
\hline
 \end{tabular}
\begin{minipage}{325pt}
{\flushleft \fontsize{9pt}{9pt}\selectfont \smallskip NOTE: Every sample is a random sample of $1000$ observations. I computed the mean absolute bias and mean squared errors using the Monte Carlo simulations of $500$ simulated sample draws. QT and QN indicate the quantile ITTTA and ITTNA results, and sum of QTs and QNs are the sum of ABias and MSE over seven estimates over percentiles. The ITTTA (AI) represents \citeauthor{AtheyImbens06}' (\citeyear{AtheyImbens06}) ITTTA estimate. The sum of QTs (AI) represents the sum of ABias and MSE over seven estimates over percentiles for \citeauthor{AtheyImbens06}' (\citeyear{AtheyImbens06}) quantile estimates. OLST and OLSN indicate the results of OLS estimation of the ITTTA and the ITTNA assuming that $D$ is exogenous, and IV indicates the results of IV estimation of the ITTTA assuming that $T$ is an excluded instrument for $D$.}
%Notes: Every sample is a random sample of $1,000$ observations. Mean absolute bias, and mean squared errors are computed using the Monte Carlo simulations of $500$ simulated draws of samples. QT and QN indicate the results of the quantile ITTTA and quantile ITTNA, and sum of QTs and QNs are the sum of ABias and MSE over seven estimates over percentiles. OLST and OLSN indicate the results of OLS estimation of the ITTTA and the ITTNA assuming that $D$ is exogenous, and IV indicates the results of IV estimation of the ITTTA assuming that $T$ is an excluded instrument for $D$. ITTTA (AI) and sum of QTs (AI) represent the results for the estimators based on \cite{AtheyImbens06} estimators.}
\end{minipage}
\end{table}

\section{Application to a two-sided noncompliance experiment}

In the main text, I apply the proposed strategy to the one-sided noncompliance microcredit experiment. As discussed in Section 6, the justification is highly complex for two-sided noncompliance. Nevertheless, I demonstrate that the analogue procedure is valid for two-sided noncompliance cases when taking up the treatment in the control group is \textit{stable}. Hence, I offer an additional application to the proposed strategy for two-sided noncompliance experiments. On that note, \cite{Gertler_Martinez_Rubio-Codina_2012} studied the effect of a cash transfer program in Mexico, called the Oportunidades program. They focused on its long-term impact on living standards, possibly through productive asset investments. A form of social assistance, the Oportunidades program was a large-scale cash transfer experiment based on the well-known Progresa program. Specifically, the program distributed cash payments to randomly selected poor households, conditional on passing eligibility requirements such as undertaking health checkups and sending children to school. \cite{Gertler_Martinez_Rubio-Codina_2012} found that households receiving 18 months of cash transfers invested in productive assets such as livestock relative to control group households, which had to wait 18 months to receive payments. They also showed that the impact of the cash transfer could sustain even four years later. They argued that investing in productive assets is the main pathway to achieve this long-lasting effect by revealing the absence of other channels such as other income, formal/informal credit, and health.

In this section, I directly estimate the ITTTA and ITTNA to see if the ownership of productive livestock is indeed the main channel. If this ownership is the main channel, then the ITTTA should have a large magnitude in the same direction as the ITT, and the ITTNA should be small in absolute value. $T$ represents the cash transfer's randomized assignment and $D$ is the treatment indicator (productive livestock ownership). The cash transfer program encourages households to invest more, possibly allowing them to own productive livestock; however, nothing bars them from owning livestock prior to the assignment. Therefore, it is a two-sided noncompliance experiment. Furthermore, $Y$ is the value in pesos for 2003 consumption, that is, four years after the program, and as my strategy requires a proxy variable; in this case, the same consumption measure in the baseline period is a good proxy given that $Y$ is often continuously distributed with no mass point.

However, to the best of my knowledge, Progresa program did not collect a detailed consumption measure in the 1997 baseline survey (ENCASEH). Alternatively, I use poverty index measures to determine eligibility for the Progresa and Oportunidades programs. As this index measure is a predicted value from household characteristics for the poverty indicator, I assume that the index is a proxy for the unobserved baseline consumption, but there can be a rank-shifting change in the consumption level for the control units over time. After merging the 1997 baseline survey, the final sample contains $9,837$ observations with nonmissing $Y_b$ in $506$ communities. The sample size of my analysis is further reduced to $9,536$ for those who have nonmissing observations of $D$. Table 11 describes the take-up rates of $D$ by assignment status. 
\begin{table}[H]
\centering
\caption{Relative frequencies by assignment and take-up status}
 \begin{tabular}{lcc}
         & $T = 0$ & $T = 1$\\
  \cline{2-3}
  $D = 0$ & 28.02\% & 43.44\%\\
  $D = 1$ & 9.73\% & 18.81\% \\
  \hline
 \end{tabular}
\end{table}
As Table 11 shows, the take-up rate of ownership is roughly $20$\% to $30$\%. The compliance rate is small, at about $4$\%, which is statistically significant at the $5$\% level in the regression analysis. Table 12 summarizes the main results.
\begin{table}[H]
 \centering
  \caption{Estimates with the baseline proxy $Y_b$} 

 \label{tbl:mainResults}
 \begin{tabular}{lccccc}%c}
 % Method: & \multicolumn{2}{c}{Rank Similar} & 2SLS \\%& Unconditional \\
 %\hline
 \hline
 & (1) & (2) & (3) & (4) & (5) \\
Model & RS & RS & RS & OLS & OLS\\
Parameter & ITTNA & ITTTA & LATE & ITT & ITT\\ \cline{2-6}%& ATT \\%& ITT\\
Outcome & \multicolumn{5}{c}{level of consumption}\\
\hline
  Assignment ($T$) & $3.49$ & $23.88^{***}$ & $127.70$ & $10.14^{*}$ & $9.97^{**}$ \\
  by subgroups of ($D$)    & ($5.91$) & ($6.52$) & ($388.99$) & ($5.88$) & ($4.49$)\\
  \hline
  %exp(TE) - 1 & $0.33$ & $1.79$ \\ %& $8.92$ \\ %& 0.5792 \\
  %control mean & 8.7079 & 8.7079 & 8.7079\\
  %control median & 6,960 & 9,796 & 9.8829\\
Full covariates & & & & & Y \\
  Obs       & $9,536$ & $9,536$ & $9,536$  & $9,837$ & $9,837$ \\ %& $2453$  \\%& 2453 \\
  %\hline
  %Selection bias: & 0.034 & 0.110 &  0.133 \\
  %Selection bias: & -0.058 & -0.056 &  0.031 \\
  %predicted & & & \\
  \hline
 \end{tabular}
 \vspace{0.3cm}
\begin{minipage}{325pt}
{\flushleft
 \fontsize{9pt}{9pt}\selectfont \smallskip NOTE: Standard
 errors reported in parenthesis are generated from 300 bootstrap draws clustered in
 community levels for (1)-(3). For (4), I use the limited set of covariates as for (1)-(3) in the subsample with nonmissing observations of $Y_b$, and I use the full set of covariates for (5). *,**,*** indicates statistical significance of
 10\%,5\% and 1\% sizes respectively. Logit link is used for (1)-(3).
 } %2SLS in (3) uses the same set of covariates as in (1) and (2).}
\end{minipage}
\end{table} 

Column (1) presents the ITTNA estimate, representing the assignment's direct effect estimate for those who do not own productive animals. Column (2) shows the ITTTA estimate, indicating the combined impact of the assignment's direct effect and that of owning productive animals for the sample that does. Column (3) represents the LATE-analogue estimate, discussed in Section 6 and Appendix C.2, which is imprecise because of the low compliance rate. 

For the rank similarity estimates in columns (1) through (3), I use the subset of covariates employed in the original regression analysis, reported in column (4), as the ITT estimate based on OLS. The original specification includes the head of household's age, spousal age, age squared and baseline education dummies, ethnicity of household head, as indicated by language, baseline household size, baseline dummies for household demographics, baseline assets, and baseline community characteristics. By contrast, I limit the covariates to the baseline amounts of assets, including number of animals and extent of total operational land holdings in hectares, head of household's age, and limited baseline dummies of household characteristics such as agricultural homeownership, big farm ownership, and gender and education level of household head. The reason for the limitation is that most of the covariates used in this application are binary indicators. As such, incorporating high-dimensional discrete variables is an important topic for future research. Column (5) is the ITT estimate based on OLS using the full set of covariates for a reference under the same matched subsample with the baseline survey as used for column (4).

Table 12 results that the ITTNA is small and statistically insignificant, while the ITTTA is large and statistically significant. Mean consumption in the control group in this sample is $188.1$. Therefore, the ITTNA is only $1$\% to $2$\% of the mean. Conversely, the ITTTA reveals that the assignment increases long-term consumption by more than $12$\% for those who own productive assets. Therefore, the impact of the assignment is primarily due to the investment channel, while the other paths are unlikely to drive this huge impact.

Considering the identification of LATE, it is arguable that the homogeneity assumption may be reasonable in this application. The following formula,
\[
 E[DE(1)|T = 1, D(1) = 1] = E[Y(1,1) - Y(0,1)|T = 1, D(1) =1],
\]
represents the direct effect of the cash transfer holding ownership, which remains unchanged for those who would have owned an animal if they had received the transfer. Here, $Y(0,1)$ is the counterfactual outcome that ensures when they own an animal without receiving the cash transfer. Thus, $Y(1,1)$ may be associated with the smaller amount of remaining cash, attributed to purchasing the animal; however, $Y(0,1)$ also corresponds to the cash amount reduced by the expenditure of owning the animal without receiving the transfer. Therefore, the net increase in amount from the cash transfer for takers is the same as the remaining cash transfer value for nontakers. For the latter group, I compare the outcomes of not owning the animal by considering the full cash transfer amount spent on other channels:
\[
 E[DE(0)|T = 1, D(1) = 0] = E[Y(1,0) - Y(0,0)|T = 1, D(1) =0].
\]
Thus, both direct effects capture the effect of additional cash amounts with the same net cash values. If the impact of the additional cash amount is additive in the long-term consumption level, then the homogeneity assumption may be justified. Nevertheless, as the complier rate is small, it is not feasible to obtain a precise LATE estimate, even though the homogeneity assumption may be reasonable.

%The underlying implicit assumption is that the takers do not add the number of productive animals in response to the cash transfer. This is a strong assumption that is likely to be violated. Nevertheless, the violation to the homogeneity assumption appears as the direct effect for the takers may be smaller than the direct effect for the nontakers. This is because the net amount 

%The homogeneity assumption states that if there is other channels that the cash transfer affecting the long term consumption, then the effects from the other channel should be the same with or without ownership of the productive animal, holding the ownership status unchanged. One way to rationalize the homogeneity restriction is that the transfered cash amount 

%If the additional amount spent on the productive animal is held constant, then the remaining values of the transfered cash should be the same. 

%\newpage

%\subsection{Map of Al Amana Branches}

% \begin{figure}[h!]
%   \centering
%  \includegraphics[width=1\textwidth]{MapAlAmana.png}
% \end{figure} 
% \tiny Source: Al Amana and Cr\'epon, Devoto, Duflo and Parient\'e (2015)

\bibliographystyle{chicago}

\bibliography{ref20170301}
\end{document}